\newtheoremstyle{theorem}{12pt}{12pt}{\rm}{}{\sffamily}{ }{ }{}
\theoremstyle{theorem}
\newtheoremstyle{lemma}{12pt}{12pt}{\rm}{}{\sffamily}{ }{ }{}
\theoremstyle{lemma}
\newtheoremstyle{lemma*}{12pt}{12pt}{\rm}{}{\sffamily}{ }{ }{}
\theoremstyle{lemma*}
\newtheorem*{lemma*}{\sc Lemma}
\newcommand\dwidetilde[1]{\ThisStyle{%
  \setbox0=\hbox{$\SavedStyle#1$}%
  \stackengine{+1.5\LMpt}{$\SavedStyle#1\,\,$}{%
    \stackengine{\dimexpr-4.5\LMpt+.3pt}{%
    \stretchto{\scaleto{\mkern2.5mu\thicksim}{.35\wd0}}{.25\ht0}%
    }{%
    \stretchto{\scaleto{\mkern2.5mu\thicksim}{.35\wd0}}{.25\ht0}%
    }{O}{c}{F}{T}{S}
  }{O}{c}{F}{T}{S}%
\!\!}}
\newcommand\dbwidetilde[1]{\ThisStyle{%
  \setbox0=\hbox{$\SavedStyle#1$}%
  \stackengine{+1.5\LMpt}{$\SavedStyle#1\,\,$}{%
    \stackengine{\dimexpr-4.5\LMpt+.3pt}{%
    \stretchto{\scaleto{\mkern-1.5mu\thicksim}{.35\wd0}}{.25\ht0}%
    }{%
    \stretchto{\scaleto{\mkern-1.5mu\thicksim}{.35\wd0}}{.25\ht0}%
    }{O}{c}{F}{T}{S}
  }{O}{c}{F}{T}{S}%
\!\!}}
\newcommand\dswidetilde[1]{\ThisStyle{%
  \setbox0=\hbox{$\SavedStyle#1$}%
  \stackengine{+1.5\LMpt}{$\SavedStyle#1\,\,$}{%
    \stackengine{\dimexpr-3.5\LMpt+.3pt}{%
    \stretchto{\scaleto{\mkern2.5mu\thicksim}{.35\wd0}}{.25\ht0}%
    }{%
    \stretchto{\scaleto{\mkern2.5mu\thicksim}{.35\wd0}}{.25\ht0}%
    }{O}{c}{F}{T}{S}
  }{O}{c}{F}{T}{S}%
\!\!}}
\newcommand\ttilde[1]{\ThisStyle{%
  \setbox0=\hbox{$\SavedStyle#1$}%
  \stackengine{+1.5\LMpt}{$\SavedStyle#1$}{%
    \stretchto{\scaleto{\SavedStyle\mkern2.5mu\thicksim}{.35\wd0}}{.25\ht0}%
  }{O}{c}{F}{T}{S}%
}}
\newcommand\bwidetilde[1]{\ThisStyle{%
  \setbox0=\hbox{$\SavedStyle#1$}%
  \stackengine{+1.5\LMpt}{$\SavedStyle#1$}{%
    \stretchto{\scaleto{\SavedStyle\mkern-1.5mu\thicksim}{.35\wd0}}{.25\ht0}%
  }{O}{c}{F}{T}{S}%
}}
\renewcommand{\widetilde}{\ttilde}
\begin{document}

%%% Title page
%%% -------------------------------------------------------------------------------
%%% Use command\maketitle to produce the title page.
\maketitle

%%% Main text
%%% ------------------------------------------

\allowdisplaybreaks
\clearpage

\section{Introduction}
\label{intro}
Canonical correlation analysis (CCA) is a classic method commonly used in statistics for studying complex multivariate data. CCA was first introduced by \citet{cca0} as a tool for finding relationships between two sets of variables. It remains relevant in many domains including, but not limited to, genetics (see, for example, \citealp{genetics1}, \citealp{genetics2} and \citealp{genetics3}) and neuroscience (see, for example, \citealp{neuroscience1} and \citealp{neuroscience2}). In many applications the number of available observations is significantly smaller than the number of features under consideration, so some form of regularization is essential. Numerous regularized CCA extensions have been proposed (see, for example, \citealp{sparse1}, \citealp{sparse2} and \citealp{sparse3}). The most popular existing approach, called Regularized CCA (RCCA), imposes an $\ell_2$-penalty on the canonical coefficients (see, for example, \citealp{rcca1} and \citealp{rcca2}). Like any other standard regularization method based on the $\ell_2$ penalty, it has the property of treating all the coefficients equally and shrinking them toward zero. Although RCCA is well suited to data with general structure, in some applications the structure of the data can play an important role when investigating the association between variables. In this paper we develop several regularized extensions of CCA. These extensions were originally motivated by brain imaging applications, but the scope of applications can readily be extended to other fields.

The paper is organized as follows. In Section \ref{rcca} we introduce the necessary background for both CCA and RCCA methods. 
Next, we propose several approaches to regularization that account for the underlying structure of the data. In particular, in Section \ref{prcca} we introduce partially regularized canonical correlation analysis (PRCCA) that imposes an $\ell_2$ penalty only on a subset of the features. 
Although both RCCA and PRCCA problems have simple explicit solutions that can be computed via  singular value decomposition, they require us to work in terms of sample covariance matrices. This can be infeasible when the number of features is very large. In Sections~\ref{rcca:kernel} and \ref{prcca:kernel} we cover the ``kernel'' trick that allows to escape excessive computations while conducting CCA with regularization.
In Section \ref{grcca} we introduce group regularized CCA (GRCCA), a novel method that exploits group structure in the data. 

Since all the methods under consideration have similar structure, they can be considered as special cases of CCA with a general regularization penalty discussed in Sections \ref{genrcca}. All the technical details and proofs for these methods are covered in the Supplement.

We illustrate the proposed methods on our motivating study example involving functional brain imaging data in Section \ref{connectome} as well as on a small simulation example in Section \ref{simulation}. We conclude with a discussion, including some ideas for future work.

\section{Canonical correlation analysis with regularization}
\label{rcca}

\subsection{Canonical correlation analysis}
\label{cca}
Consider two random vectors $X\in\mathbb{R}^p$ and $Y\in\mathbb{R}^q$.
The goal of \textit{canonical correlation analysis (CCA)} is to find a linear combination of $X$ variables and a linear combination of $Y$ variables with the maximum possible correlation. Typically we find a sequence of such linear combinations.
Namely, for $i = 1,\ldots,\min(p,q)$ define a sequence of pairs of random variables $(U_i, V_i)$ as follows (see, for example, \citealp{cca2})
\begin{enumerate}
    \item Random variables $U_i$ and $V_i$ are linear combinations of $X$ and $Y$, respectively,~i.e.
    \begin{center}
        $U_i = \alpha_i^\top X~$  and $~V_i = \beta_i^\top Y.$
    \end{center}
    \item Coefficient vectors $\alpha_i\in\mathbb{R}^{p}$ and $\beta_i\in\mathbb{R}^{q}$ maximize the correlation
    $$\rho(\alpha_i, \beta_i) = \mathrm{cor}(\alpha_i^\top X, \beta_i^\top Y).$$
    \item Pair $(U_i, V_i)$ is uncorrelated with previous pairs, i.e
    \begin{center}
        $\mathrm{cor}(U_i, U_j)=\mathrm{cor}(V_i, V_j) = 0$ for $j<i.$
    \end{center}
\end{enumerate}
The pair $(U_i, V_i)$ is called \textit{$i$-th pair of canonical variates}; the corresponding optimal correlation value $\rho_i = ~\rho(\alpha_i, \beta_i)$ is called \textit{$i$-th canonical correlation}.

Note that correlation coefficient $\rho(\alpha, \beta)$ can be rewritten as 
\begin{align}
    \rho_{CCA}(\alpha, \beta) = \frac{\alpha^\top\Sigma_{XY}\beta}{\sqrt{\alpha^\top\Sigma_{XX}\alpha}~\sqrt{\beta^\top\Sigma_{YY}\beta}},
\label{cca:rho}
\end{align}
where $\Sigma_{XX}$, $\Sigma_{YY}$ and $\Sigma_{XY}$ refer to the covariance matrices $\mathrm{cov}(X)$, $\mathrm{cov}(Y)$ and $\mathrm{cov}(X,Y),$ respectively. It is easy to restate maximization of $\rho_{CCA}(\alpha, \beta)$ w.r.t. $\alpha$ and $\beta$ in terms of a constrained optimization problem  
\begin{align}
    \text{maximize} ~\alpha^\top\Sigma_{XY}\beta~ &\text{w.r.t.} ~\alpha\in\mathbb{R}^p~ \text{and} ~\beta\in\mathbb{R}^q \nonumber\\
    \text{subject to} ~\alpha^\top\Sigma_{XX}&\alpha = 1~ \text{and} ~\beta^\top\Sigma_{YY}\beta = 1.
\label{cca:op}
\end{align}
Thus, finding the $i$-th canonical pair is equivalent to solving the problem:  
\begin{align*}
    \text{maximize} ~\alpha_i^\top\Sigma_{XY}\beta_i~ &\text{w.r.t.} ~\alpha_i\in\mathbb{R}^p~ \text{and} ~\beta_i\in\mathbb{R}^q \\
    \text{subject to} ~\alpha_i^\top\Sigma_{XX}&\alpha_i = 1~ \text{and} ~\beta_i^\top\Sigma_{YY}\beta_i = 1\\
    ~\alpha_i^\top\Sigma_{XX}\alpha_j = 0~ \text{and}& ~\beta_i^\top\Sigma_{YY}\beta_j = 0~ \text{for} ~j<i.
\end{align*}
One can show that the canonical variates can be found via a singular value decomposition of the matrix $\Sigma_{XX}^{-\frac12}\Sigma_{XY}\Sigma_{YY}^{-\frac12}$, and that the canonical correlations coincide with the singular values of this matrix (see, for example, \citealp{cca1}). 

\subsection{Dealing with high dimensions}
\label{rcca:motivation}
In practice, we replace covariance matrices $\Sigma_{XX}$, $\Sigma_{YY}$ and $\Sigma_{XY}$ by the sample covariance matrices $\widehat\Sigma_{XX}$, $\widehat\Sigma_{YY}$ and $\widehat\Sigma_{XY}$. Specifically, suppose $\mathbf{X}\in\mathbb{R}^{n\times p}$ and $\mathbf{Y}\in\mathbb{R}^{n\times q}$ refer to matrices of $n$ observations for random vectors $X$ and $Y,$ respectively. Without loss of generality, assume that the columns of $\mathbf{X}$ and $\mathbf{Y}$ are centered (mean 0), then  
\begin{center}
$\widehat\Sigma_{XX} = \frac{1}{n}\mathbf{X}^\top \mathbf{X}$, $~\widehat\Sigma_{YY} = \frac{1}{n}\mathbf{Y}^\top\mathbf{Y}~$ and $~\widehat\Sigma_{XY}=\frac{1}{n}\mathbf{X}^\top\mathbf{Y}.$
\end{center}

If the number of observations $n$ is smaller than $p$ and/or $q$, the corresponding sample covariance matrices are singular and the inverses $\widehat\Sigma_{XX}^{-\frac12}$ and/or $\widehat\Sigma_{YY}^{-\frac12}$ do not exist.
\textit{Regularized canonical correlation analysis (RCCA)} resolves this problem by adding diagonal matrices to the sample covariance matrices of $X$ and~$Y$ (see, for example, \cite{rcca2} and \cite{rcca3}): 
\begin{align}
\widehat\Sigma_{XX}(\lambda_1) = \widehat\Sigma_{XX} + \lambda_1 I_p~ \text{ and } 
~\widehat\Sigma_{YY}(\lambda_2) = \widehat\Sigma_{YY} + \lambda_2 I_q.
\label{rcca:cov}
\end{align}
Here $I_p$ refers to the $p\times p$ identity matrix.
The \textit{modified correlation coefficient} that is maximized while seeking pairs of canonical variates is, hence,
\begin{align}
    \rho_{RCCA}(\alpha, \beta; \lambda_1, \lambda_2) = \frac{\alpha^\top\widehat\Sigma_{XY}\beta}{\sqrt{\alpha^\top(\widehat\Sigma_{XX}+\lambda_1 I)\alpha}~\sqrt{\beta^\top(\widehat\Sigma_{YY}+\lambda_2I)\beta}}.
\label{rcca:rho}
\end{align}
By analogy with CCA, it is easy to show that the RCCA variates can be found via the singular value decomposition of the matrix $(\widehat\Sigma_{XX} + \lambda_1 I)^{-\frac12}\widehat\Sigma_{XY}(\widehat\Sigma_{YY} + \lambda_2 I)^{-\frac12}$ and that RCCA modified correlations are equal to the singular values of this matrix.

\subsection{Shrinkage property}
\label{rcca:shrinkage}

Similar to ridge regression, regularization shrinks the CCA coefficients $\alpha$ and $\beta$ towards zero, where the \textit{penalty parameters} $\lambda_1$ and $\lambda_2$ control the strength of the shrinkage of $\alpha$ and $\beta$, respectively. This can be supported by the following reasoning.  
As in the case of CCA, maximization of the modified correlation $\rho_{RCCA}(\alpha, \beta;\lambda_1, \lambda_2)$ w.r.t. $\alpha$ and $\beta$ can be restated as a constrained optimization problem
\begin{center}
    maximize $~\alpha^\top\widehat\Sigma_{XY}\beta~$ w.r.t. $~\alpha\in\mathbb{R}^p~$ and $~\beta\in\mathbb{R}^q~$ \\
    subject to $~\alpha^\top(\widehat\Sigma_{XX}+\lambda_1 I)\alpha = 1~$ and $~\beta^\top(\widehat\Sigma_{YY}+\lambda_2I)\beta = 1.$
\end{center}
Note that the constraints can be rewritten as
\begin{center}
$\alpha^\top\widehat\Sigma_{XX}\alpha +\lambda_1 \|\alpha\|^2 = 1~$ and 
$~\beta^\top\widehat\Sigma_{XX}\beta +\lambda_1 \|\beta\|^2 = 1$
\end{center}
where $\|\cdot\|$ refers to the vector Euclidean norm. Finally, one can interpret $\lambda_1$ and $\lambda_2$ as Lagrangian dual variables for constraints $\|\alpha\|^2\leq t_1$ and $\|\beta\|^2\leq t_2$ which brings us to the optimization problem 
\begin{align}
    &\text{maximize} ~\alpha^\top\widehat\Sigma_{XY}\beta~ \text{w.r.t.} ~\alpha\in\mathbb{R}^p~ \text{and} ~\beta\in\mathbb{R}^q \nonumber\\
    \text{subject to} &~\alpha^\top\widehat\Sigma_{XX}\alpha = 1, ~\|\alpha\|^2 \leq t_1~ \text{and} ~\beta^\top\widehat\Sigma_{YY}\beta = 1, ~\|\beta\|^2\leq t_2.
\label{rcca:op}
\end{align}
One can show that for some appropriately chosen $t_1$ and $t_2$ this problem is equivalent to maximizing objective (\ref{rcca:rho}). Moreover, increasing $\lambda_1$ and $\lambda_2$ is equivalent to decreasing thresholds $t_1$ and $t_2$ which leads us to the shrinkage property. Finally, increasing $\lambda_1$ and $\lambda_2$ increases the denominator of (\ref{rcca:rho}) thereby shrinking the modified correlation coefficient to zero as well.

\subsection{RCCA kernel trick}
\label{rcca:kernel}
In some applications we need to deal with a very high-dimensional feature space. For instance, analyzing functional magnetic resonance imaging (fMRI) data, where the dimension refers to the number of brain regions (or voxels), the number of features can reach hundreds of thousands. If one of $p$ and $q$ is very large it can be problematic to store matrices $\widehat\Sigma_{XX}$ and $\widehat\Sigma_{YY}$. In this section we illustrate a simple trick based on the invariance of the RCCA problem under orthogonal transformations, that allows one to handle high-dimensional data when computing the RCCA solution. The idea to reduce a high dimensional CCA problem to a low-dimensional one via the kernel trick was previously introduced by \cite{kcca2} and \cite{kcca1}. Below we demonstrate the practical application of this idea to the RCCA problem that we subsequently use in the implementation. 

For simplicity, we assume that regularization is imposed on the $X$ part only, i.e. we assume $q<n$ and set $\lambda_2 = 0$. The same reasoning applies if we regularize $Y$ part as well. First we use the fact that any $n\times p$ matrix $\mathbf{X}$ with $p\gg n$ can be decomposed (e.g. via SVD) into a product $\mathbf{X} = \mathbf{R}V^\top$, where $\mathbf{R}\in \mathbb{R}^{n\times n}$ is a square matrix, and $V\in \mathbb{R}^{p\times n}$ is a matrix with orthonormal columns, i.e. $V^\top V = I$.

\begin{lemma*}[RCCA kernel trick]
The original RCCA problem stated for $\mathbf{X}$ and $\mathbf{Y}$ can be reduced to solving the RCCA problem for $\mathbf{R}$ and $\mathbf{Y}$. The resulting canonical correlations and variates for these two problems coincide. The canonical coefficients for the original problem can be recovered via the linear transformation $\alpha_X = V\alpha_R.$
\end{lemma*}

See Supplement Section \ref{supp:rcca:kernel} for the proof. Note that for $p\gg n$ the above trick allows us to avoid manipulating large $p\times p$ and $p\times q$ covariance matrices $\widehat\Sigma_{XX}$ and $\widehat\Sigma_{XY}$ and to operate in terms of smaller $n\times n$ and $n\times q$ matrices $\widehat\Sigma_{RR}$ and $\widehat\Sigma_{RY}$. Of course, exactly the same trick can be applied to the $Y$ part if $q\gg n.$

\subsection{Hyperparameter tuning}
\label{cv}
Before proceeding to our first example, let us discuss how one can tune the hyperparemeters. There are two hyperparameters for RCCA, i.e. $\lambda_1$ and $\lambda_2$. Let us denote the vector of hyperparameters by $\theta$. The values for these hyperparameters can be chosen via cross-validation. Below we present the outline for hold-out cross-validation; however, it can be naturally extended to the case of  $k$-fold cross-validation.

First we split all available observations into train $(\mathbf{X}_{train}, \mathbf{Y}_{train})$ and validation $(\mathbf{X}_{val}, \mathbf{Y}_{val})$ sets. We use the former set to fit the model and compute canonical coefficients $\alpha(\theta)$ and $\beta(\theta)$. Further, we use the latter set to estimate the model performance, i.e. we calculate 
$\rho_{val}(\theta) = \operatorname{cor}\left(\mathbf X_{val}\alpha(\theta), \mathbf Y_{val}\beta(\theta)\right).$ 
Note that here we utilize simple correlation instead of the modified correlation as a measure of performance. We pick the values of the hyperparameters maximizing the validation correlation, i.e.
$\theta^* = \operatorname{argmax}_\theta\left(\rho_{val}(\theta)\right),$ which can be done by means of grid search.

\section{Example. Human Connectome data study}
\label{connectome}

In this section we present an application of regularized CCA to data from a neuroscience study: the Human Connectome Project for Disordered Emotional States (HCP-DES) \cite{tozziHumanConnectomeProject2020}. One aim of HCP-DES is to link the function of macroscopic human brain circuits to self-reports of emotional well-being using magnetic resonance imaging. Here, we focused on brain activations during a Gambling task designed to probe the brain circuits underlying reward (described in detail in \cite{barchFunctionHumanConnectome2013, tozziHumanConnectomeProject2020}). We linked this neuroimaging data with self-reports assessing various aspects of reward-related behaviors (Behavioral Approach System/Behavioral Inhibition Scale (BIS/BAS), \cite{carverBehavioralInhibitionBehavioral1994}), depression symptoms (Mood and Anxiety Symptom Questionnaire (MASQ), \cite{wardenaarDevelopmentValidation30item2010a}) and positive as well as negative affective states (Positive and Negative Affect Schedule (PANAS), \cite{watsonDevelopmentValidationBrief1988a}). We selected participants who had complete self-report and imaging data as well as no quality control issues, for a total of 153 participants (94 females, 59 males, mean age 25.91, sd 4.85). For details on the preprocessing and subject-level modeling used to derive brain activations in response to the Gambling task, see Section \ref{supp:rde:preprocessing} of the Supplement. We used for our analysis the activations for the monetary reward compared to monetary loss during the task. For each subject, the activation at each greyordinate (grey matter coordinate) in the brain was extracted, yielding a matrix $\mathbf{X}$ of $n = 153$ rows (subjects) and $p = 90~368$ columns (greyordinates).
The self-report data consisted of 9 variables: drive, fun seeking, reward responsiveness (from the BAS), total behavioral inhibition (from the BIS), distress, anhedonia, anxious arousal (from the MASQ), positive and negative affective states (from the PANAS). These were entered in a matrix $\mathbf{Y}$ of $n=153$ rows (subjects) and $q=9$ columns.

\begin{figure}[h!]
  \begin{subfigure}[b]{0.5\textwidth}
    \includegraphics[width=\textwidth]{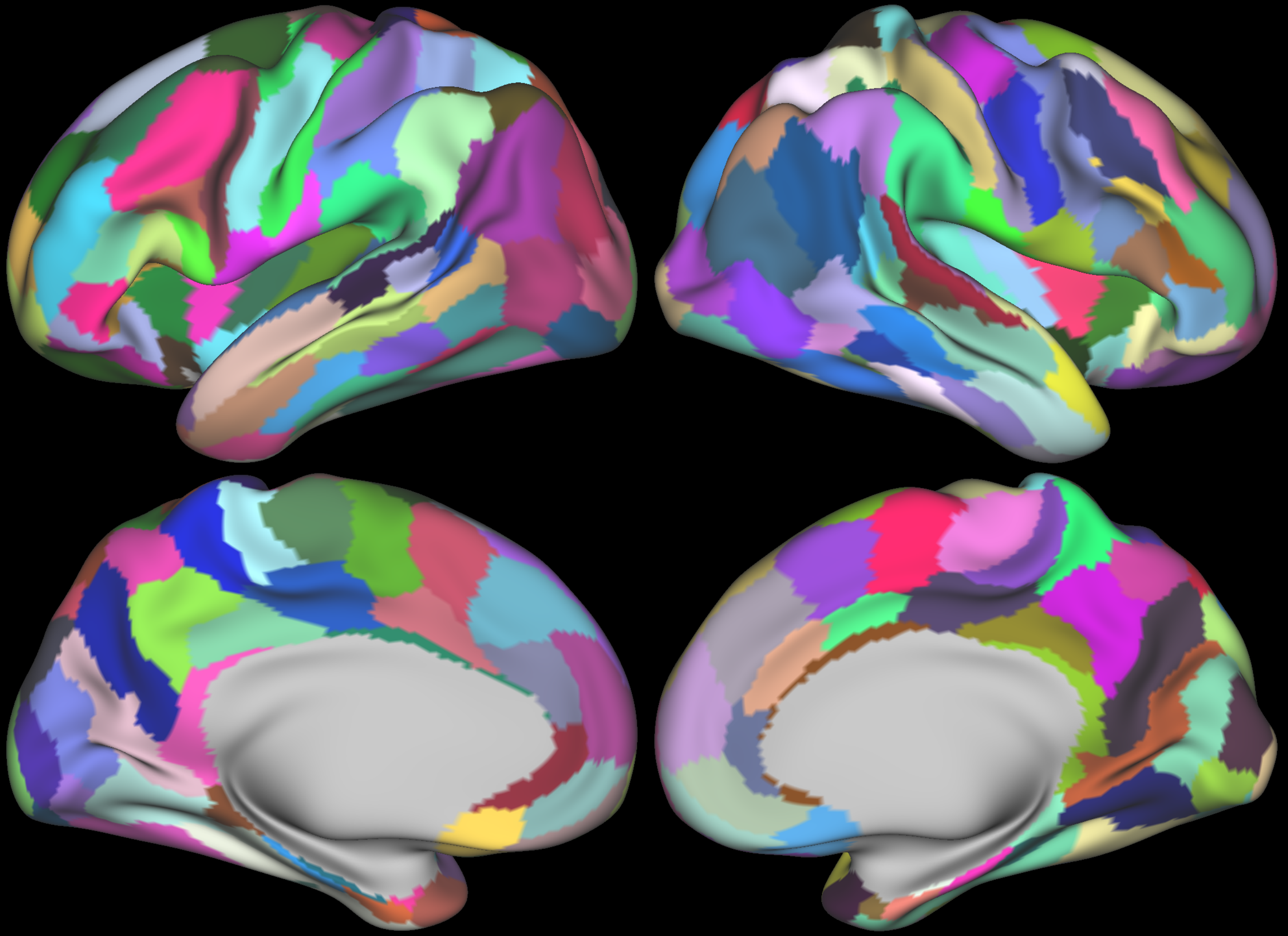}
    \subcaption{Cortical parcellation.}
     \end{subfigure}
  \hfill
  \begin{subfigure}[b]{0.5\textwidth}
    \includegraphics[width=\textwidth]{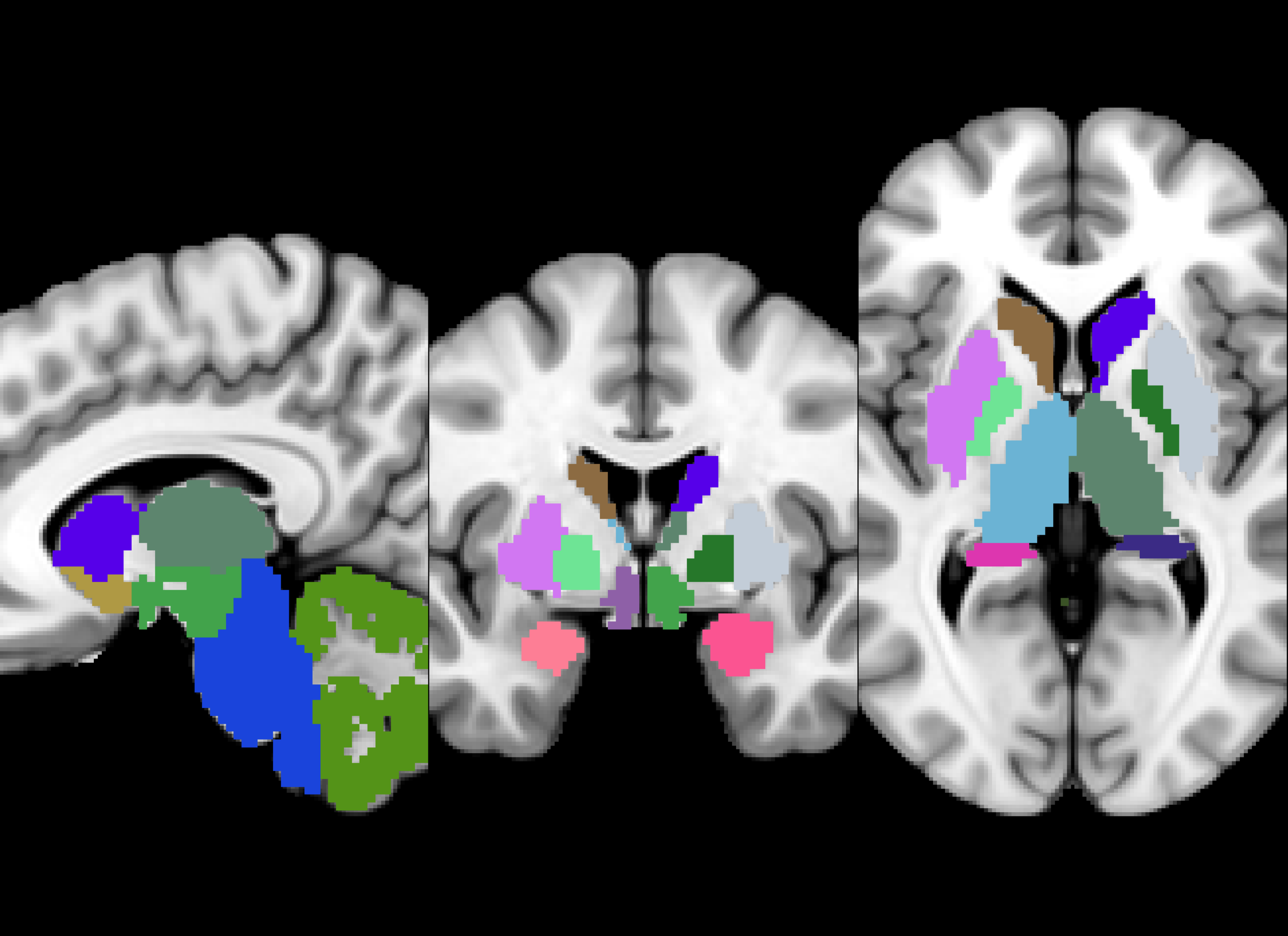}
    \subcaption{Subcortical parcellation.}
    \end{subfigure}
  \caption{229 brain regions: 210 cortical regions of the Brainnetome atlas and 19 subcortical regions of the Desikan-Killiany Atlas.}
  \label{fig:rde:parcellation}
\end{figure}

To test our structured methods, the $90~368$ greyordinates were grouped into 229 regions depending on their functional and anatomical properties, corresponding to the 210 regions of the Brainnetome atlas \citep{fanHumanBrainnetomeAtlas2016a} with the addition of the 19 subcortical regions of the Desikan-Killiany Atlas \citep{desikanAutomatedLabelingSystem2006a}. The resulting brain regions are presented in Figure~\ref{fig:rde:parcellation}. 

\subsection{RCCA results}
\label{connectome:rcca}
We start with a relatively simple model averaging activation inside each brain region and using the averaged values as features. Thus  $\mathbf{X}$ becomes of dimension $153\times 229$. 
To remove the effect of sex on the resulting correlation we adjusted both activation and behavioral data for the binary sex variable by means of simple linear regression (mean adjustment). Since the $\mathbf{Y}$ matrix is relatively small we imposed no penalty on $\mathbf{Y}$ ($\lambda_2 = 0$). To pick the hyperparameter $\lambda_1$ we ran 10-fold cross-validation on the adjusted data pairs with the penalty factor varying over the grid $\lambda_1 = 10^{-3}, 10^{-2}, \ldots, 10^4, 10^5$.

The resulting cross-validation curves represent the unpenalized correlation between canonical variates (computed on the 9 folds of train set and one fold of validation set) averaged across 10 folds (see Figure \ref{fig:rcca:mean:scores}). Note that although larger $\lambda_1$ shrinks the modified correlation $\rho_{RCCA}$ toward zero, the unpenalized correlation is not guaranteed to be monotonically decreasing in $\lambda_1$. According to the plot, the highest score is achieved for $\lambda_1 = 0.1$ with the corresponding test correlation equal to~$0.148$. 
Using the kernel trick we now run RCCA for the original activation data ($90~368$ features adjusted for the sex effect). According to Figure \ref{fig:rcca:scores}, the maximum score is equal to $0.11$ ($\lambda_1 = 0.001$).

\begin{figure}[h!]
  \centering
  \begin{subfigure}[b]{\textwidth}
    \includegraphics[width = 0.97\textwidth]{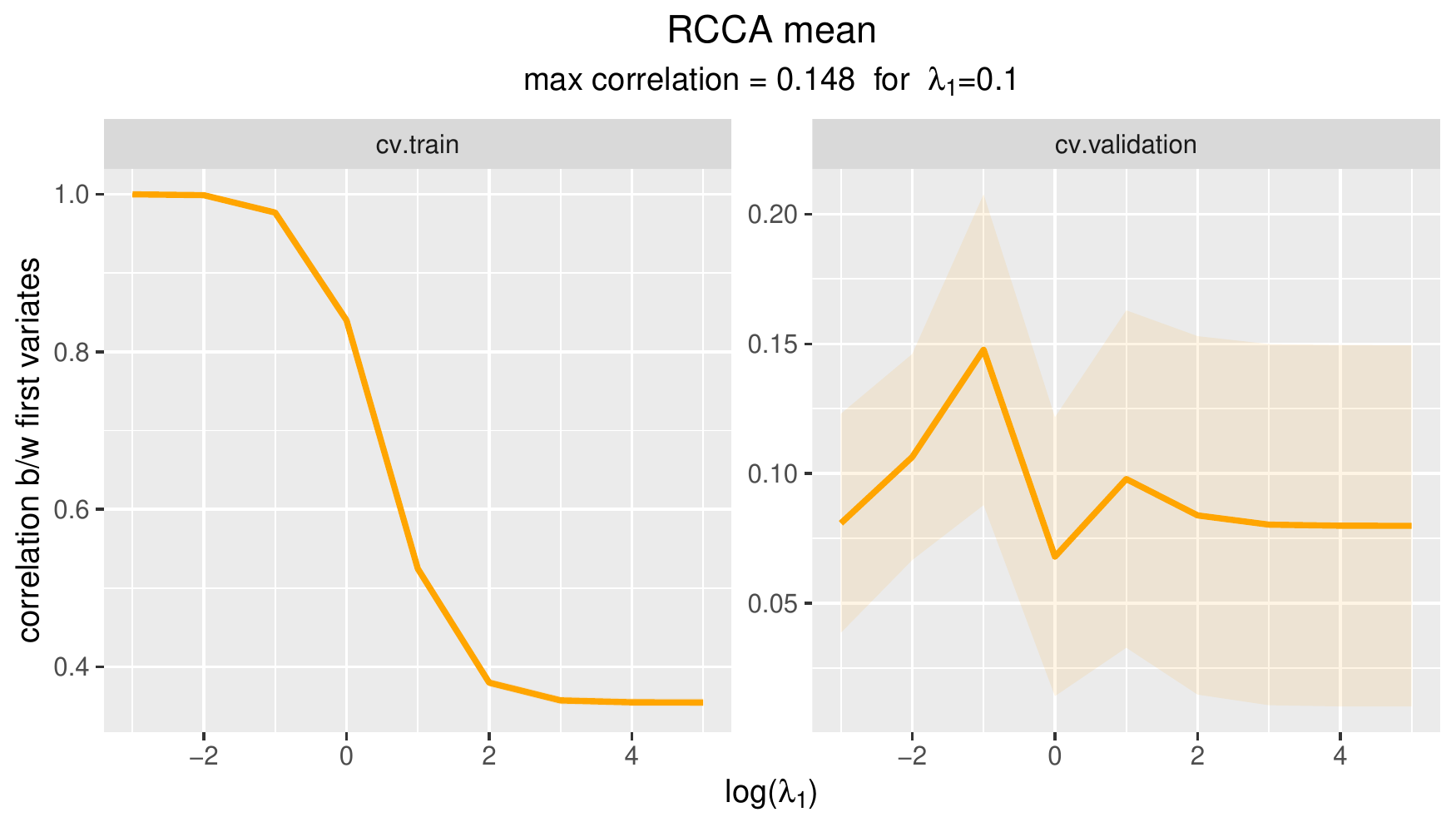}
    \subcaption{Averaged data}
      \label{fig:rcca:mean:scores}
    \end{subfigure}
    \vfill
     \begin{subfigure}[b]{\textwidth}
    \includegraphics[width = 0.97\textwidth]{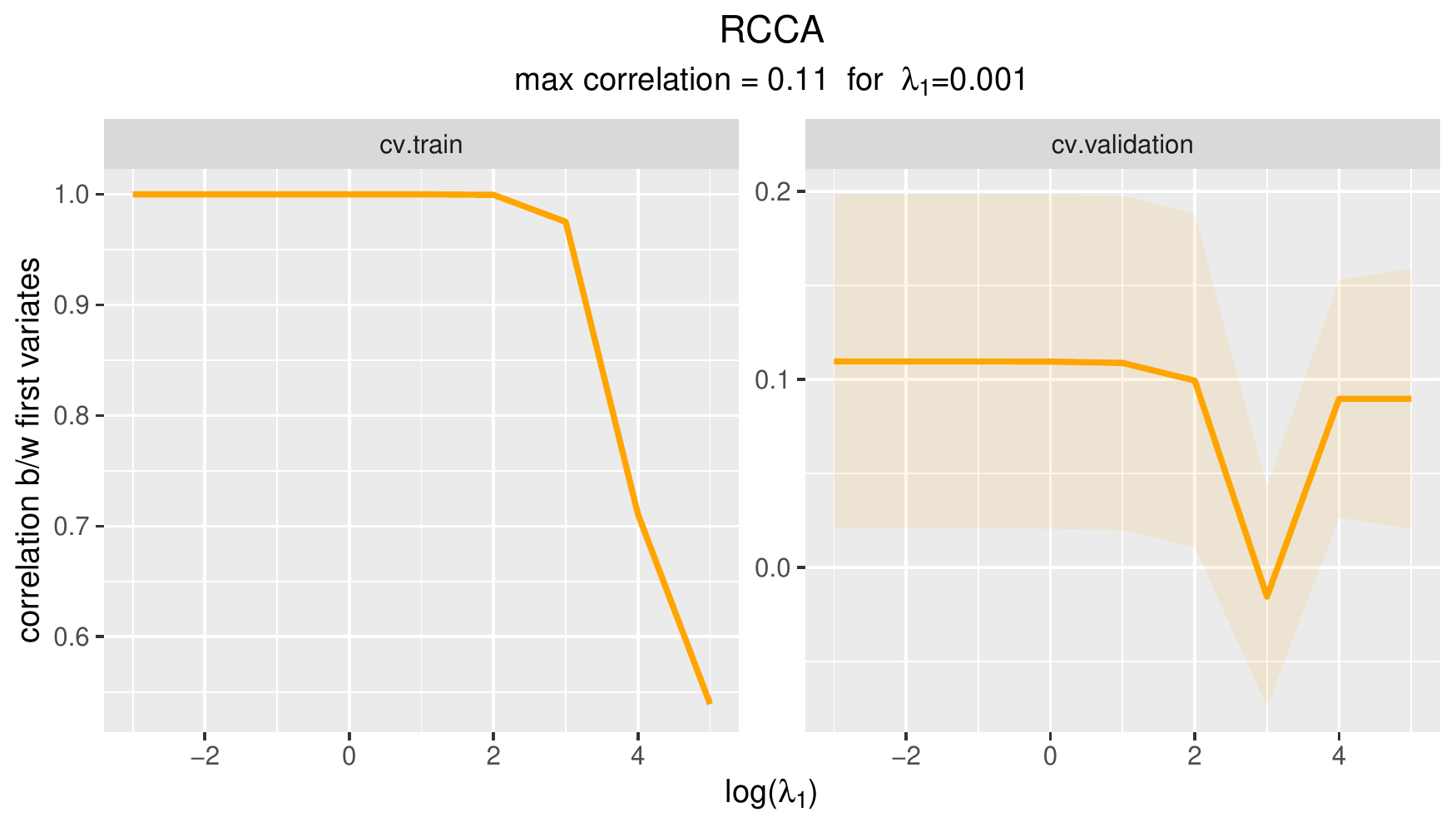}
    \subcaption{Original data}
      \label{fig:rcca:scores}
     \end{subfigure}
 \caption{The cross-validation curves obtained via RCCA with 10-fold cross-validation for the Human Connectome dataset. Left panel: (unpenalized) correlation between train canonical variates. Right panel: (unpenalized) correlation between validation canonical variates.}
\end{figure}

In general, to compare two models and check that the cross-validation score does not reflect spurious findings we should validate the performance of the models on an independent test set. However, the small sample size of the data (only 153 observations) makes the test correlation estimates unreliable. 
We use nested cross-validation (NCV) to overcome the problem of overfitting to the dataset that we use for tuning. 
Specifically, we split the data in 11 folds. Each of the 11 folds is given an opportunity to be used as an independent test set, while all other 10 folds folds are used to tune the hyperparameters via 10-fold cross-validation. Therefore, we report 11 cross-validation scores along with 11 test scores, and we present the average for both as well as 1SE confidence intervals (see Figure \ref{fig:rcca:ncv}). 

According to the NCV scores, the cross-validation procedure for the full RCCA model discovered significant correlation (independent test set correlation averaged across 11 folds is 0.105). However, the correlation value obtained by the smaller mean RCCA model was way too optimistic (average test score is 0.044 with wide error bands).

It is worth noting the computational speed of the proposed method. Unlike the \texttt{rcc()} function from the popular \texttt{CCA} R package (see \citet{rcca3}), which is not able to handle such a large number of features ($\approx 90$K for the $X$ side), our implementation of RCCA with the kernel trick completes the calculations in 20 seconds.

\subsection{Interpretetability of canonical coefficients}
\label{rcca:interpretability}

In this section we visualize the RCCA coefficients $\alpha$ corresponding to the hyperparameters chosen by cross-validation. Recall that the original $X$ features represent the brain activation detected at each brain greyordinate, so we can map the resulting RCCA coefficients back to the brain surface (see Figure \ref{fig:rde:rcca:coef}). There is an apparent trade-off between the interpretability and flexibility of the model. Namely, although full data RCCA is more flexibile, there is quite a lot of variation in the resulting canonical coefficients. This makes the corresponding brain image harder to interpret. On the other hand, the reduced model allows us to identify the brain regions that have the highest impact on the resulting correlation. However, it loses in terms of flexibility (and, potentially, performance). In what follows, we aim to develop the model that links these two extremes enabling us to control the interpretability vs. flexibility trade-off. 

\begin{figure}[h!]
  \begin{subfigure}[b]{0.5\textwidth}
    \includegraphics[width=\textwidth]{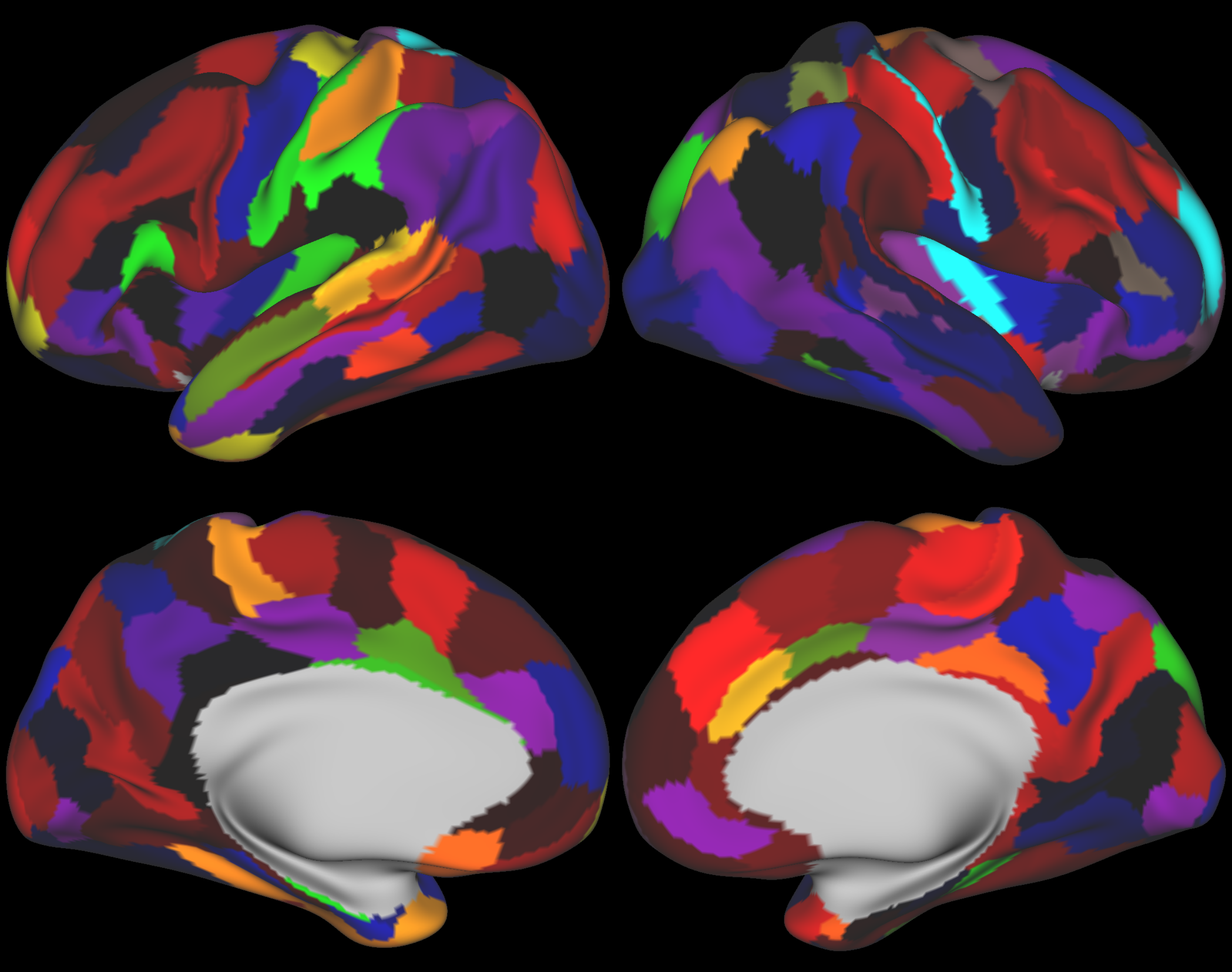}
    \subcaption{Cortical coefficients, averaged data.}
    \label{fig:rde:rcca:coef:a}
     \end{subfigure}
  \hfill
  \begin{subfigure}[b]{0.5\textwidth}
    \includegraphics[width=\textwidth]{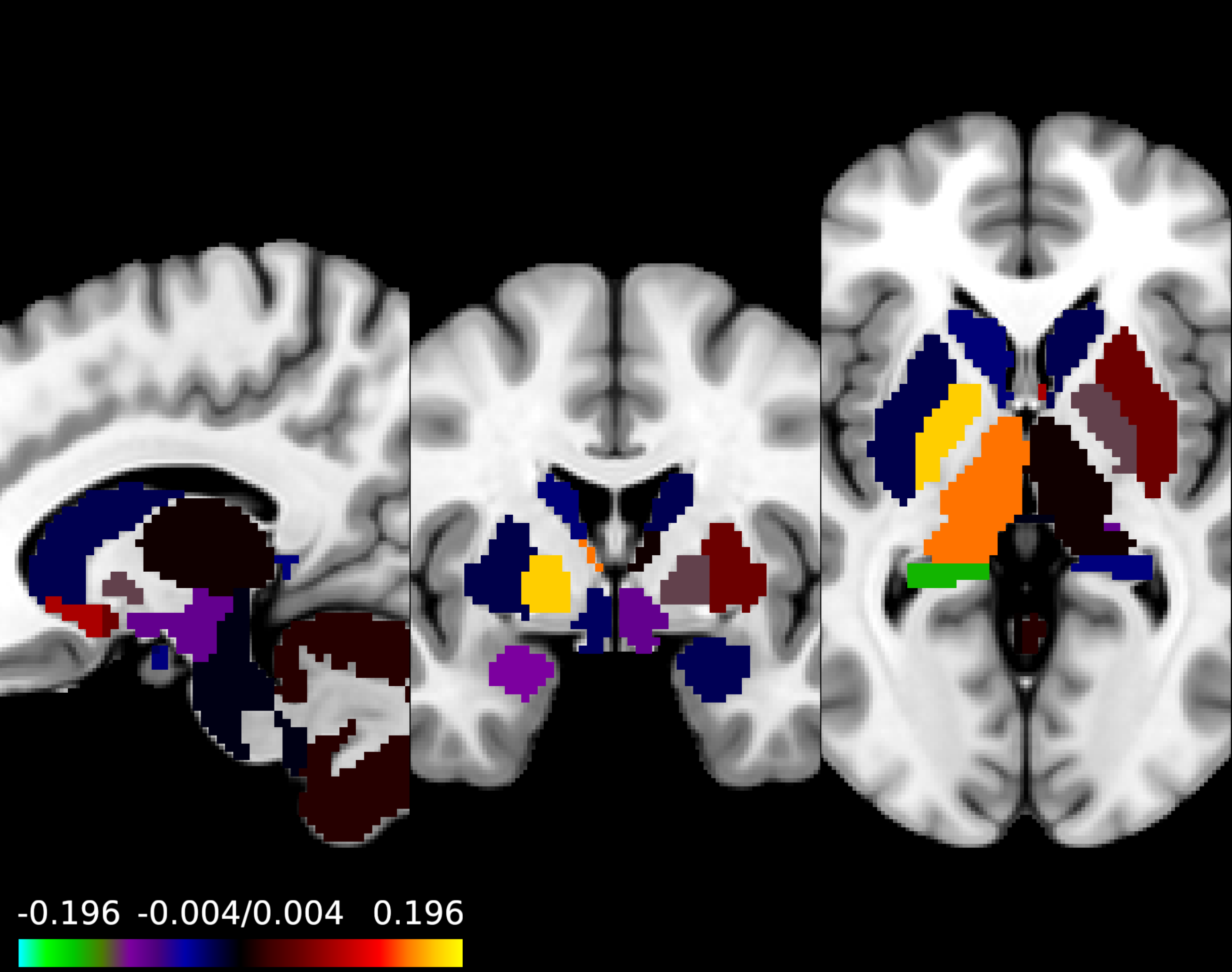}
    \subcaption{Subcortical coefficients, averaged data.}
      \label{fig:rde:rcca:coef:b}
    \end{subfigure}
    \vfill
  \begin{subfigure}[b]{0.5\textwidth}
    \includegraphics[width=\textwidth]{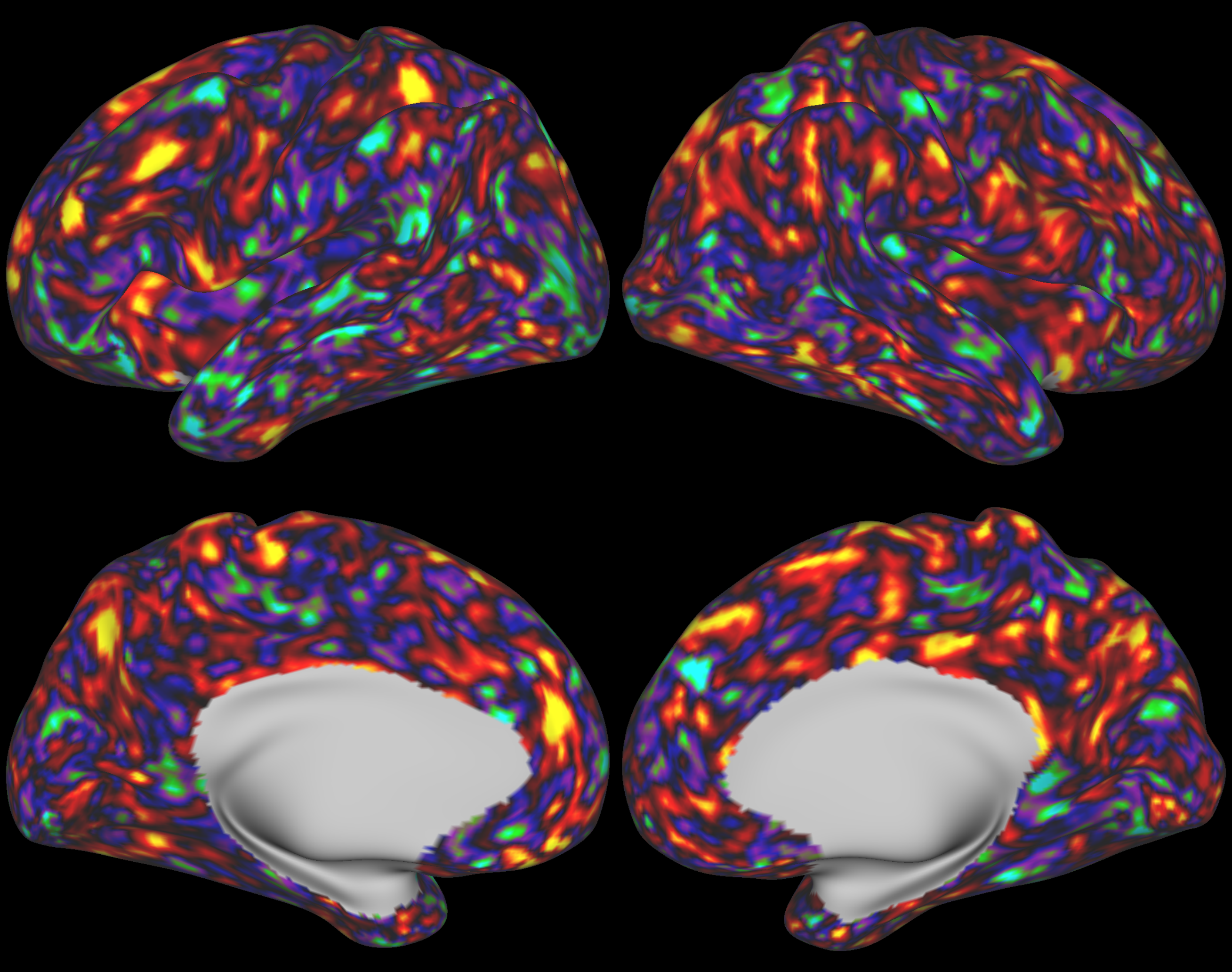}
    \subcaption{Cortical coefficients, original data.}
    \label{fig:rde:rcca:coef:c}
     \end{subfigure}
  \hfill
  \begin{subfigure}[b]{0.5\textwidth}
    \includegraphics[width=\textwidth]{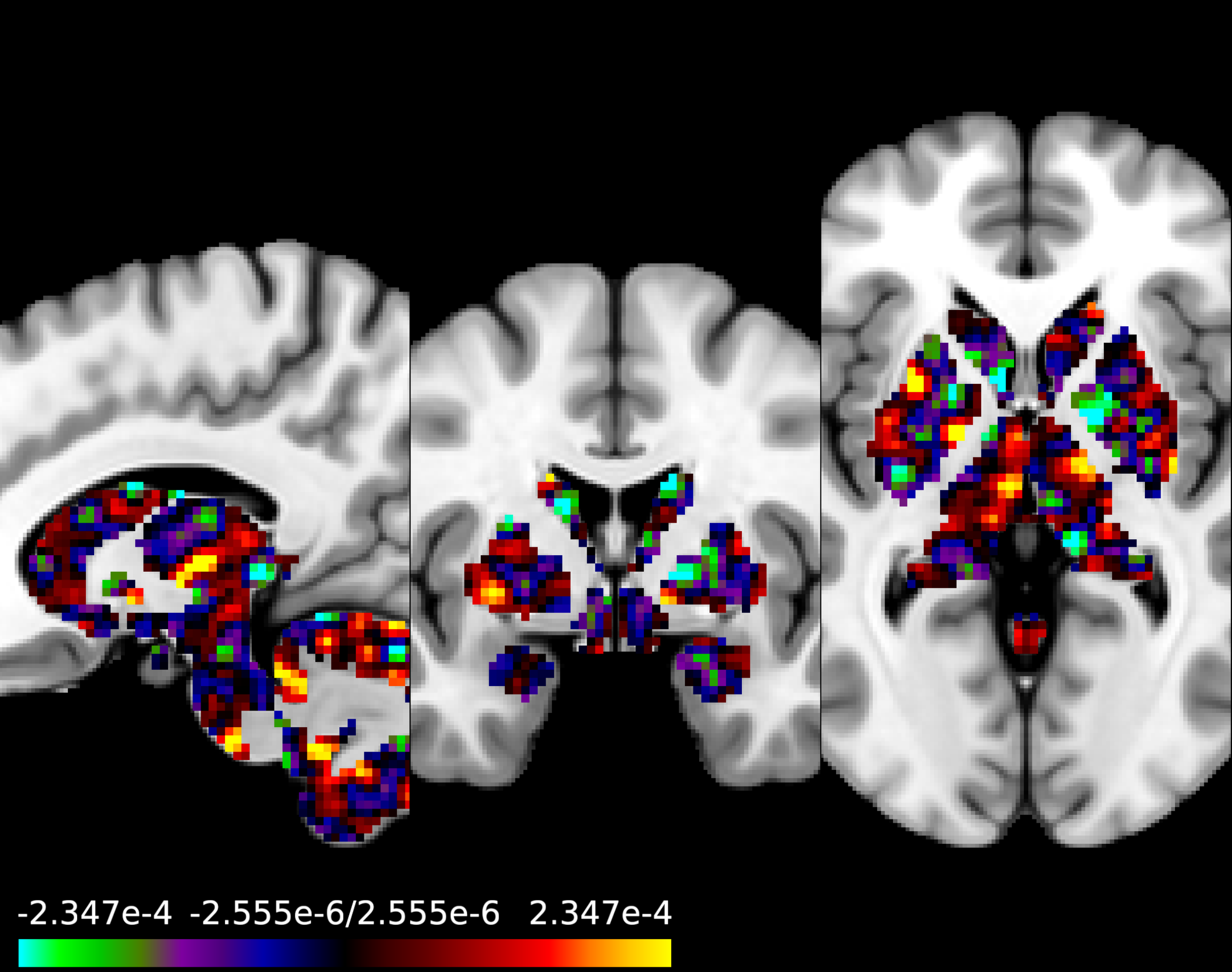}
    \subcaption{Subcortical coefficients, original data}
    \label{fig:rde:rcca:coef:d}
    \end{subfigure}
    
\caption{RCCA coefficients computed for the averaged data ($229$ features) with hyperparameter $\lambda_1 = 0.1$ chosen by cross-validation and for the original data ($90~368$ features) with hyperparameter $\lambda_1 = 10$ chosen by cross-validation.}
  \label{fig:rde:rcca:coef}
\end{figure}

\section{CCA with partial regularization}
\label{prcca}
\subsection{Penalizing a subset of canonical coefficients}
\label{prcca:motivation}
Suppose you are interested in the influence of a specific brain region on the resulting CCA correlation, however, you do not want to completely eliminate the remaining brain regions from the data. Recall that the inequality constraints in the RCCA optimization problem (\ref{rcca:op}) control the deviation of all canonical coefficients from zero. PRCCA is a modification of RCCA that allows one to shrink only a subset of the CCA coefficients leaving the complement unpenalized. The proposed PRCCA method is a key building block for our final group RCCA approach and also admits some independent interesting applications.  

Suppose that both $\alpha$ and $\beta$ are split into two parts
\begin{center}
$\alpha = \begin{psmallmatrix}
\alpha_1\\
\alpha_2
\end{psmallmatrix}$ 
where $\alpha_1\in\mathbb{R}^{p_1}$ and $\alpha_2\in\mathbb{R}^{p_2}$ with $p_1+p_2 = p,$ \\
$\beta = \begin{psmallmatrix}
\beta_1\\
\beta_2
\end{psmallmatrix}$ 
where $\beta_1\in\mathbb{R}^{q_1}$ and $\beta_2\in\mathbb{R}^{q_2}$ with $q_1+q_2 = q.$ 
\end{center}
Replacing the constraints $\|\alpha\|^2\leq t_1$ and $\|\beta\|^2\leq t_2$ in the optimization problem (\ref{rcca:op}) by $\|\alpha_1\|^2\leq t_1$ and $\|\beta_1\|^2\leq t_2$, respectively, we get the PRCCA optimization problem 
\begin{align}
    &\text{maximize} ~\alpha^\top\widehat\Sigma_{XY}\beta~ \text{w.r.t.} ~\alpha\in\mathbb{R}^p~ \text{and} ~\beta\in\mathbb{R}^q \nonumber\\
    \text{subject to} &~\alpha^\top\widehat\Sigma_{XX}\alpha = 1, ~\|\alpha_1\|^2 \leq t_1~ \text{and} ~\beta^\top\widehat\Sigma_{YY}\beta = 1, ~\|\beta_1\|^2\leq t_2.
\label{prcca:op}
\end{align}
Re-expressing the constraints for $\alpha$ and $\beta$ in terms of dual variables we get
\begin{align*}
\alpha^\top\widehat\Sigma_{XX}\alpha + \lambda_1\|\alpha_1\|^2 =\alpha^\top\left(\widehat\Sigma_{XX} + \lambda_1
\begin{psmallmatrix}
I_{p_1} & 0\\
0 & 0
\end{psmallmatrix} 
\right)\alpha\\
\beta^\top\widehat\Sigma_{YY}\beta + \lambda_2\|\beta_1\|^2 = \beta^\top\left(\widehat\Sigma_{YY} + \lambda_2
\begin{psmallmatrix}
I_{q_1} & 0\\
0 & 0
\end{psmallmatrix} 
\right)\beta.
\end{align*}
This leads us to the PRCCA modification of the correlation coefficient as follows
\begin{align}
    \rho_{PRCCA}(\alpha, \beta;\lambda_1, \lambda_2) = \frac{\alpha^\top\widehat\Sigma_{XY}\beta}{\sqrt{\alpha^\top(\widehat\Sigma_{XX}+\lambda_1 \begin{psmallmatrix}
I_{p_1} & 0\\
0 & 0
\end{psmallmatrix})\alpha}~\sqrt{\beta^\top(\widehat\Sigma_{YY}+\lambda_2 \begin{psmallmatrix}
I_{q_1} & 0\\
0 & 0
\end{psmallmatrix})\beta}}.
\label{prcca:rho}
\end{align}
As usual, PRCCA variates and coefficients obtained by maximizing (\ref{prcca:rho}) can be found by means of SVD of the matrix $\left(\widehat\Sigma_{XX} + \lambda_1 \begin{psmallmatrix}
I_{p_1} & 0\\
0 & 0
\end{psmallmatrix} \right)^{-\frac12}\widehat\Sigma_{XY}\left(\widehat\Sigma_{YY} + \lambda_2 \begin{psmallmatrix}
I_{q_1} & 0\\
0 & 0
\end{psmallmatrix} \right)^{-\frac12}.$

\subsection{PRCCA kernel trick}
\label{prcca:kernel}
In this section we extend the kernel trick to the PRCCA problem set up. Note that, because $I$ was replaced by block matrix
$\begin{psmallmatrix}
I & 0\\
0 & 0
\end{psmallmatrix}$ in the denominator of the modified correlation coefficient (\ref{rcca:rho}), the PRCCA problem does not preserve the property of invariance under orthogonal transformations. Thus, the mathematics used in Section \ref{rcca:kernel} does not work anymore. 
There are two main ingredients for the PRCCA kernel trick. First, if the feature matrix consists of two orthogonal blocks, then the kernel trick can be applied to each block independently. Second, there exists a non-orthogonal transformation of the feature matrix making the two blocks orthogonal to each other, while resulting in an equivalent PRCCA problem.

We again assume for simplicity that the regularization penalty is imposed on the $X$ part only.  
Suppose $X = \begin{psmallmatrix}
X_1\\
X_2
\end{psmallmatrix},$
where random vectors $X_1\in\mathbb{R}^{p_1}$ and $X_2\in\mathbb{R}^{p_2}$ correspond to penalized coefficients part $\alpha_1$ and unpenalized part $\alpha_2$,  respectively. Let $\mathbf{X}_1\in\mathbb{R}^{n \times p_1}$ and $\mathbf{X}_2\in\mathbb{R}^{n \times p_2}$ represent the corresponding matrices of observations,  so $\mathbf{X} =~(\mathbf{X}_1,\mathbf{X}_2)$. To make the PRCCA solution identifiable we require $\mathbf{X}_2$ to be tall and full rank, i.e. $p_2 < n$ and $\operatorname{rank}(\mathbf{X}_2) = p_2.$ We can also assume that $p_1\gg n.$

As the first step we find a linear transformation $A\in\mathbb{R}^{p \times p}$ such that matrix $\bwidetilde{\mathbf{X}} = \mathbf{X} A = (\bwidetilde{\mathbf{X}}_1, \bwidetilde{\mathbf{X}}_2)\in\mathbb{R}^{n \times p}$
has orthogonal blocks $\bwidetilde{\mathbf{X}}_1\in\mathbb{R}^{n \times p_1}$ and $\bwidetilde{\mathbf{X}}_2\in\mathbb{R}^{n \times p_2}$, i.e. $\bwidetilde{\mathbf{X}}_1^\top\bwidetilde{\mathbf{X}}_2~=~0,$ and that preserves the second block, i.e. $\bwidetilde{\mathbf{X}}_2 = \mathbf{X}_2$. This can be easily done by linear regression (see Supplement for details). This linear transformation maps the original PRCCA problem to an equivalent one (equivariant in terms of the coefficients and invariant in terms of the objective). 

Note that the above transformation forces the sample covariance matrix $\widehat\Sigma_{\widetilde{X}\widetilde{X}}$ to be block-diagonal with blocks $\widehat\Sigma_{\widetilde{X}_1\widetilde{X}_1}$ and $\widehat\Sigma_{\widetilde{X}_2\widetilde{X}_2}$, which enables us to apply the kernel trick to the first and the second block of $\bwidetilde{\mathbf{X}}$ independently. Specifically, consider the decomposition of $\bwidetilde{\mathbf{X}}_1$, as in the RCCA kernel lemma, i.e. $\bwidetilde{\mathbf{X}}_1 = \mathbf{R}_1V_1^\top$ where $V_1\in\mathbb{R}^{p_1\times n}$ is a matrix with orthonormal columns and $\mathbf{R}_1\in\mathbb{R}^{n\times n}$ is some square matrix. Then the following lemma holds.  

\begin{lemma*}[PRCCA kernel trick]
The original PRCCA problem stated for $\mathbf{X}$ and $\mathbf{Y}$ can be reduced to solving the smaller PRCCA problem for  $\mathbf{R} =~\begin{psmallmatrix}
\mathbf{R}_1\\
\mathbf{X}_2
\end{psmallmatrix}$ and $\mathbf{Y}$. 
The resulting canonical correlations and variates for these two problems coincide. The canonical coefficients for the original problem can be recovered via the linear transformation $\alpha_X = A\begin{psmallmatrix}
V_1 & 0\\
0&I
\end{psmallmatrix}\alpha_{R}.$
\end{lemma*}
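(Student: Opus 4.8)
\emph{Proof idea.} The plan is to reduce the claim in two stages that mirror the two ingredients highlighted above. In the first stage I would pass from $\mathbf X$ to the orthogonalized matrix $\bwidetilde{\mathbf X}=\mathbf X A$ and check that the PRCCA problem for $(\mathbf X,\mathbf Y)$ is \emph{equivalent} to the PRCCA problem for $(\bwidetilde{\mathbf X},\mathbf Y)$; in the second stage I would exploit the block-diagonal structure created by this step, together with the already-proven RCCA kernel lemma applied block-by-block, to replace $\bwidetilde{\mathbf X}$ by $\mathbf R=(\mathbf R_1,\mathbf X_2)$.

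For the first stage, recall that $A$ is invertible and, being obtained by regressing $\mathbf X_1$ on $\mathbf X_2$, has the block lower-triangular form $A=\begin{psmallmatrix}I_{p_1}&0\\ \ast&I_{p_2}\end{psmallmatrix}$, with $\bwidetilde{\mathbf X}_2=\mathbf X_2$ and $\bwidetilde{\mathbf X}_1$ the residual of $\mathbf X_1$ after projecting out $\mathbf X_2$, so $\bwidetilde{\mathbf X}_1^\top\bwidetilde{\mathbf X}_2=0$. Substituting $\alpha=A\widetilde\alpha$ turns $\alpha^\top\widehat\Sigma_{XY}\beta$ into $\widetilde\alpha^\top\widehat\Sigma_{\widetilde X Y}\beta$ and $\alpha^\top\widehat\Sigma_{XX}\alpha$ into $\widetilde\alpha^\top\widehat\Sigma_{\widetilde X\widetilde X}\widetilde\alpha$; crucially, because the $(1,1)$ block of $A$ is the identity, the penalized subvector is unchanged ($\alpha_1=\widetilde\alpha_1$), so the structured penalty satisfies $\lambda_1\|\alpha_1\|^2=\lambda_1\|\widetilde\alpha_1\|^2$, and likewise $A^\top\begin{psmallmatrix}I_{p_1}&0\\0&0\end{psmallmatrix}A=\begin{psmallmatrix}I_{p_1}&0\\0&0\end{psmallmatrix}$. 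Hence $\rho_{PRCCA}$, its normalization constraint, and the successive-uncorrelatedness constraints are all invariant (the variates $\mathbf X\alpha_i=\bwidetilde{\mathbf X}\widetilde\alpha_i$ are literally the same random variables), which gives the asserted equivalence, with coefficients related by $\alpha=A\widetilde\alpha$.

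For the second stage, orthogonality of the blocks makes $\widehat\Sigma_{\widetilde X\widetilde X}$ block-diagonal, hence so is $\widehat\Sigma_{\widetilde X\widetilde X}+\lambda_1\begin{psmallmatrix}I_{p_1}&0\\0&0\end{psmallmatrix}$ and its inverse square root, and the SVD target $M=\big(\widehat\Sigma_{\widetilde X\widetilde X}+\lambda_1\begin{psmallmatrix}I_{p_1}&0\\0&0\end{psmallmatrix}\big)^{-\frac12}\widehat\Sigma_{\widetilde X Y}\widehat\Sigma_{YY}^{-\frac12}$ becomes the vertical stack of a block built from $\bwidetilde{\mathbf X}_1$ and a block built from $\bwidetilde{\mathbf X}_2=\mathbf X_2$. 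Writing $\bwidetilde{\mathbf X}_1=\mathbf R_1V_1^\top$, I would then invoke the computation underlying the RCCA kernel lemma — extend $V_1$ to an orthogonal matrix and use $\widehat\Sigma_{\widetilde X_1 Y}=V_1\widehat\Sigma_{R_1 Y}$ together with $(\widehat\Sigma_{\widetilde X_1\widetilde X_1}+\lambda_1 I_{p_1})^{-\frac12}V_1=V_1(\widehat\Sigma_{R_1 R_1}+\lambda_1 I_n)^{-\frac12}$ — to see that the first block of $M$ equals $V_1$ times the first block of the corresponding matrix $M_{\mathbf R}$ of the PRCCA problem on $\mathbf R=(\mathbf R_1,\mathbf X_2)$ (with $\mathbf R_1$ penalized by the same $\lambda_1$), while the second blocks agree outright; here one also notes $\mathbf R_1^\top\mathbf X_2=0$ (from $\bwidetilde{\mathbf X}_1^\top\bwidetilde{\mathbf X}_2=0$ and $V_1$ having full column rank), so $M_{\mathbf R}$ has the expected block form. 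Consequently $M=W M_{\mathbf R}$ with $W=\begin{psmallmatrix}V_1&0\\0&I_{p_2}\end{psmallmatrix}$, and $W^\top W=I_{n+p_2}$.

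Since $W$ has orthonormal columns, $M$ and $M_{\mathbf R}$ share the same singular values — so the canonical correlations coincide — and their left singular vectors are related by $u^{(X)}=Wu^{(R)}$; feeding this through the identity $\alpha=(\text{modified covariance})^{-1/2}u$ and using $(\widehat\Sigma_{\widetilde X_1\widetilde X_1}+\lambda_1 I_{p_1})^{-\frac12}V_1=V_1(\widehat\Sigma_{R_1 R_1}+\lambda_1 I_n)^{-\frac12}$ once more gives $\widetilde\alpha_X=W\alpha_{\mathbf R}$, whence $\alpha_X=A\widetilde\alpha_X=A\begin{psmallmatrix}V_1&0\\0&I\end{psmallmatrix}\alpha_{\mathbf R}$; the variates match because $\mathbf X\alpha_X=\bwidetilde{\mathbf X}W\alpha_{\mathbf R}=\mathbf R\alpha_{\mathbf R}$, and the $\mathbf Y$-side (possibly itself regularized) is never altered. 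I expect the only genuine friction to be the first stage: verifying that the \emph{non-orthogonal} change of variables $A$ leaves the \emph{structured} penalty, not merely the covariance quadratic form, invariant — which is exactly where the triangular shape of $A$, i.e.\ that it fixes the penalized coordinates, is indispensable — after which the second stage is essentially bookkeeping layered on top of the RCCA kernel lemma.
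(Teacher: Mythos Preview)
Your two-stage plan is correct and matches the paper's proof in structure: first pass to $\bwidetilde{\mathbf X}=\mathbf X A$ using the block-triangular $A$ so that the structured penalty is preserved (the paper phrases this as $A^{-T}\begin{psmallmatrix}I_{p_1}&0\\0&0\end{psmallmatrix}A^{-1}=\begin{psmallmatrix}I_{p_1}&0\\0&0\end{psmallmatrix}$, which is equivalent to your $A^\top\begin{psmallmatrix}I_{p_1}&0\\0&0\end{psmallmatrix}A=\begin{psmallmatrix}I_{p_1}&0\\0&0\end{psmallmatrix}$), then exploit block-diagonality to reduce the $\bwidetilde{\mathbf X}_1$ block via the RCCA kernel idea.

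The only real difference is in how the second stage is executed. The paper stays at the level of the modified correlation coefficient: it extends $V_1$ to a full orthogonal matrix, reparametrizes $\widetilde\alpha_1$ as $(\gamma_1,\gamma_2)$, and observes that $\gamma_2$ appears only as an additive $\lambda_1\|\gamma_2\|^2$ in the denominator, so the optimum has $\gamma_2=0$. You instead work directly with the SVD target matrix and establish the intertwining identity $(\widehat\Sigma_{\widetilde X_1\widetilde X_1}+\lambda_1 I_{p_1})^{-1/2}V_1=V_1(\widehat\Sigma_{R_1R_1}+\lambda_1 I_n)^{-1/2}$ to conclude $M=WM_{\mathbf R}$ with $W^\top W=I$. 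Your route is a bit more matrix-algebraic and has the advantage of delivering all canonical pairs at once via the shared singular values, without a separate ``optimal $\gamma_2=0$'' argument; the paper's route is closer to the optimization formulation and avoids manipulating matrix square roots. Both are valid and rest on the same key facts (block-diagonality after orthogonalization, and $\mathbf R_1^\top\mathbf X_2=0$).
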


See Supplement Section \ref{supp:prcca:kernel} for the proof. Note that, according to the lemma, instead of working with large matrices $\widehat\Sigma_{XX}\in\mathbb{R}^{p\times p}$ and $\widehat\Sigma_{XY}\in\mathbb{R}^{p\times q}$ we can operate in terms of smaller matrices $\widehat\Sigma_{RR}\in\mathbb{R}^{(n + p_2)\times (n+p_2)}$ and $\widehat\Sigma_{R Y}\in\mathbb{R}^{(n + p_2)\times q}$ thereby avoiding excessive computations.

\subsection{Testing PRCCA on the Human Connectome data}
\label{connectome:prcca}
First we chose the brain region of interest that we aim to release from the regularization penalty. To do so for each column of averaged activation data we compute \textit{Cohen's d}, which measures the effect size for a one-sample t-test comparing the population mean to zero, and pick the regions with at least a medium effect ($d>0.3$). The resulting $26$ regions demonstrate the largest activation during the Gambling task. Then we run PRCCA on the averaged data imposing the penalty on all but these $26$ regions. We consider the same grid of values $\lambda_1 = 10^{-3}, 10^{-2}, \ldots, 10^4, 10^5$ and choose the hyperparemeter according to the maximum cross-validation score (see Section \ref{cv} for the details). The highest validation correlation is equal to $0.073$ and is achieved when $\lambda_1 = 1$ (see Figure \ref{fig:prcca:scores}). Figure \ref{fig:rde:prcca:coef} represents the canonical coefficients computed for $\lambda_1$ chosen by cross-validation. As expected, the coefficients for most regions were shrunk to zero leaving only a few standing out.
Again, it is the kernel trick which enables running cross-validation for the extremely high-dimensional feature matrix in just a few minutes. Although PRCCA does not perform well in this application, it will play an important role in developing subsequent methods. 

\begin{figure}[h!]
  \centering
    \includegraphics[width = \textwidth]{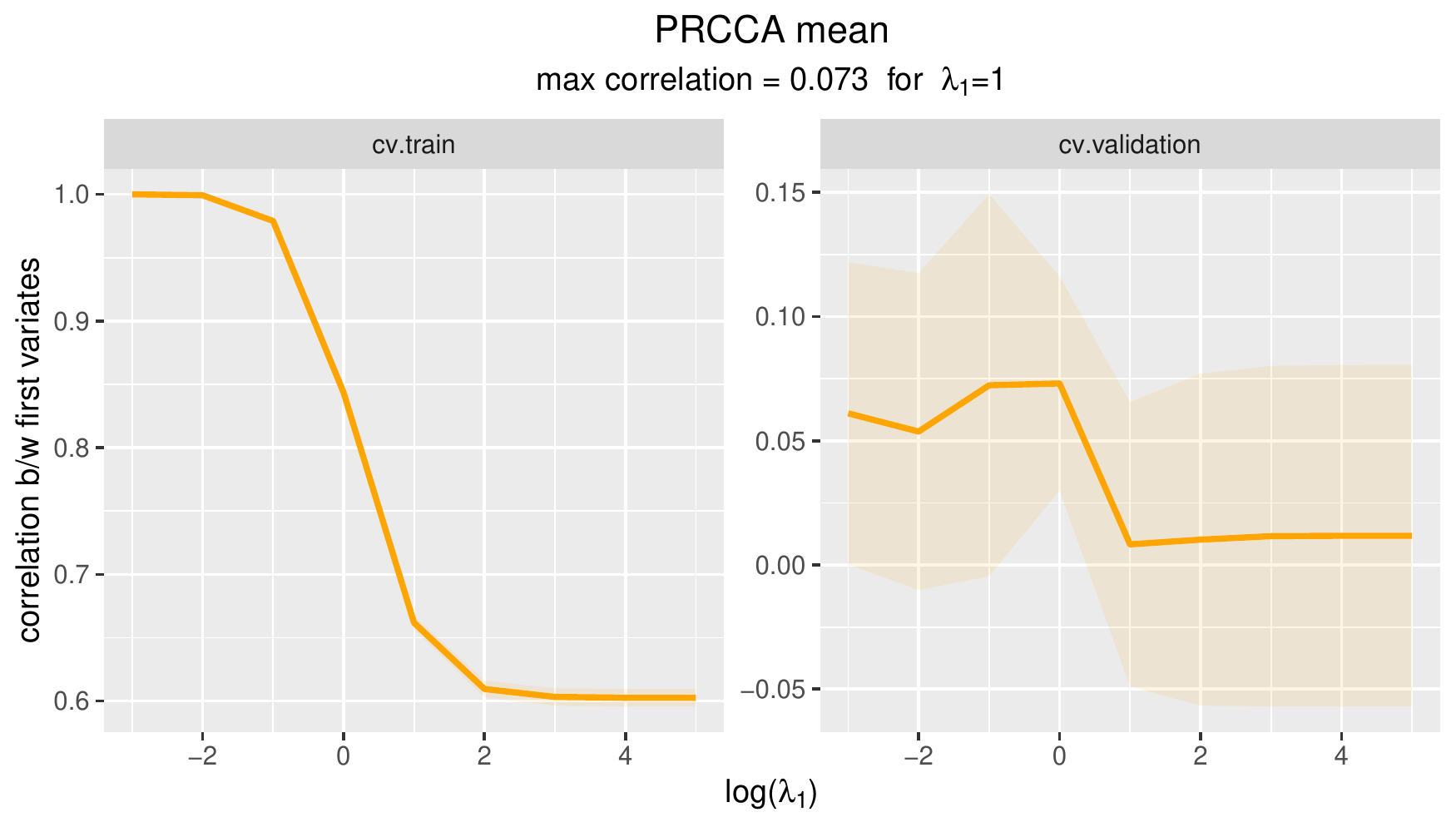}
\caption{The cross-validation curves obtained via PRCCA with 10-fold cross-validation for the Human Connectome dataset. Left panel: (unpenalized) correlation between train canonical variates. Right panel: (unpenalized) correlation between validation canonical variates.} 
\label{fig:prcca:scores}
\end{figure}

\begin{figure}[h!]
  \begin{subfigure}[b]{0.5\textwidth}
    \includegraphics[width=\textwidth]{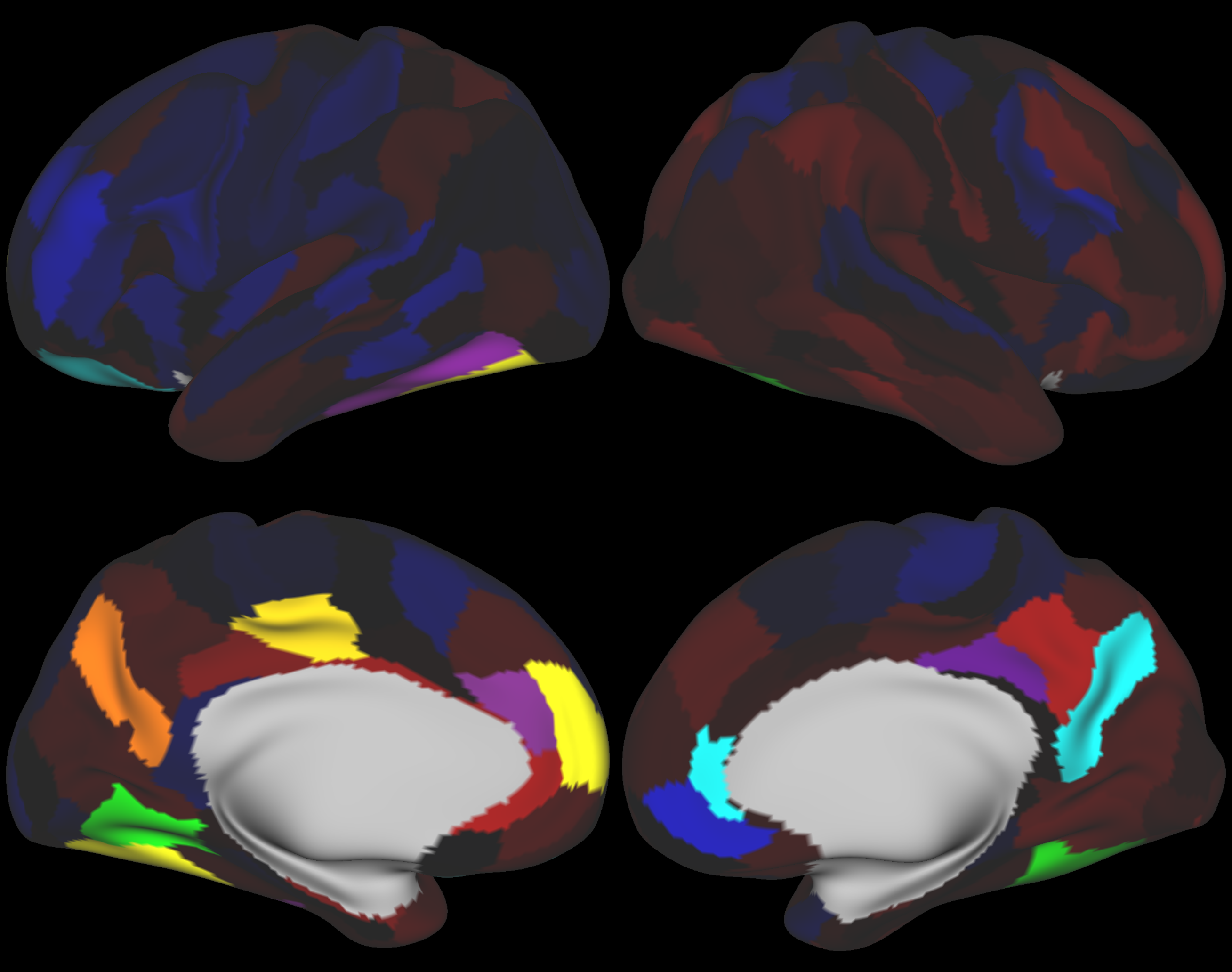}
    \subcaption{Cortical coefficients.}
    \label{fig:rde:prcca:coef:a}
     \end{subfigure}
  \hfill
  \begin{subfigure}[b]{0.5\textwidth}
    \includegraphics[width=\textwidth]{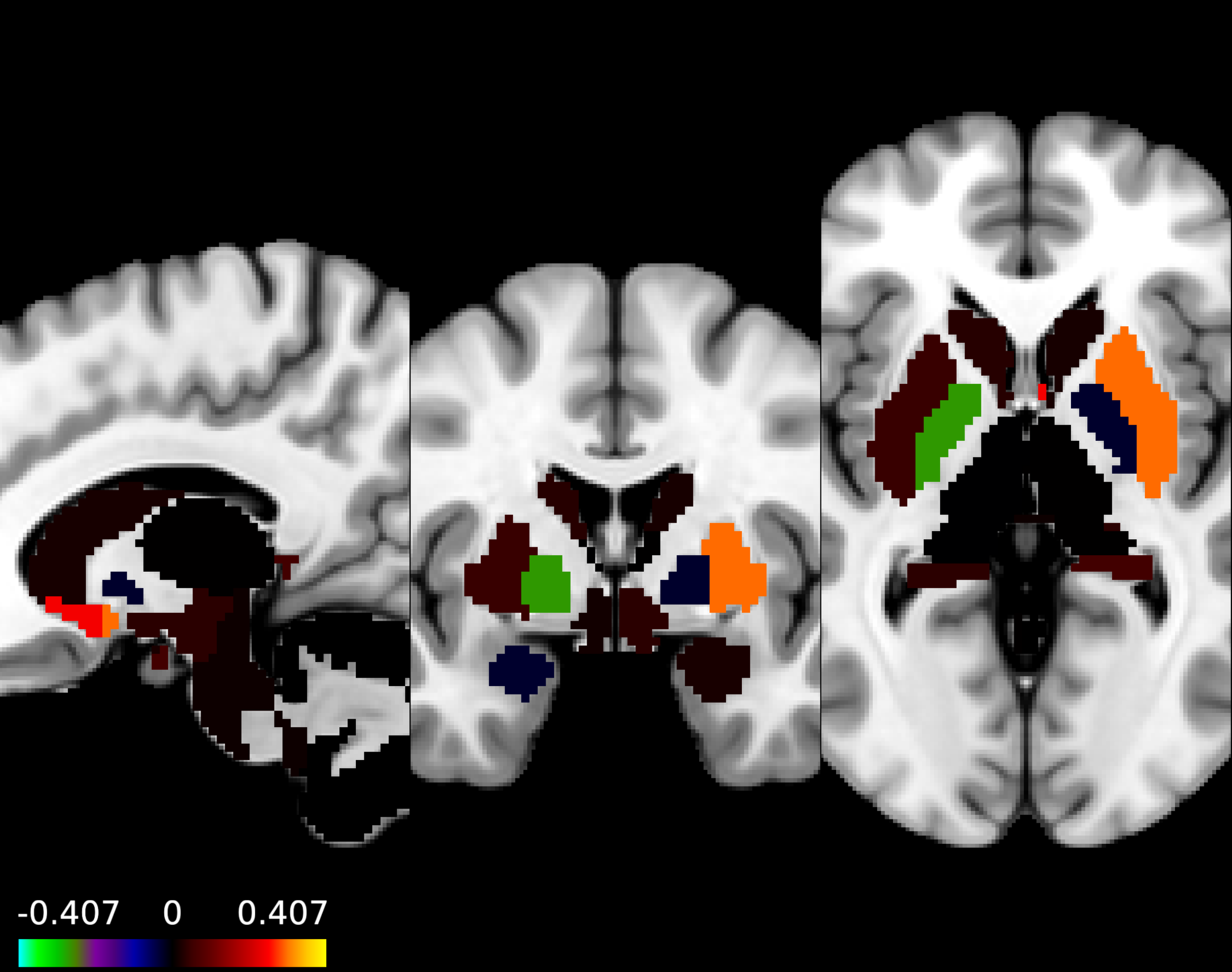}
    \subcaption{Subcortical coefficients.}
      \label{fig:rde:prcca:coef:b}
    \end{subfigure}
\caption{PRCCA coefficients computed for the averaged data ($229$ features) with hyperparameter $\lambda_1 = 1$ chosen by cross-validation.}
  \label{fig:rde:prcca:coef}
\end{figure}

\section{Canonical correlation analysis for grouped data}
\label{grcca}

\subsection{Handling data with a group structure}
\label{grcca:motivation}

The main critique of applying standard RCCA approach to the fMRI data is that, in fact, RCCA completely ignores the brain geometry treating all the features equally. 
Recall that in the Human Connectome data the features representing particular greyordinates are grouped into macro regions according to the function and anatomy. 
The goal of the \textit{group regularized canonical correlation analysis (GRCCA)} is to incorporate this underlying data structure into the regularization penalty.

There are some group extensions of CCA based on elastic net and group lasso penalizations suggested in the literature (see, for example, \citealp{group1} and \citealp{group2}). Unlike the existing methods the proposed GRCCA approach does not require an iterative algorithm and has a simple explicit solution. Equipped with the kernel trick it also allows working with data in a very high-dimensional feature space. 

%\clearpage

GRCCA solves the CCA problem under the following two natural assumptions. First, we assume \textit{homogeneity of groups} and expect that the features within each group have approximately equal contribution to the canonical variates. In other words, the corresponding CCA coefficients do not vary significantly inside each group. Second, we assume the \textit{differentiating sparsity on a group level} and expect that the coefficients will be shrunk toward zero all together for some groups. In terms of brain imaging applications these two assumptions mean that greyordinates ``act in concert'' within each macro region and that some regions have a weaker effect on the studied phenomenon.

To state the GRCCA optimization problem rigorously we need to introduce some further notation. Suppose that the elements of random vectors $X$ and $Y$ are known to come in $K$ and $L$ groups, respectively. For each $k=1,\ldots,K$ let $p_k$ be the number of elements from $X$ that belong to group $k$, so $\sum_{k=1}^K p_k = p$, and let $X_k\in\mathbb{R}^{p_k}$ be the random vector that consists of these elements. Without loss of generality, we can assume that $X = \begin{psmallmatrix}
X_1\\
\ldots\\
X_K
\end{psmallmatrix}$. Similarly, one can divide the CCA coefficients $\alpha$ into blocks corresponding to different groups, i.e. $\alpha = \begin{psmallmatrix}
\alpha_1\\
\ldots\\
\alpha_K
\end{psmallmatrix}$. If $\bar\alpha_k = \frac{\mathbb{1}^\top\alpha_k}{p_k}$ denotes the mean of the CCA coefficients in group $k,$ then the group homogeneity assumption implies that all values of $\alpha_k$ do not deviate significantly from the mean value $\bar\alpha_k$, whereas, the differentiating sparsity on a group level implies that the average deviation of $\bar\alpha_k$ from zero is small. This can be characterized by the following two constraints:
$\sum_{k = 1}^K\|\alpha_k - \mathbb{1}\bar\alpha_k\|^2\leq t_1~$ and $~\sum_{k=1}^Kp_k\bar\alpha_k^2\leq s_1.$
Note that these two equations can be interpreted as bounds on within- and between- group variation,~respectively. One can derive similar constraints for $\beta$ coefficients as well.
Adding all the constraints to the CCA optimization problem we end up with the GRCCA optimization problem
\begin{align}
    \text{maximize} ~&\alpha^\top\widehat\Sigma_{XY}\beta~ \text{w.r.t.} ~\alpha\in\mathbb{R}^p~ \text{and} ~\beta\in\mathbb{R}^q \nonumber\\
    \text{subject to} ~\alpha^\top\widehat\Sigma_{XX}&\alpha = 1, ~\sum_{k = 1}^K\|\alpha_k - \mathbb{1}\bar\alpha_k\|^2\leq t_1, ~\sum_{k=1}^Kp_k\bar\alpha_k^2\leq s_1~ \nonumber\\
    \text{and} ~\beta^\top\widehat\Sigma_{YY}&\beta = 1, ~\sum_{\ell = 1}^L\|\beta_\ell - \mathbb{1}\bar\beta_\ell\|^2\leq t_2,~
    \sum_{\ell=1}^L q_\ell\bar\beta_\ell^2\leq s_2.
\label{sgrcca:op}
\end{align}

Next, denote $C_{m} = \frac{\mathbb{1}\mathbb{1}^\top}{m}\in\mathbb{R}^{m\times m}$. Let $C_X = C_{p_1}\oplus\ldots\oplus C_{p_K}\in\mathbb{R}^{p\times p}$
refer to the block diagonal matrix with blocks $C_{p_1},\ldots,C_{p_K}$.
Thus, the constraints on $\alpha$ can be rewritten in terms of a regularization penalty~as
\begin{align*}
&\alpha^\top\widehat\Sigma_{XX}\alpha + \lambda_1\sum_{k = 1}^K\|\alpha_k - \mathbb{1}\bar\alpha_k\|^2 +\mu_1\sum_{k=1}^K p_k\bar\alpha_k^2 =\nonumber\\
&\alpha^\top\widehat\Sigma_{XX}\alpha + \lambda_1\sum_{k = 1}^K\alpha_k^\top\left(I - C_{p_k}\right)\alpha_k +\mu_1\sum_{k=1}^K\alpha_k^\top C_{p_k}\alpha_k=\nonumber\\
&\alpha^\top\left(\widehat\Sigma_{XX} + \lambda_1(I-C_X) + \mu_1 C_X\right)\alpha=\alpha^\top\left(\widehat\Sigma_{XX}+K_X(\lambda_1,\mu_1)\right)\alpha.
\end{align*}
Here $K_X(\lambda_1,\mu_1) = \lambda_1(I-C_X)+\mu_1 C_X$ is the penalty matrix.
The constraints for $\beta$ can be combined in a similar way 
leading to the GRCCA modified correlation coefficient
\begin{align*}
    \rho_{GRCCA}(\alpha, \beta;\lambda_1, \mu_1, \lambda_2, \mu_2) = \frac{\alpha^\top\widehat\Sigma_{XY}\beta}{\sqrt{\alpha^\top(\widehat\Sigma_{XX}+ K_X(\lambda_1,\mu_1))\alpha}~\sqrt{\beta^\top(\widehat\Sigma_{YY}+\ K_Y(\lambda_2,\mu_2))\beta}}.
\end{align*}
Note that this correlation coefficient has similar structure to the RCCA coefficient (\ref{rcca:rho}) and PRCCA coefficient (\ref{prcca:rho}), but now the covariance matrices in the denominator are adjusted by block diagonal matrices $K_X(\lambda_1,\mu_1)$ and $K_Y(\lambda_2,\mu_2)$. 

Similar to RCCA and PRCCA, the explicit solution to the GRCCA problem can be found via the SVD of matrix 
$\left(\widehat\Sigma_{XX} + K_X(\lambda_1,\mu_1) \right)^{-\frac12}\widehat\Sigma_{XY}\left(\widehat\Sigma_{YY} + K_Y(\lambda_2,\mu_2) \right)^{-\frac12},$ which can be problematic in high dimensions. It turns out that there is a simple linear transformation that converts the GRCCA problem to an equivalent RCCA/PRCCA problem. In Supplement Sections \ref{supp:grcca:solution} and \ref{supp:grcca:solution:alternative} we give two ways: via the SVD of the penalty matrix and via feature matrix extension. This link can be subsequently used to establish the kernel trick for group-structured data thereby reducing computations in high dimensions.

\subsection{Link to the flexibility vs. performance trade-off}
\label{grcca:interpretability}
There are several important properties of the proposed penalty matrix explaining the motivation for the GRCCA method. First, one can show that for $K_X(\lambda_1,\lambda_1) = \lambda_1I$, so RCCA is the special case of GRCCA. Second, increasing $\lambda_1$ restrains the variability of coefficients within each brain region and, in limit, makes all the coefficients that belong to the same brain region equal to each other (and equal to the region mean). This is essentially equivalent to replacing features in each brain region by the average.
Therefore, when $\lambda_1\to\infty$ the GRCCA problem becomes equivalent to the RCCA problem solved for the reduced data (see Section \ref{connectome:rcca} for the details). 
To sum up, varying $\lambda_1$ and $\mu_1$ allows us to approach the RCCA method conducted for either full or reduced data thereby controlling the flexibility vs interpretability trade-off described in Section~\ref{rcca:interpretability}.

\subsection{GRCCA for the Human Connectome study}
\label{grcca:connectome}

In this section we apply the GRCCA method to the Human Connectome study data grouping activation features according to the brain regions. We again adjusted $X$ and $Y$ for the sex effect and ran 10-fold cross-validation on the adjusted data pair with the penalty factors varying in the grid  $\lambda_1, \mu_1 = 10^{-3}, 10^{-2}, \ldots, 10^4, 10^5$ (see Section \ref{cv} for the details). The resulting cross-validation curves are presented in Figure \ref{fig:rde:scores}. According to the plot, the highest cross-validation score is attained for $\lambda_1 = 100$ and $\mu_1 =~1$ leading to the correlation of $0.296$.

\begin{figure}[h!]
  \centering
    \includegraphics[width=0.96\linewidth]{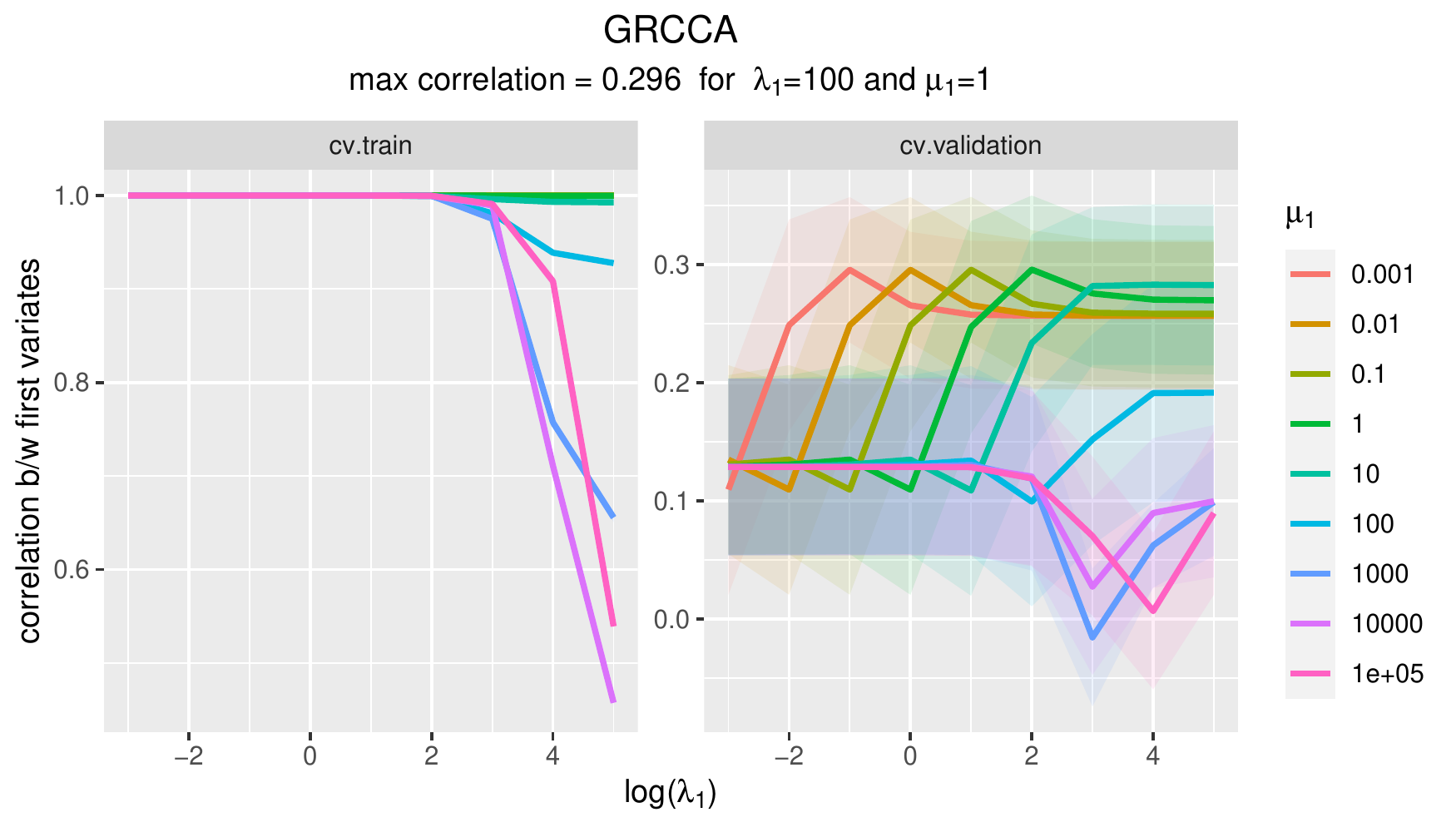}
 \caption{The cross-validation curves obtained via GRCCA with 10-fold cross-validation for the Human Connectome Project dataset. Left panel: (unpenalized) correlation between train canonical variates. Right panel: (unpenalized) correlation between validation canonical variates.}
  \label{fig:rde:scores}
\end{figure}

\begin{figure}[h!]
  \centering
    \includegraphics[width=0.9\linewidth]{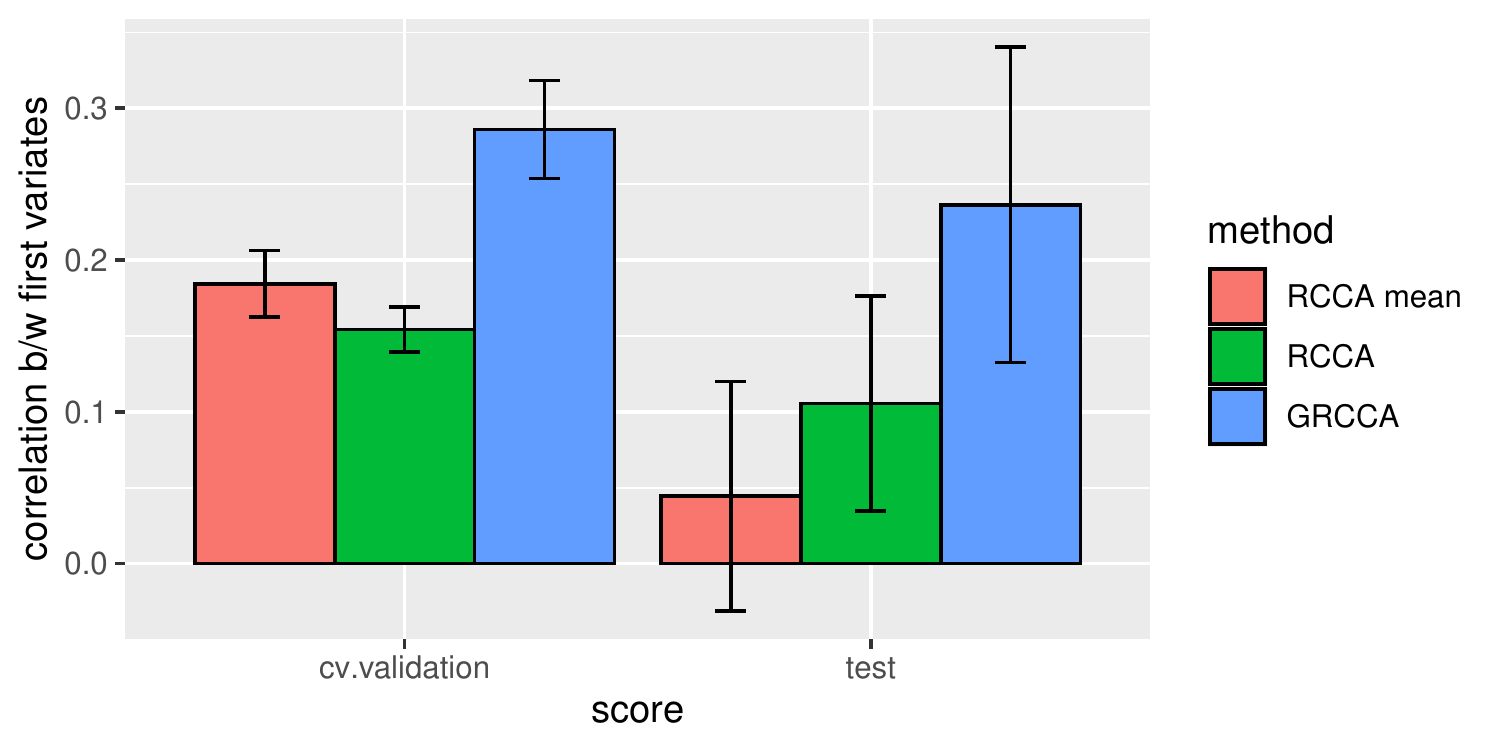}
  \caption{Nested cross-validation scores computed for three models: RCCA fitted on the averaged data (229 features), RCCA fitted on the original data ($90~368$ features), GRCCA. Two scores are reported: \textit{cv.validation} = maximum score obtained via 10-fold cross-validation, averaged across 11 NCV folds; \textit{test} = score computed on independent test set, averaged across 11 NCV folds.}
  \label{fig:rcca:ncv}
\end{figure}

 To validate these significant findings, we again run nested cross-validation (see Section \ref{connectome:rcca}). According to Figure \ref{fig:rcca:ncv}, the NCV average test set score is equal to $0.236$. Thus, although slightly optimistic (by a modest value of $0.06$), the GRCCA cross-validation correlation is not a spurious finding and is a significant improvement comparing to the RCCA method.

In addition to better performance, the GRCCA method allows us to track the effect of the variation inside each brain region (controlled by the $\lambda_1$ penalty factor) on the resulting canonical correlation separately from the effect of the variation across the regions (controlled by $\mu_1$ penalty factor). For example, the spikes for small $\mu_1$ values suggest that it is more beneficial to reduce within group variation than between group one. In other words, shrinking all brain region coefficients toward the group means improves the performance more than shrinking group means toward zero. Moreover, for small $\mu_1$ it is the ratio of hyperparameters that plays the key role: the highest score is always achieved when
$\frac{\lambda_1}{\mu_1} = 100.$ For large $\mu_1$ this pattern disappears as we start to over-penalize both between and within group variations.

\subsection{Using GRCCA for visualization}
\label{grcca:vizualization}

In this section we demonstrate another advantage of GRCCA in the context of visualization and interpretability.
In Figure \ref{fig:rde:path} we present the coefficient paths ($\alpha$ vs. $\lambda_1$) produced by the RCCA method as well as the group modification. Here different colors represent different brain regions. According to the plot, we observe the following behavior of the coefficients. For the RCCA method the canonical coefficients are shrunk toward zero all together with the growth of $\lambda_1$. On the contrary, for large $\lambda_1$ all the GRCCA coefficient paths become horizontal, which implies the convergence of the coefficients to the group means. Finally, increasing $\mu_1$ shrinks all the group means toward zero encouraging differentiating sparsity on a group level.

\begin{figure}[htb!]
  \centering
  \begin{subfigure}{0.5\textwidth}
    \includegraphics[width = \textwidth]{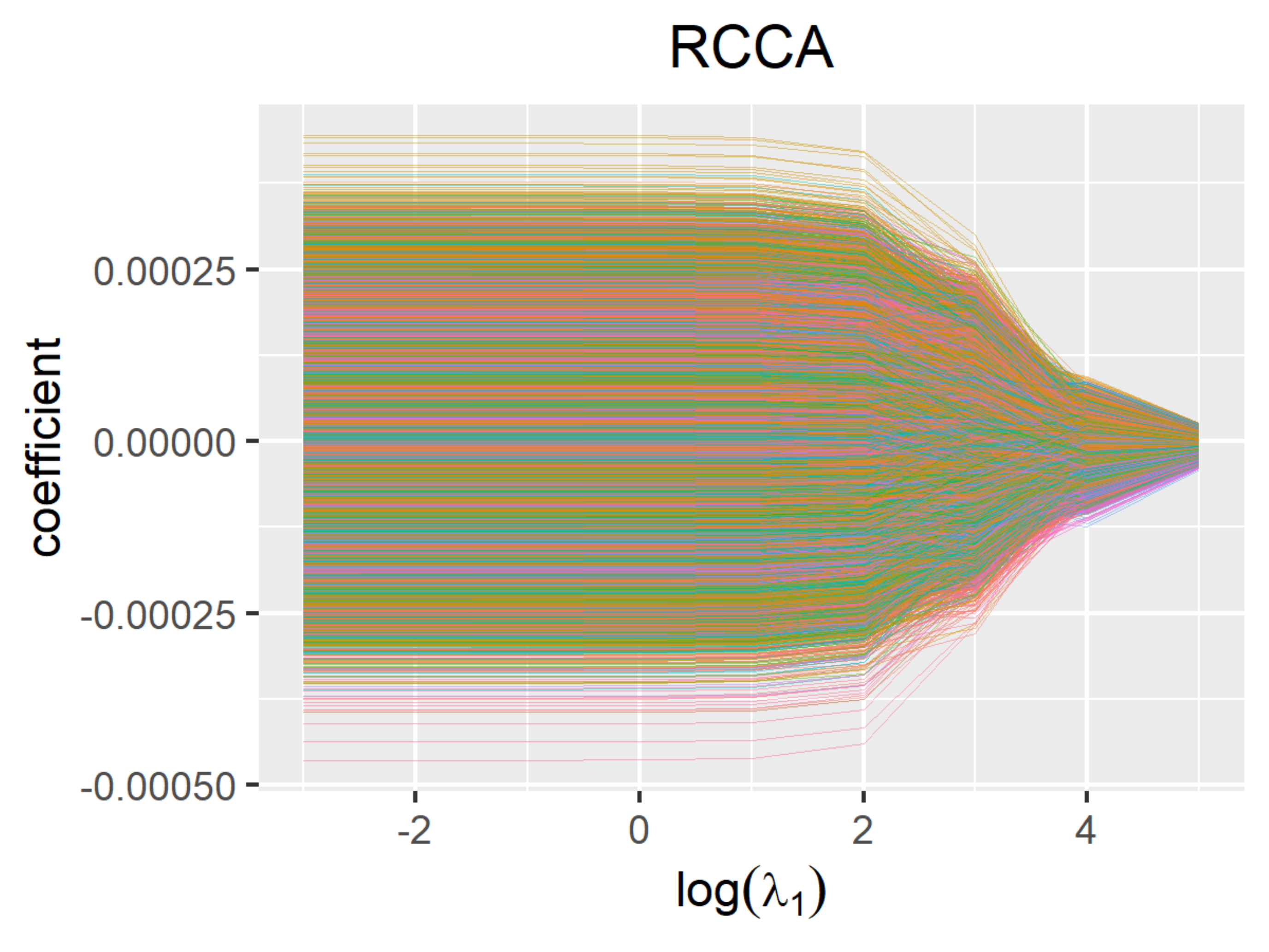}
    \end{subfigure}
    \vfill
     \begin{subfigure}{\textwidth}
    \includegraphics[width = \textwidth]{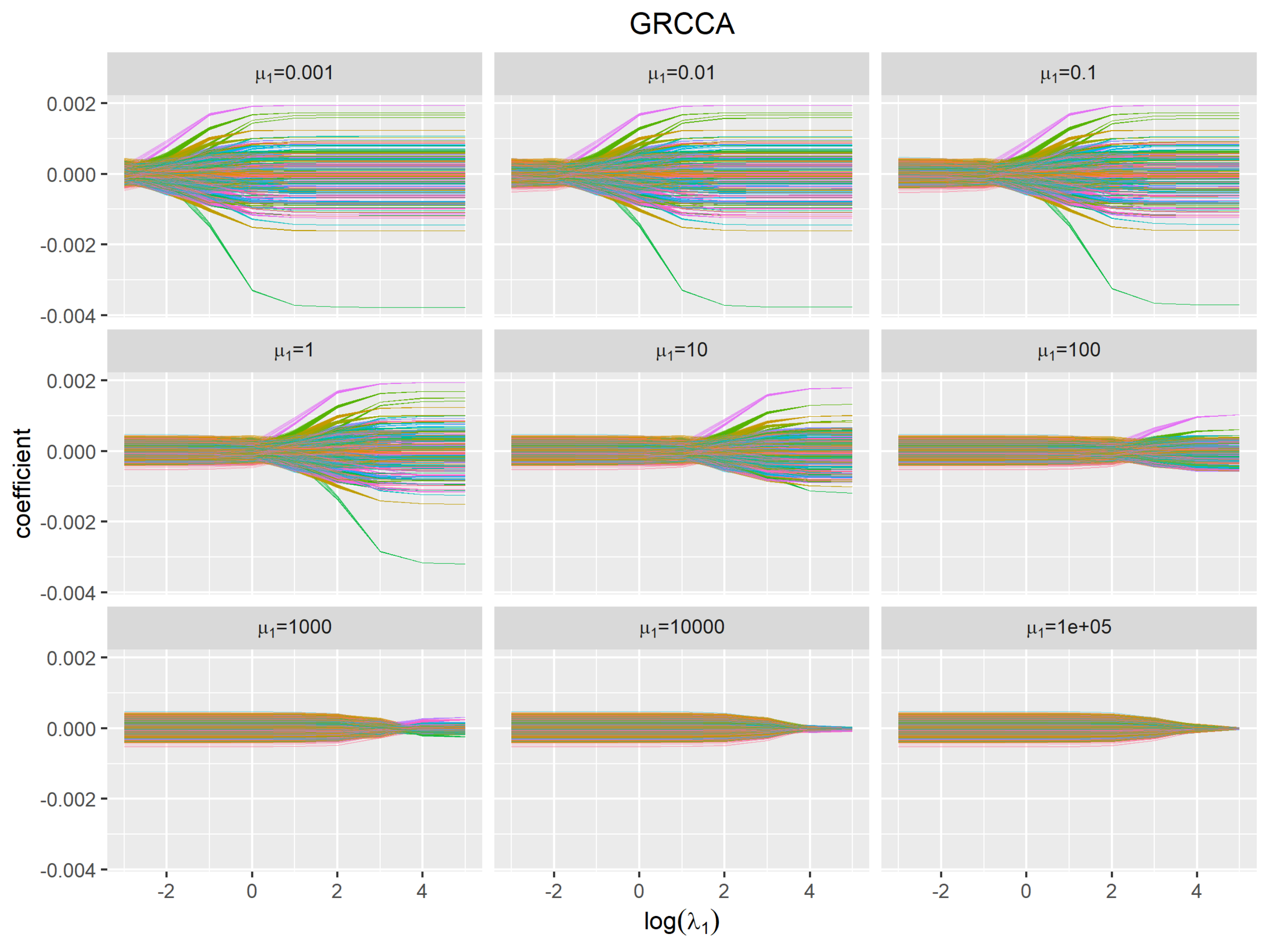}
     \end{subfigure}
 \caption{Coefficient paths for RCCA and GRCCA with colors representing the regions. The RCCA method shrinks canonical coefficients toward zero all together with the growth of $\lambda_1$. The GRCCA method shrinks them toward the group means with the growth of $\lambda_1$, whereas increasing $\mu_1$ shrinks all the group means toward zero.}
   \label{fig:rde:path}
\end{figure}

In Figure \ref{fig:rde:GRCCA:coef} we present the brain images computed for $\mu_1 = 1$ and $\lambda_1 = 1, 10, 100, 1000$. 
Note that larger $\lambda_1$ makes the brain region pattern more obvious (similar to Figures \ref{fig:rde:rcca:coef:a}-\ref{fig:rde:rcca:coef:b}). Moreover, for $\lambda_1 = 100$ we get the plot corresponding to the best GRCCA model from Section~\ref{grcca:connectome}. Thus the GRCCA model chosen by cross-validation has not only better performance on the validation set than the RCCA model, but it is also more interpretable in the context of the importance of each brain region.
Specifically, the canonical component had especially high positive loadings in subcortical regions involved in reward processing, such as the striatum (nucleus accumbens, putamen) and thalamus \citep{haberChapterAnatomyConnectivity2017}. It also loaded positively on a cortical network encompassing the temporal lobe, dorsolateral prefrontal, dorsomedial prefrontal, posterior cingulate and precentral cortices (see Figure \ref{fig:rde:GRCCA:coef} for an annotated visualization of these results). Most of these regions have been shown to be connected to the striatum and to be part of key reward-processing pathways as well \citep{haberChapterAnatomyConnectivity2017}.

\begin{figure}[!htb] 
    \begin{subfigure}[b]{0.37\textwidth}
    \includegraphics[width=\textwidth]{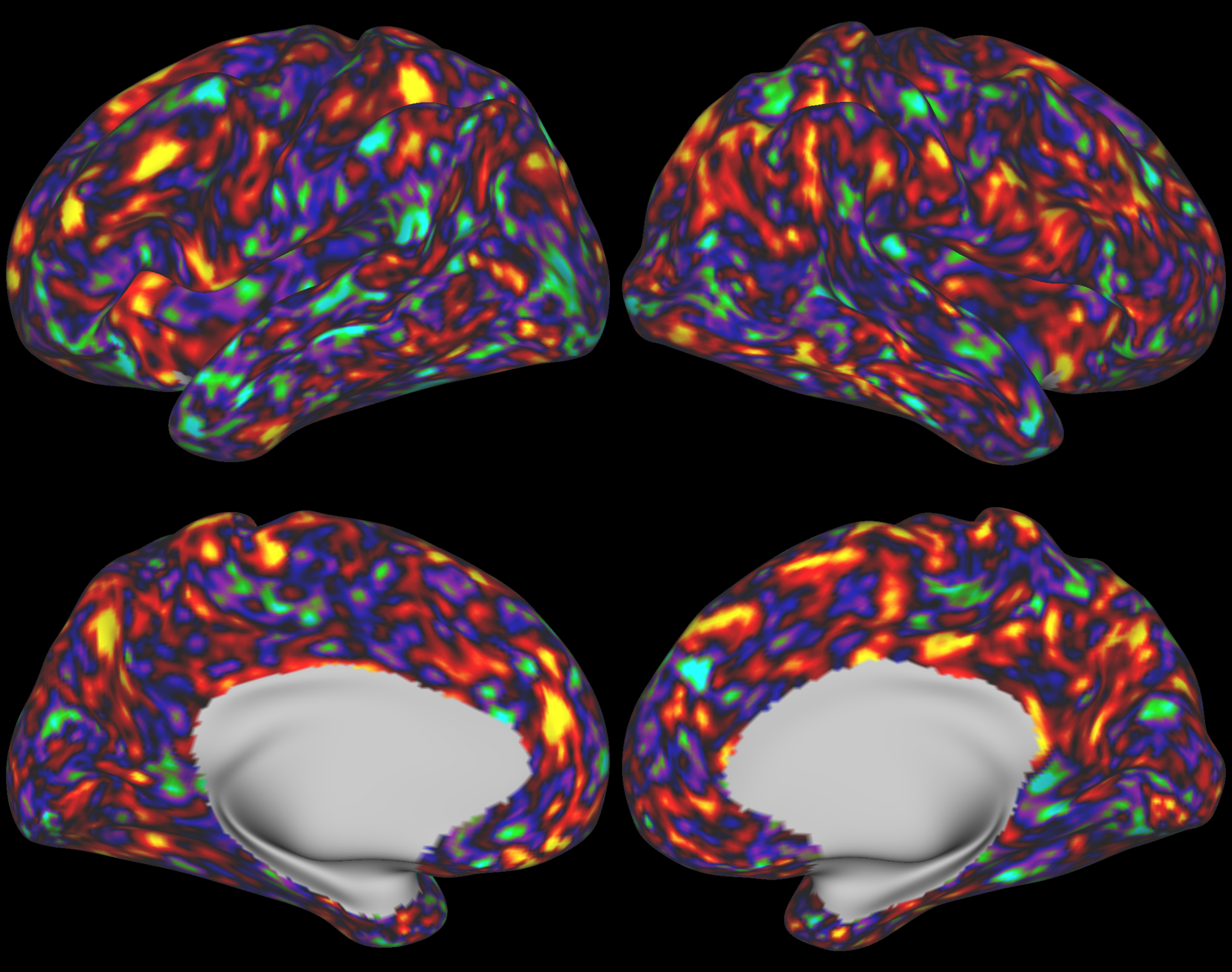}
     \end{subfigure}
  \hfill
  \begin{subfigure}[b]{0.37\textwidth}
    \includegraphics[width=\textwidth]{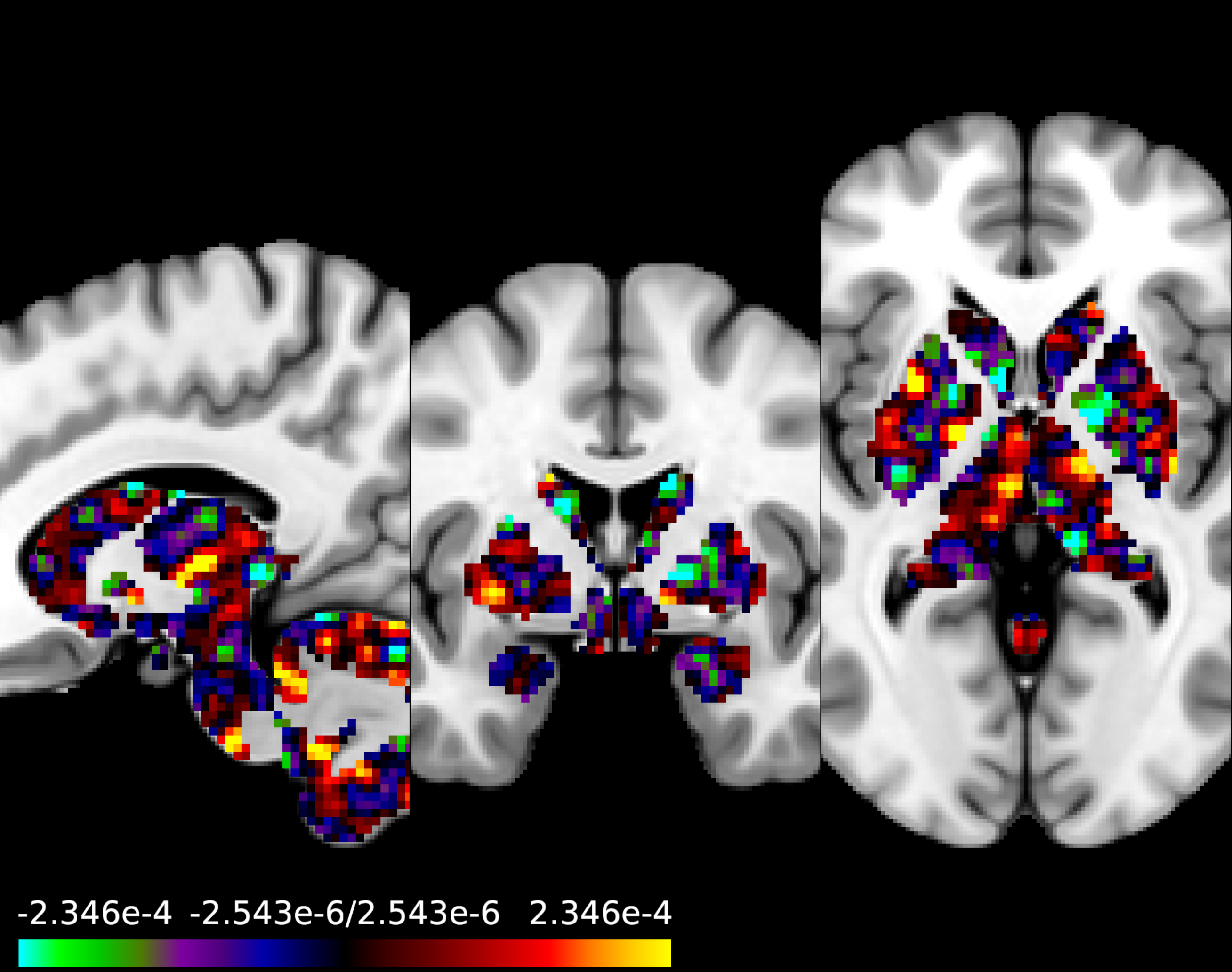}
    \end{subfigure}
    \vfill
 \begin{subfigure}[b]{0.37\textwidth}
    \includegraphics[width=\textwidth]{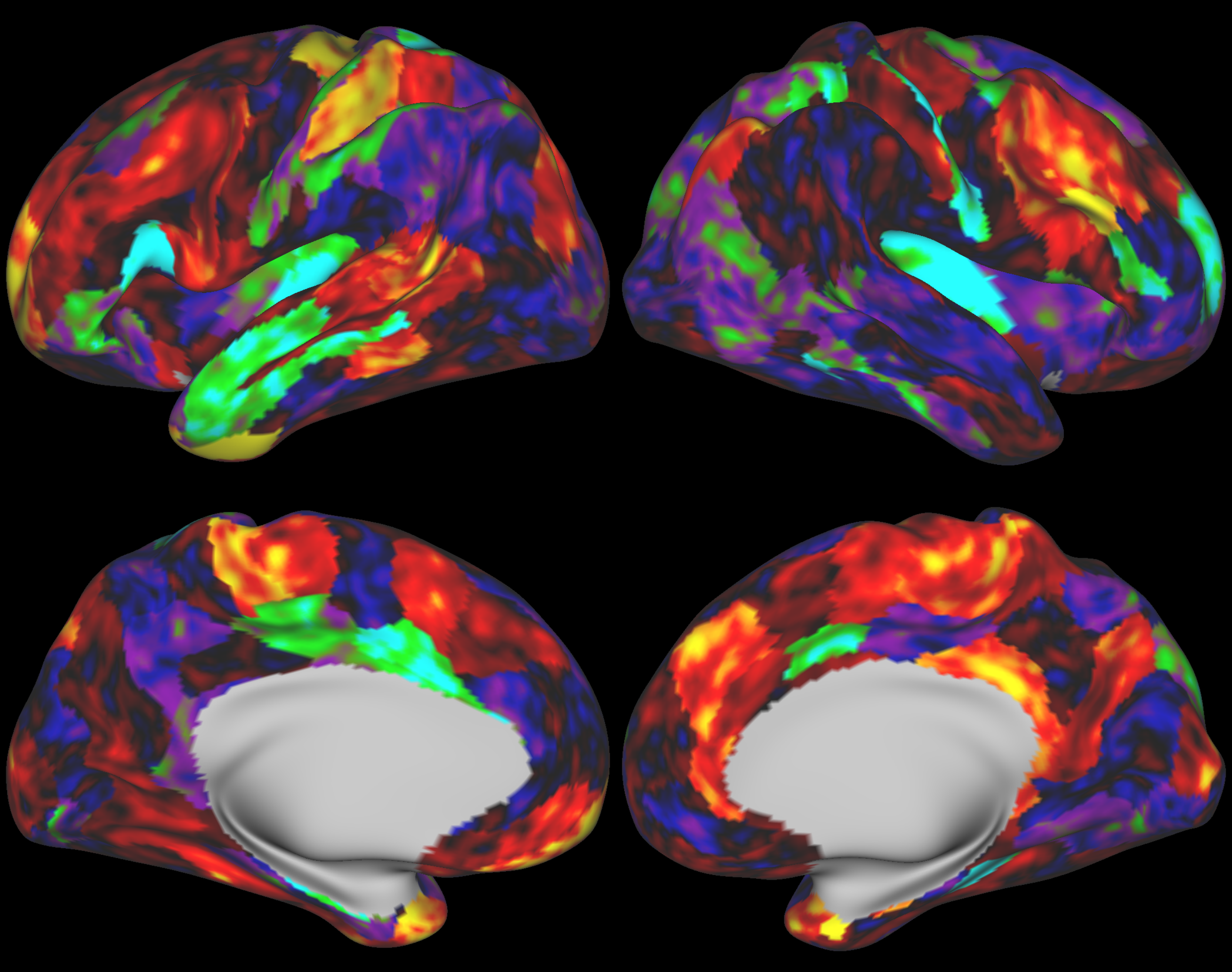}
     \end{subfigure}
  \hfill
  \begin{subfigure}[b]{0.37\textwidth}
    \includegraphics[width=\textwidth]{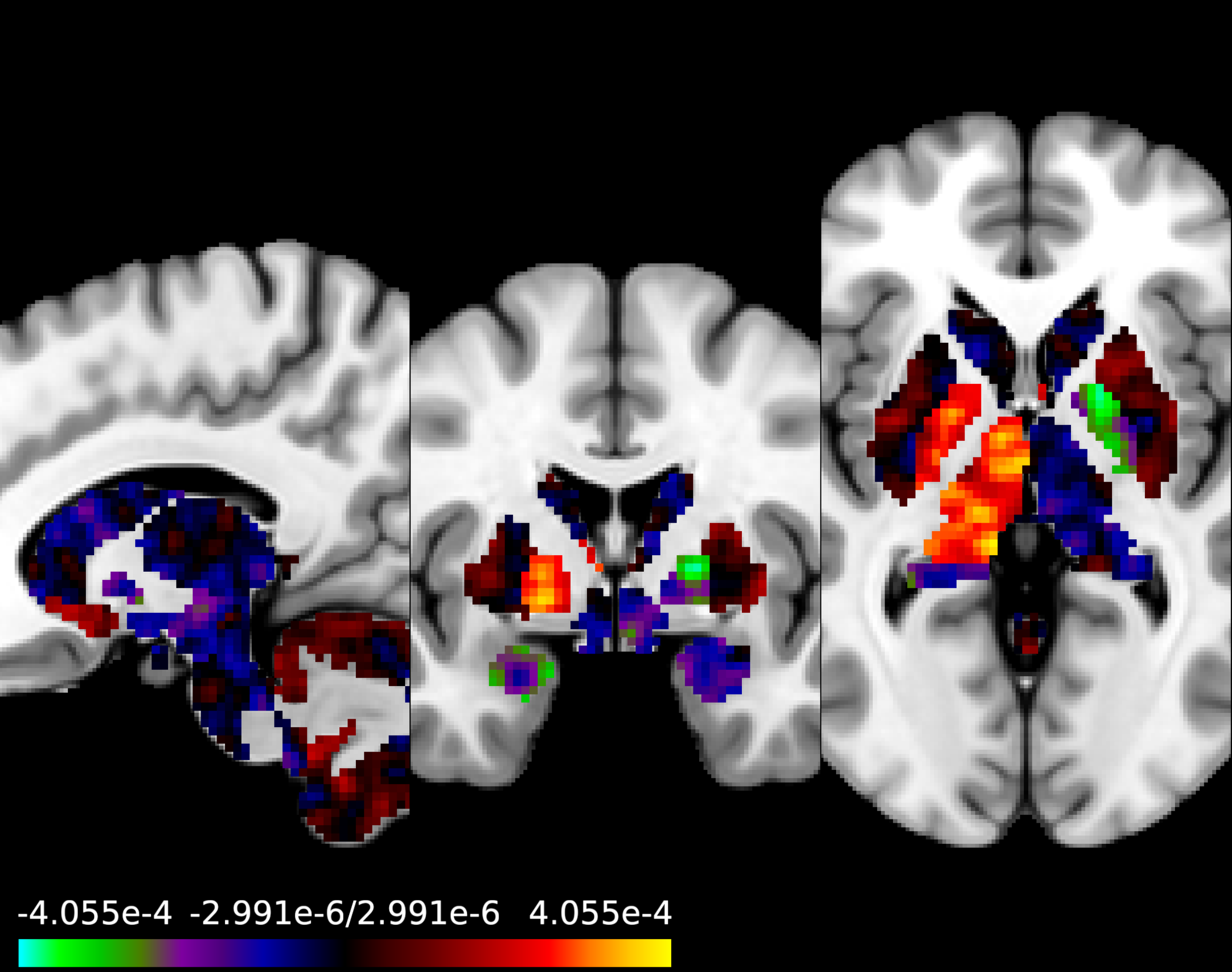}
    \end{subfigure}
    \vfill
 \begin{subfigure}[b]{0.37\textwidth}
    \includegraphics[width=\textwidth]{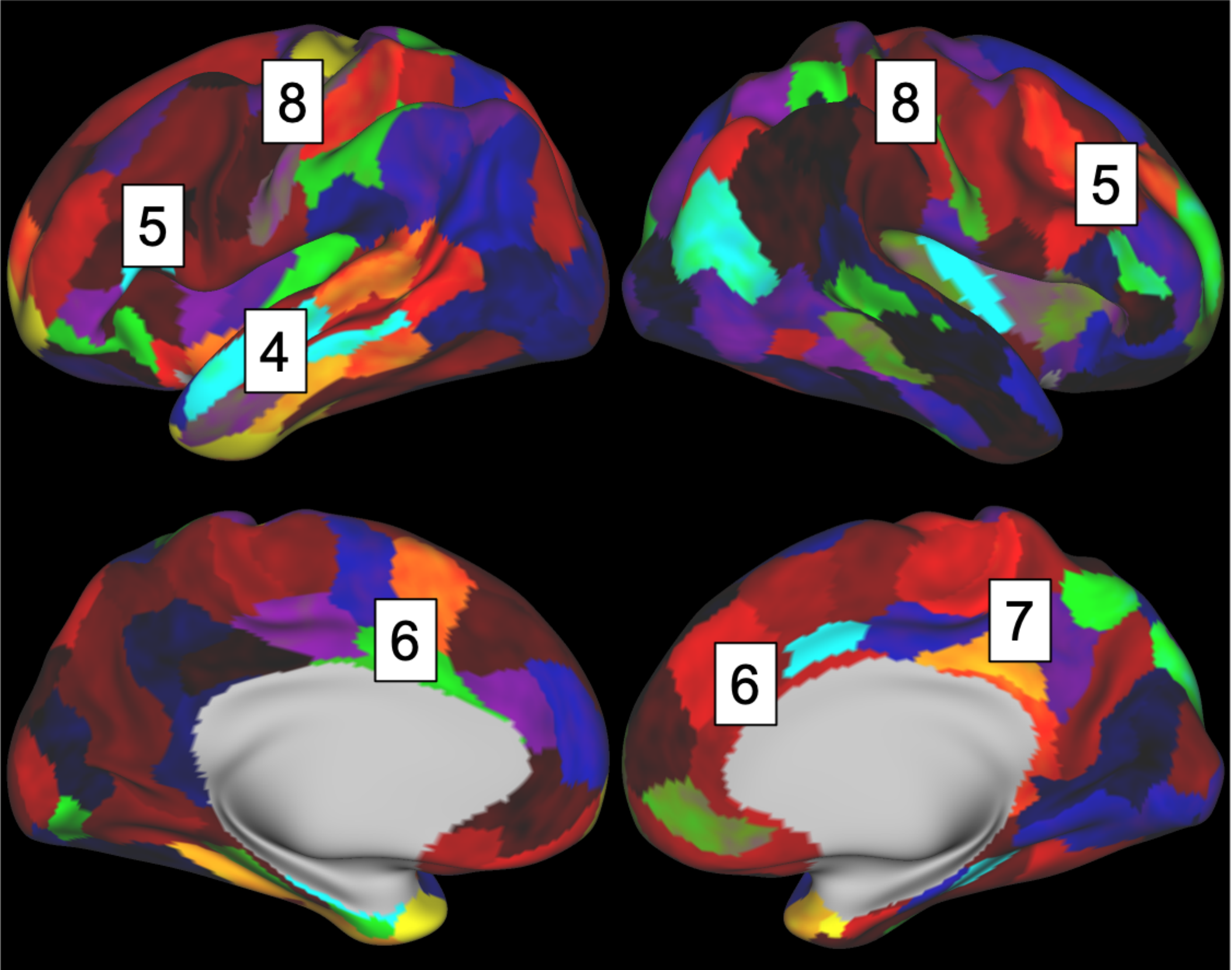}
     \end{subfigure}
  \hfill
  \begin{subfigure}[b]{0.37\textwidth}
    \includegraphics[width=\textwidth]{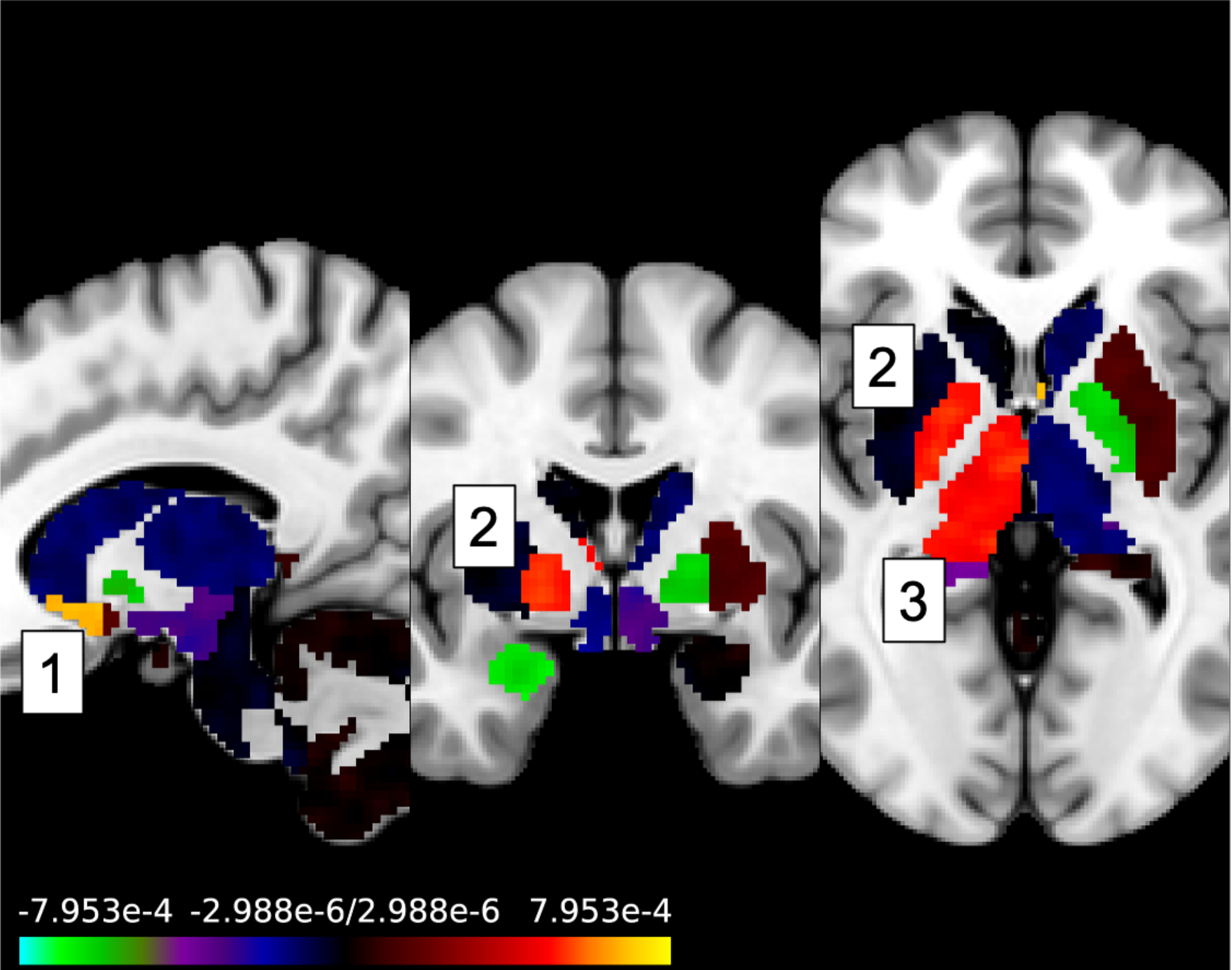}
    \end{subfigure}
    \vfill
 \begin{subfigure}[b]{0.37\textwidth}
    \includegraphics[width=\textwidth]{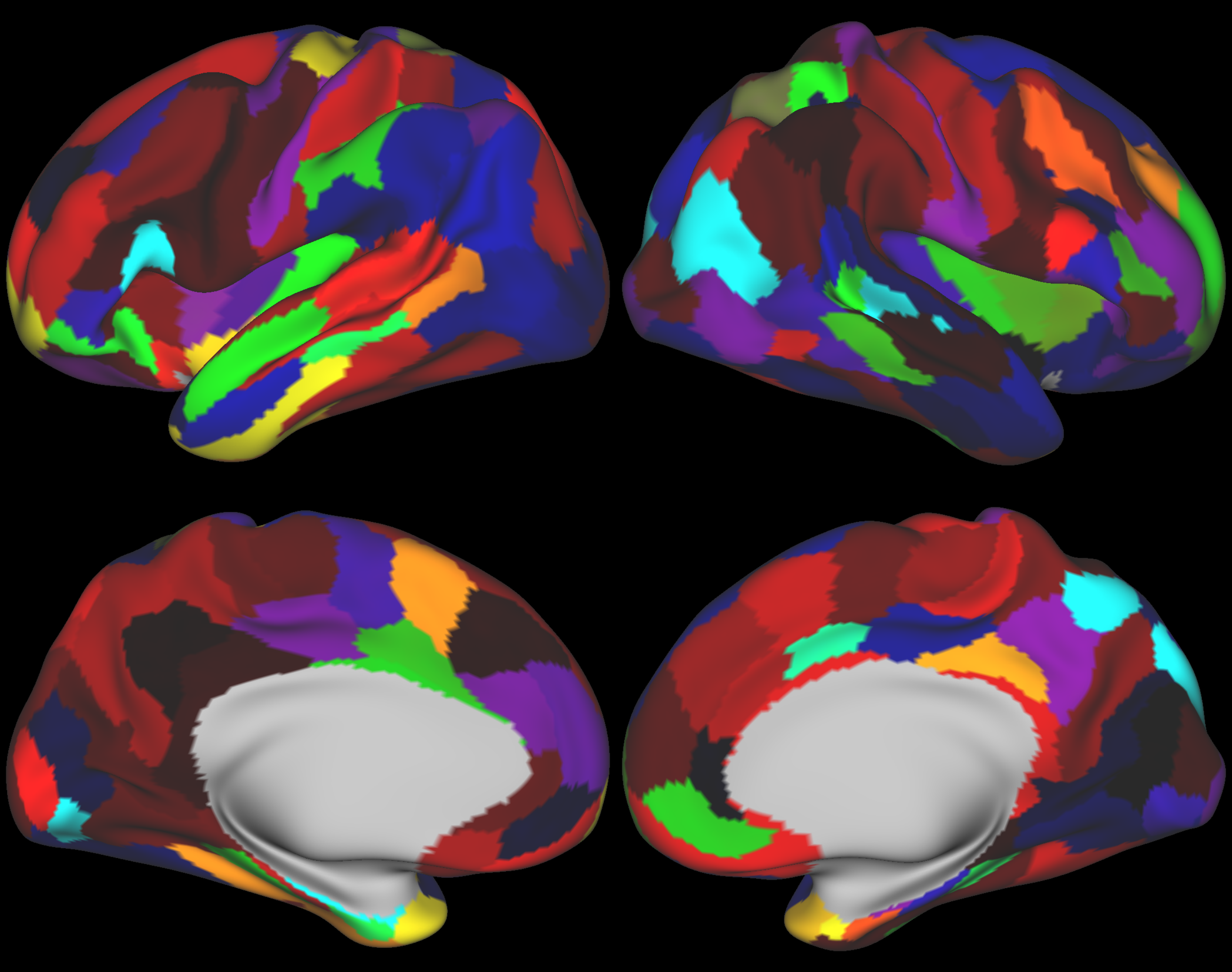}
     \end{subfigure}
  \hfill
  \begin{subfigure}[b]{0.37\textwidth}
    \includegraphics[width=\textwidth]{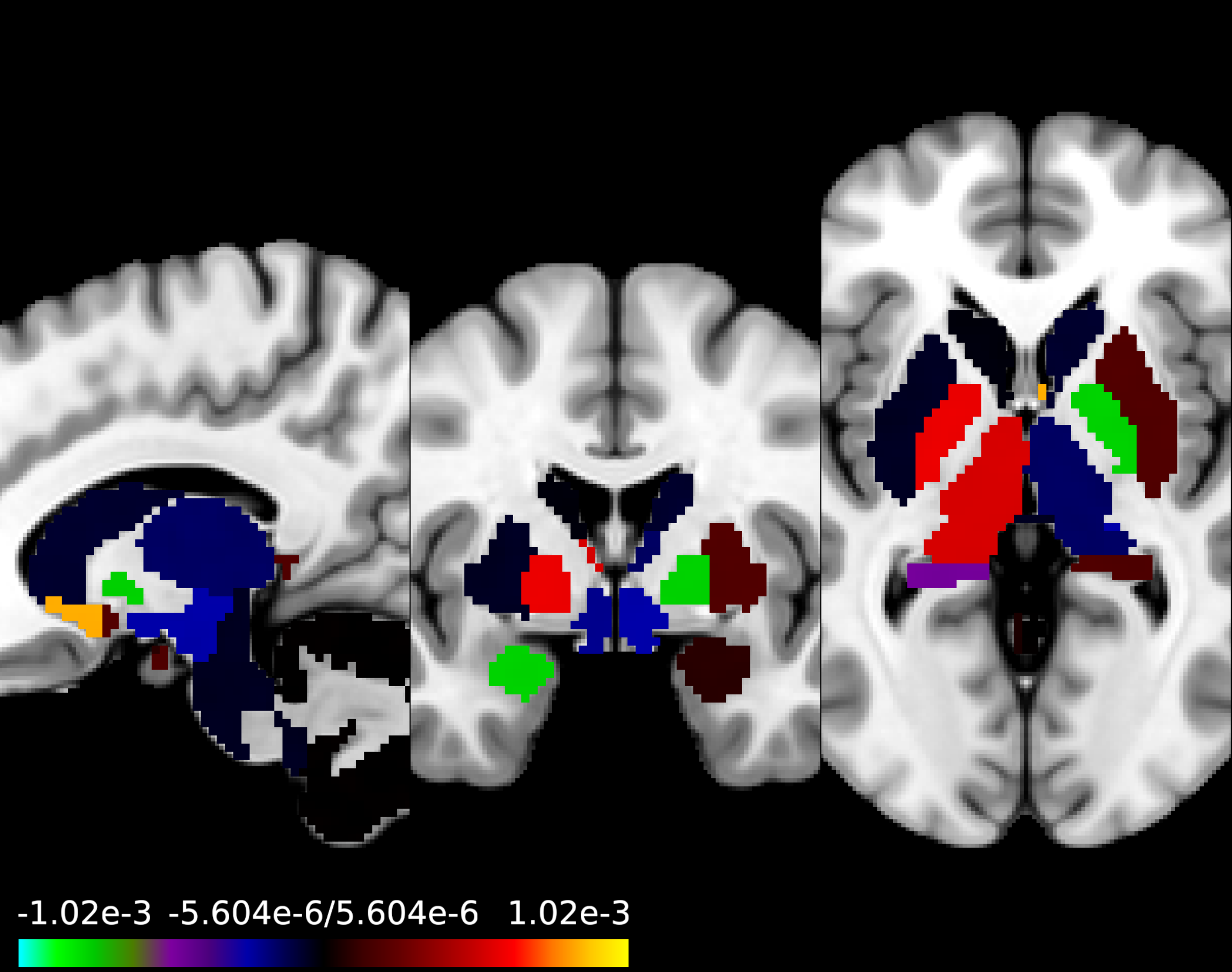}
    \end{subfigure}

\caption{From top to bottom: GRCCA coefficients for $\mu_1 = 1$ and $\lambda_1 = 1, 10, 100, 1000$. The third row represents the solution produced by the cross-validation procedure. Annotation of brain regions: [1] nucleus accumbens, [2] putamen, [3] thalamus, [4] temporal lobe, [5] dorsolateral prefrontal cortex, [6] dorsomedial prefrontal cortex, [7] posterior cingulate cortex, [8] precentral cortex.} 
  \label{fig:rde:GRCCA:coef}
\end{figure}

\section{General approach to regularization}
\label{genrcca}

It turns out that all RCCA, PRCCA and GRCCA methods are similar in nature: they perform regularization by means of adjusting covariance matrices $\widehat\Sigma_{XX}$ and/or $\widehat\Sigma_{YY}$ in the denominator of the modified correlation coefficient. In this section we consider the class of CCA problems with general weighted $\ell_2$ regularization. 

If $K_X, K_Y\in\mathbb{R}^{p \times p}$ are some positive semi-definite \textit{penalty matrices}, then the general modified correlation coefficient can be written as  
\begin{align}
    \rho(\alpha, \beta; K_X, K_Y) = \frac{\alpha^\top\widehat\Sigma_{XY}\beta}{\sqrt{\alpha^\top(\widehat\Sigma_{XX} + K_X)\alpha}~\sqrt{\beta^\top(\widehat\Sigma_{YY}+K_Y)\beta}}.
\label{genrcca:rho}
\end{align}

The accompanying \textit{general RCCA} optimization problem is therefore
\begin{align}
    &\text{maximize} ~\alpha^\top\widehat\Sigma_{XY}\beta~ \text{w.r.t.} ~\alpha\in\mathbb{R}^p~ \text{and} ~\beta\in\mathbb{R}^q \nonumber\\
    \text{subject to} ~\alpha^\top&\widehat\Sigma_{XX}\alpha = 1, ~\alpha^\top K_{X}\alpha\leq t_1~
    \text{and} ~\beta^\top\widehat\Sigma_{YY}\beta = 1, ~\beta^\top K_{Y}\beta\leq t_2.
\label{genrcca:op}
\end{align}

Note that the inequality constraints in (\ref{genrcca:op}) can be rewritten as $\|\alpha\|_{K_X}\leq t_1$ and $\|\beta\|_{K_Y}\leq t_2$, where $\|\cdot\|_A$ is weighted Euclidean norm defined as $\|x\|_A = x^\top Ax.$ The resulting canonical variates and coefficients can be found via the singular value decomposition of the matrix $\left(\widehat\Sigma_{XX} + K_X \right)^{-\frac12}\widehat\Sigma_{XY}\left(\widehat\Sigma_{YY} + K_Y \right)^{-\frac12}$. 
To handle General RCCA in high dimensions, one can link it to the two methods for which we already established the kernel trick. In the Supplement Section \ref{supp:genrcca:solution} we provide the proof of the following lemma.

\begin{lemma*}[General RCCA to RCCA/PRCCA]
If both $K_X$ and $K_Y$ are positive definite then, by some proper change of basis, the general RCCA problem can be reduced to the RCCA one. Alternatively, if one of $K_X$ and $K_Y$ has zero eigenvalues then general RCCA boils down to solving the PRCCA problem with number of unpenalized coefficients equal to the multiplicity of the zero eigenvalue.
\end{lemma*}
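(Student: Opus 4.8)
\emph{Proof plan.} The plan is to exhibit, on each side separately, an invertible linear change of variables that turns the penalty matrix into one of the two canonical shapes $I$ (the RCCA penalty) or $\operatorname{diag}(I,0)$ (the PRCCA penalty), and to check that such a change of variables leaves the general RCCA problem invariant up to relabelling of coefficients. Concretely, I would fix invertible $A_X\in\mathbb{R}^{p\times p}$, $A_Y\in\mathbb{R}^{q\times q}$, set $\mathbf{X}'=\mathbf{X}A_X$, $\mathbf{Y}'=\mathbf{Y}A_Y$, and substitute $\alpha=A_X\alpha'$, $\beta=A_Y\beta'$ in (\ref{genrcca:op}). Then $\widehat\Sigma_{X'X'}=A_X^\top\widehat\Sigma_{XX}A_X$ and $\widehat\Sigma_{X'Y'}=A_X^\top\widehat\Sigma_{XY}A_Y$ are genuine sample (cross-)covariance matrices for $\mathbf{X}',\mathbf{Y}'$, the objective becomes $\alpha'^\top\widehat\Sigma_{X'Y'}\beta'$, the normalization becomes $\alpha'^\top\widehat\Sigma_{X'X'}\alpha'=1$, and the penalty term becomes $\alpha'^\top(A_X^\top K_X A_X)\alpha'$; the deflation constraints transform the same way because they too are quadratic forms. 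Hence the two problems share the same canonical correlations and variates, with coefficients related by $\alpha=A_X\alpha'$. It therefore suffices to choose $A_X$ (and $A_Y$) so that $A_X^\top K_X A_X$ becomes $I_p$ or $\operatorname{diag}(I_{p_1},0)$.

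\emph{Case $K_X,K_Y\succ0$.} Here I would take $A_X=K_X^{-1/2}$ and $A_Y=K_Y^{-1/2}$, the inverses of the symmetric positive-definite square roots, so that $A_X^\top K_X A_X=I_p$ and $A_Y^\top K_Y A_Y=I_q$. The transformed problem is then literally the RCCA problem (\ref{rcca:op}) for $(\mathbf{X}K_X^{-1/2},\mathbf{Y}K_Y^{-1/2})$ with $\lambda_1=\lambda_2=1$; the value of the penalty constant is immaterial since any positive multiple of $I$ is an admissible RCCA penalty.

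\emph{Case $K_X$ (say) with a zero eigenvalue.} Let $p_2$ be its multiplicity and $p_1=p-p_2$, and write the spectral decomposition $K_X=Q\operatorname{diag}(D_1,0)Q^\top$ with $Q$ orthogonal and $D_1\in\mathbb{R}^{p_1\times p_1}$ diagonal positive definite. Taking $A_X=Q\operatorname{diag}(D_1^{-1/2},I_{p_2})$ gives $A_X^\top K_X A_X=\operatorname{diag}(I_{p_1},0)$, exactly the PRCCA penalty block with $p_1$ penalized and $p_2$ unpenalized coordinates; if $K_Y$ is also singular I would treat it identically, and if $K_Y\succ0$ I would reuse $A_Y=K_Y^{-1/2}$, which is the degenerate case of a PRCCA penalty with no unpenalized coordinate. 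Either way the transformed problem is the PRCCA problem (\ref{prcca:op}), with the original coefficients recovered through $\alpha=A_X\alpha'$, $\beta=A_Y\beta'$.

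\emph{Main obstacle.} The delicate point is that the reduced problem must be a \emph{well-posed} PRCCA instance: PRCCA requires the unpenalized block $\mathbf{X}'_2$ to be tall and of full column rank, equivalently the lower-right $p_2\times p_2$ block of $\widehat\Sigma_{X'X'}$ to be invertible. But that block equals the restriction of $\widehat\Sigma_{XX}$ to $\ker K_X$ (in the basis given by $Q$), and it must already be positive definite for the denominator of (\ref{genrcca:rho}) to be nonzero — so this costs nothing beyond what makes general RCCA well defined in the first place. Everything else — verifying that each quadratic form transforms as claimed and that positive definiteness of $\widehat\Sigma+K$ is preserved under the change of basis — is routine bookkeeping.
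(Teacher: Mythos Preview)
Your proposal is correct and follows essentially the same route as the paper: eigendecompose the penalty matrix, then rescale so that the positive eigenvalues become $1$ while the zero eigenspace is left alone, yielding the RCCA penalty $I$ or the PRCCA penalty $\operatorname{diag}(I_{p_1},0)$. The paper carries this out in two separate steps (first an orthogonal rotation $U$ diagonalizing $K_X$, then a diagonal rescaling $S^{-1}$), whereas you collapse them into a single invertible map $A_X$; in the positive-definite case you also shortcut directly to $K_X^{-1/2}$ without writing out the spectral factorization, and you explicitly flag the well-posedness requirement on the unpenalized block that the paper leaves implicit --- but these are cosmetic differences, not a different argument.
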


\section{Simulation study}
\label{simulation}

\subsection{Generating data with a group structure}
\label{simulation:data}

In this section we set up a small simulation experiment where we compare performance of all the above methods on the data with group structure. We generate the data as follows. For random vector $X$ we assume that it is grouped into $K$ groups of equal size, thereby having $p_k = \frac p K$ variables in group $k$. Each group of $X$ is generated by one of $K$ centroid random variables and is obtained by adding some Gaussian noise to the centroid. Moreover, we assume the presence of some correlation between the centroids and $Y$.
To be precise, to generate the data we exploit the multivariate normal distribution as a joint distribution of random vector $Y\in\mathbb{R}^q$ and random vector of centroids $X^c\in\mathbb{R}^K$: 
\begin{center}
$(Y, X^c)\sim\mathcal{N}_{q+K}(0, \Sigma)$ ~~with~~ $\Sigma = \begin{psmallmatrix}
I_{q} & \mathbb{1}\mathbb{1}^\top \sigma_{XY}^2\\
\mathbb{1}\mathbb{1}^\top \sigma_{XY}^2 & I_{K}
\end{psmallmatrix}.$
\end{center}
Next, we generate random vector $X_k\in\mathbb{R}^{p_k}$ corresponding to groups $k = 1,\ldots, K$ from the distribution
$X_k|X^c_k\sim\mathcal{N}_{p_k}(\mathbb{1}X^c_k,\sigma_X^2I),$
where $X_k^c$ is the $k$-th component of the centroid vector. Finally, we obtain matrices $\mathbf{X} = (\mathbf{X}_1, \ldots, \mathbf{X}_K)\in\mathbb{R}^{n\times p}$ and $\mathbf{Y}\in\mathbb{R}^{n\times q}$ by drawing $n$ samples from the above distributions. In our experiments we use $n = 10$, $p = 15$ and $q = 3$, the number of groups is $K = 5$. We set $\sigma_{X} = 1$ and test two settings: $\sigma_{XY} = 0.5$ for correlated data and
 $\sigma_{XY} = 0$ for independent data.

As the next step, we run RCCA, PRCCA and GRCCA on the generated data imposing the regularization on the $X$ part only and using the following hyperparameters.
For all methods the penalty factor is chosen to be $\lambda_1 = 10^{-5}, 10^{-4}, \ldots, 10^4, 10^5$.
For PRCCA we penalize $p_1 = 10$ variables only leaving $p_2 = 5$ variables untouched; these five unpenalized variables correspond to the first features in each group. For GRCCA we again try two versions. First we run GRCCA with $\mu_1 = 0$, i.e. without differentiating sparsity on a group level assumption. Next we add differentiating sparsity to GRCCA and vary the penalty factor in the range $\mu_1 = 10^{-4}, 10^{-3}, \ldots, 1, 10$.

We compare all methods in terms of resulting correlations. For this purpose, we generate $1000$ train and test sets, fit models on train and evaluate canonical correlation value on test. We plot average train and test correlations as well as their one standard error intervals vs. penalty factor $\lambda_1$ (see Fig. \ref{fig:traintest}). According to the plot, for the correlated data the best test score is achieved by sparse version of GRCCA, which significantly outperforms RCCA. Better performance can be explained by the presence of the groups structure in the data. Note that non-sparse GRCCA also looses in terms of the test score. The possible reason is that number of observations ($n = 10$) is only twice as large as the number of groups ($K = 5$), so regularization on a group level helps to prevent overfitting to train data. In the case of independent data, all the competitors perform in a similar way: the average test correlation is very close to zero regardless the hyperparameter value; the test correlation curves are almost flat.    

\begin{figure}[h!]
\centering
\includegraphics[width = \textwidth]{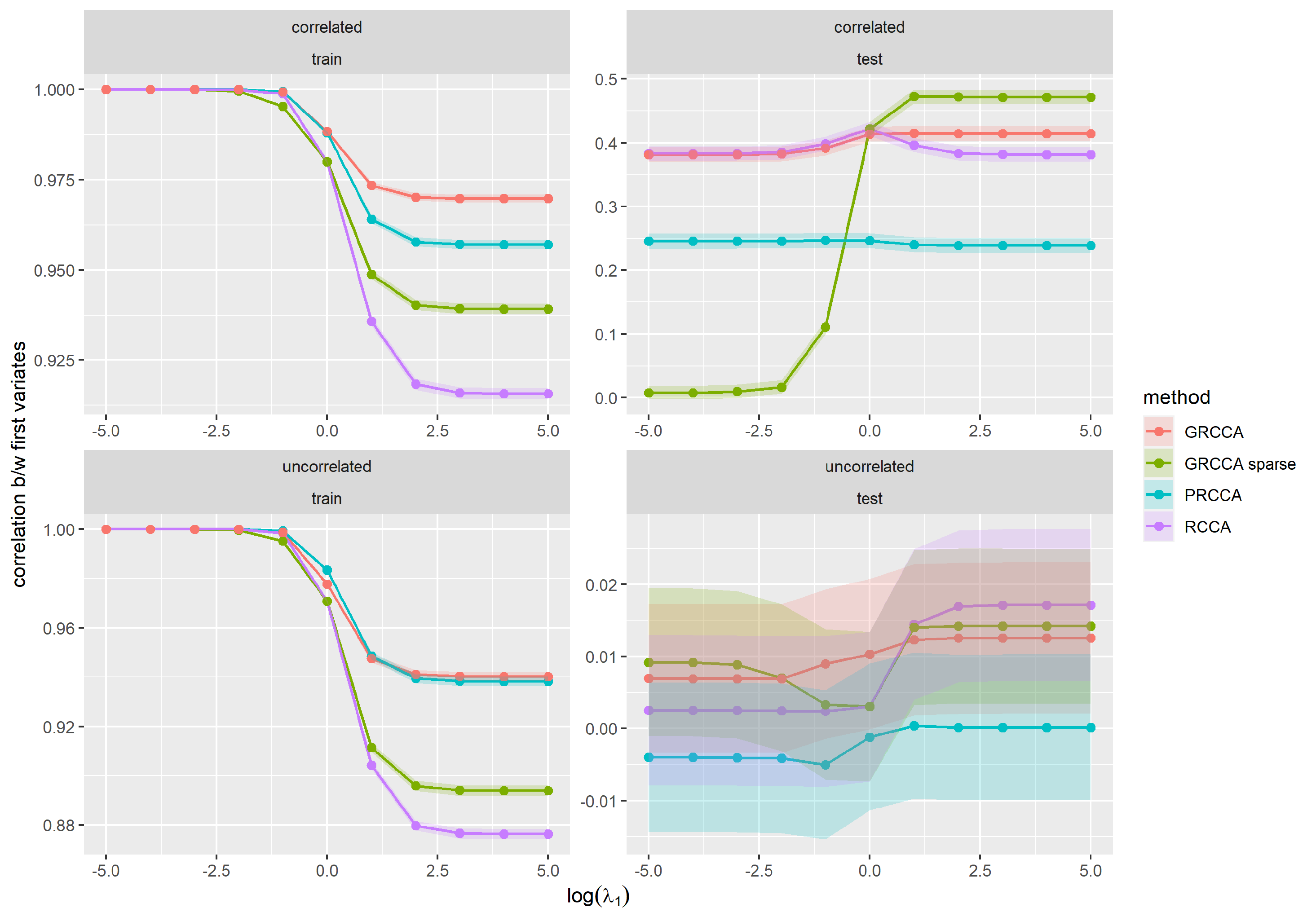}
\caption{Train and test curves computed via simulation. Four models presented: RCCA, PRCCA and GRCCA with zero ($\mu_1 = 0$) and non-zero sparsity ($\mu_1 = 1$). First row: train and test correlation obtained for data with correlation ($\sigma_{XY} = 0.5$). Second row: train and test correlation obtained for uncorrelated data ($\sigma_{XY} = 0$).}
\label{fig:traintest}
\end{figure}

\begin{figure}[h!]
  \centering
    \includegraphics[width=0.9\linewidth]{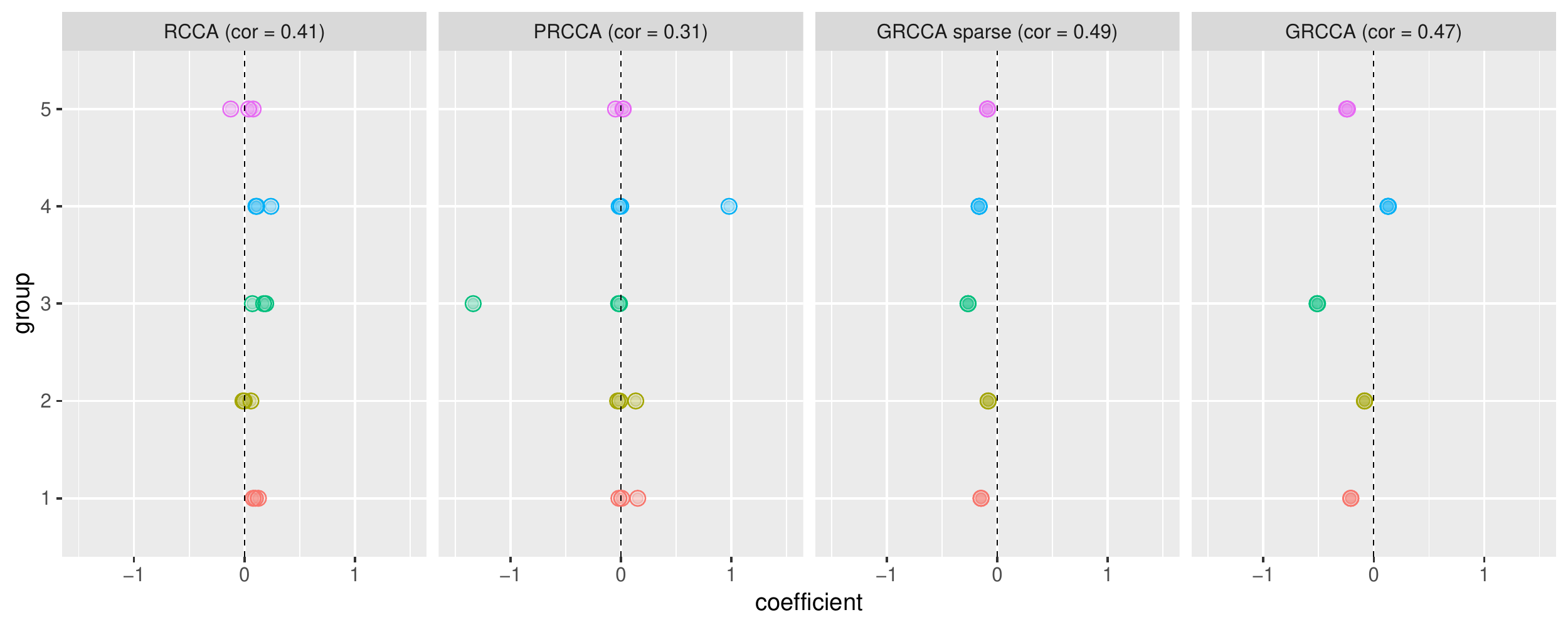}
  \caption{Coefficient values obtained via RCCA, PRCCA and two versions of GRCCA with hyperparameter~$\lambda_1$ chosen to maximize the test correlation. In this plot color corresponds to the group number.}
  \label{fig:coefs}
\end{figure}

Finally, for each model we pick the value of $\lambda_1$ according to the maximum test score and compare the CCA coefficients $\alpha$ for the chosen models. Figure \ref{fig:coefs} displays the main difference between the RCCA and GRCCA methods.
Although both techniques aim to reduce the data dimensionality, the reduction is achieved in a different way. Unlike RCCA, which treats all the coefficients equally and confines their deviation from zero, GRCCA carries out the reduction treating equally the coefficients inside each group and removing the within group noise. To sum up, in the presence of a group structure the group modification of RCCA allows for dimensionality reduction in a more efficient and interpretable way.

\section{Discussion}
\label{discussion}
In this paper we proposed several approaches to the CCA regularization. The introduced PRCCA technique has a similar flavor as RCCA, but it penalizes only a subset of canonical coefficients. Both of these methods combined with the proposed kernel trick allows us to find the CCA solution even in case of extremely high dimensional data.
We further present the GRCCA method, which is based on the underlying group structure of the data, and extend regularization to the case of a more general regularization penalty thereby proposing General RCCA. The close connection between the latter techniques with RCCA and PRCCA methods enables to utilize the kernel trick in the general case thus providing a powerful tool for regularizing CCA in the high-dimensional framework.    

There is still much scope for future work. One interesting direction for further research is to consider other applications of the proposed group RCCA technique. For example, there are many problems in genetics where genes are grouped by functional similarity. Further, in this paper we cover only two types of penalties: partial and group; although the proposed kernel trick can handle any $\ell_2$-type penalty (see Section \ref{genrcca} for general RCCA). Thus we can study other structured modifications of RCCA that can be beneficial for applications. As an example, it may be interesting to explore hierarchical group structure, where not only brain loci are combined in some regions, but also regions are combined in some groups (e.g. we have cortical and subcortical groups in the HCP study).  

From the computational point of view, it would be useful to investigate how one can optimize the choice of the hyperparameters. The following idea is inspired by ridge regression, which also uses the $\ell_2$ penalty. 
Note that currently it is not necessary to apply any data normalization before running regularized CCA (all the fMRI features have the same scale). However, the overall scale of $X$ influences 
the choice of the hyperparameters, i.e. multiplying $X$ by some number $a$ implies increasing the penalty factors by $a^2$ times as well. Therefore, it would be beneficial to develop some recommendations for the grid of hyperparameters the user should search through. For instance, we can use the ridge regression heuristic and introduce a concept of degrees-of-freedom (aka $df$) for CCA with regularization. Then we can base the hyperparameter recommendation on the $df$ value. 

\section{Software}

Proposed methods are implemented in the R package {\tt RCCA}; the software is available from Github
(\url{https://github.com/ElenaTuzhilina/RCCA}).

\section{Funding}
 Leonardo Tozzi was supported by grant U01MH109985 under PAR-14-281 from the National Institutes of Health. Trevor Hastie was partially supported by grants DMS-2013736 and IIS
1837931 from the National Science Foundation, and grant 5R01 EB
001988-21 from the National Institutes of Health.

\begin{center}
\section*{\huge{Supplementary materials}}
\end{center}

\section{Proof of RCCA kernel trick lemma}
\label{supp:rcca:kernel}

\begin{lemma*}[RCCA kernel trick]
The original RCCA problem stated for $\mathbf{X}$ and $\mathbf{Y}$ can be reduced to solving the RCCA problem for $\mathbf{R}$ and $\mathbf{Y}$. The resulting canonical correlations and variates for these two problems coincide. The canonical coefficients for the original problem can be recovered via the linear transformation $\alpha_X = V\alpha_R.$
\end{lemma*}

\begin{proof}
Denote $V^\perp$ an orthogonal complement of matrix $V$, i.e. the matrix $V^\perp\in~\mathbb{R}^{p \times (p-n)}$ such that $\widetilde V = ~\left(V,V^\perp\right)\in ~\mathbb{R}^{p \times p}$ is a full-rank orthogonal matrix. Then 
$V^\top \widetilde V =~\left(I_n, 0\right).$ 
Denote also 
$$\gamma = \widetilde V^\top\alpha = 
\begin{psmallmatrix}
V^\top \alpha\\
\left(V^\perp\right)^\top\alpha
\end{psmallmatrix}=
\begin{psmallmatrix}
\gamma_1\\
\gamma_2
\end{psmallmatrix}.$$
Note that since there is a one-to-one correspondence betwineen $\alpha$ and $\gamma,$ the optimization of $\rho_{RCCA}(\alpha, \beta; \lambda_1, \lambda_2)$ w.r.t. to $\alpha$ is equivalent to optimization w.r.t. $\gamma$.
Further, the following relation is true
\begin{align*}
&\alpha^\top \frac {\mathbf{X}^\top\mathbf{Y}}{n}\beta = \gamma_1^\top\frac{\mathbf{R}^\top\mathbf{Y}}{n}\beta\\
& \alpha^\top\left(\frac{\mathbf{X}^\top\mathbf{X}}{n}+\lambda_1 I\right)\alpha = \gamma_1^\top\frac{\mathbf{R}^\top\mathbf{R}}{n}\gamma_1 + \lambda_1\gamma_1^\top\gamma_1 + \lambda_1\gamma_2^\top\gamma_2
\end{align*}
Therefore, the correlation coefficient (\ref{rcca:rho}) with $\lambda_2 = 0$ can be rewritten in terms of $R$ and $Y$ as
$$\rho_{RCCA}(\gamma, \beta;\lambda_1) = \dfrac{\gamma_1^\top\widehat\Sigma_{RY}\beta}{\sqrt{\gamma_1^\top(\widehat\Sigma_{RR}+\lambda_1 I)\gamma_1 + \lambda_1\|\gamma_2\|^2}\sqrt{\beta^\top\widehat\Sigma_{YY}\beta}}.$$
It is easy to show that the maximum value of $\rho_{RCCA}(\gamma, \beta;\lambda_1)$ is attained when $\gamma_2 = 0,$ so the above correlation coefficient is nothing but the RCCA correlation coefficient computed for $\mathbf{R}$ and $\mathbf{Y}$. Furthermore, the optimal value of $\alpha_X = \alpha$ can be recovered from $\alpha_R = \gamma_1$ by $\alpha = V\gamma_1$ and, since $\mathbf{X}\alpha = \mathbf{R} \gamma_1$, the canonical variates computed for $\mathbf{X}$ coincide with the ones computed for~$\mathbf{R}$.
\end{proof}

\section{Proof of PRCCA Kernel trick}
\label{supp:prcca:kernel}

\begin{lemma*}[PRCCA Kernel Trick]
The original PRCCA problem stated for $\mathbf{X}$ and $\mathbf{Y}$ can be reduced to solving the PRCCA problem for $\mathbf{R} = \begin{psmallmatrix}
\mathbf{R}_1\\
\mathbf{X}_2
\end{psmallmatrix}\in\mathbb{R}^{n + p_2}$ and $\mathbf{Y}$. 
The resulting canonical correlations and variates for these two problems coincide. The canonical coefficients for the original problem can be recovered via the linear transformation 
$\alpha_X = A\begin{psmallmatrix}
V_1 & 0\\
0&I
\end{psmallmatrix}\alpha_{R}.$
\end{lemma*}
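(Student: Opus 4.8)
The plan is to run the argument in two stages: first a \emph{non-orthogonal} pre-processing of $\mathbf{X}$ that makes its two blocks orthogonal while leaving the partial $\ell_2$ penalty on $\alpha_1$ completely intact, and then the ordinary RCCA kernel trick of Section~\ref{supp:rcca:kernel} applied to the (now isolated) high-dimensional block. As in Section~\ref{prcca:kernel} I set $\lambda_2=0$. For the first stage I would write the transformation $A$ explicitly by regressing $\mathbf{X}_1$ onto the column space of $\mathbf{X}_2$:
$$A = \begin{psmallmatrix} I_{p_1} & 0 \\ -(\mathbf{X}_2^\top\mathbf{X}_2)^{-1}\mathbf{X}_2^\top\mathbf{X}_1 & I_{p_2}\end{psmallmatrix},$$
which is invertible (block–triangular with identity diagonal blocks) and satisfies $\mathbf{X}A = \bwidetilde{\mathbf{X}} = (\bwidetilde{\mathbf{X}}_1,\mathbf{X}_2)$ with $\bwidetilde{\mathbf{X}}_1=(I-\mathbf{X}_2(\mathbf{X}_2^\top\mathbf{X}_2)^{-1}\mathbf{X}_2^\top)\mathbf{X}_1$ and $\bwidetilde{\mathbf{X}}_1^\top\mathbf{X}_2=0$. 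Substituting $\alpha=A\widetilde\alpha$ gives $\mathbf{X}\alpha=\bwidetilde{\mathbf{X}}\widetilde\alpha$, hence the numerator and the quadratic form $\alpha^\top\widehat\Sigma_{XX}\alpha$ are invariant; crucially, because the top block of $A$ is $(I_{p_1}\ 0)$ we get $\alpha_1=\widetilde\alpha_1$, so $\|\alpha_1\|^2=\|\widetilde\alpha_1\|^2$ and the penalty is unchanged. Since $A$ is a bijection, this shows $\rho_{PRCCA}$ for $(\mathbf{X},\mathbf{Y})$ equals $\rho_{PRCCA}$ for $(\bwidetilde{\mathbf{X}},\mathbf{Y})$ and the two PRCCA problems are equivalent (equivariant in the coefficients, equal objectives).

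Next I would use that $\widehat\Sigma_{\widetilde{X}\widetilde{X}}$ is now block–diagonal, so the modified correlation coefficient splits as
$$\rho_{PRCCA} = \frac{\widetilde\alpha_1^\top\widehat\Sigma_{\widetilde{X}_1 Y}\beta + \widetilde\alpha_2^\top\widehat\Sigma_{X_2 Y}\beta}{\sqrt{\widetilde\alpha_1^\top(\widehat\Sigma_{\widetilde{X}_1\widetilde{X}_1}+\lambda_1 I_{p_1})\widetilde\alpha_1 + \widetilde\alpha_2^\top\widehat\Sigma_{X_2 X_2}\widetilde\alpha_2}\,\sqrt{\beta^\top\widehat\Sigma_{YY}\beta}},$$
and then apply the RCCA-kernel computation to the first summand only. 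Writing $\bwidetilde{\mathbf{X}}_1=\mathbf{R}_1 V_1^\top$ with $V_1^\top V_1=I_n$, completing $V_1$ to an orthogonal $\widetilde V_1=(V_1,V_1^\perp)$, and setting $\gamma_1=\widetilde V_1^\top\widetilde\alpha_1=(\gamma_{11},\gamma_{12})$, the same manipulation as in Section~\ref{supp:rcca:kernel} yields $\widetilde\alpha_1^\top\widehat\Sigma_{\widetilde{X}_1 Y}\beta=\gamma_{11}^\top\widehat\Sigma_{R_1 Y}\beta$ and $\widetilde\alpha_1^\top(\widehat\Sigma_{\widetilde{X}_1\widetilde{X}_1}+\lambda_1 I_{p_1})\widetilde\alpha_1=\gamma_{11}^\top(\widehat\Sigma_{R_1 R_1}+\lambda_1 I_n)\gamma_{11}+\lambda_1\|\gamma_{12}\|^2$. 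Thus $\gamma_{12}$ enters only the denominator with the nonnegative weight $\lambda_1$, so the optimum is attained at $\gamma_{12}=0$, and setting $\gamma_{12}=0$ the objective becomes exactly the PRCCA modified correlation coefficient \eqref{prcca:rho} for the feature matrix $\mathbf{R}=(\mathbf{R}_1,\mathbf{X}_2)$, with $n$ penalized coordinates (those of $\mathbf{R}_1$) and $p_2$ unpenalized ones (those of $\mathbf{X}_2$); note that $\mathbf{R}_1^\top\mathbf{X}_2=0$ follows from $V_1\mathbf{R}_1^\top\mathbf{X}_2=\bwidetilde{\mathbf{X}}_1^\top\mathbf{X}_2=0$ and full column rank of $V_1$, so $\mathbf{R}$ again has orthogonal blocks.

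Finally I would unwind the two substitutions. If $\alpha_R=(\gamma_{11},\widetilde\alpha_2)$ solves the $\mathbf{R}$-problem, then $\gamma_{12}=0$ gives $\widetilde\alpha_1=V_1\gamma_{11}$, i.e. $\widetilde\alpha=\begin{psmallmatrix}V_1&0\\0&I\end{psmallmatrix}\alpha_R$, whence $\alpha_X=A\widetilde\alpha=A\begin{psmallmatrix}V_1&0\\0&I\end{psmallmatrix}\alpha_R$, which is the claimed recovery map. Moreover $\mathbf{X}\alpha_X=\bwidetilde{\mathbf{X}}\widetilde\alpha=\mathbf{R}_1 V_1^\top V_1\gamma_{11}+\mathbf{X}_2\widetilde\alpha_2=\mathbf{R}\alpha_R$ while $\beta$ is untouched, so the canonical variates coincide and hence so do all the canonical correlations; the mutual-orthogonality constraints on successive PRCCA pairs transfer automatically because they are expressed through the same (now equal) quadratic forms, and the bijectivity of $A$ and of $\widetilde V_1$ together with the $\gamma_{12}=0$ reduction shows the full ordered sequences of solutions correspond. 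The routine SVD computations can be cited from the RCCA/PRCCA derivations. I expect the main obstacle to be purely organizational rather than deep: one must verify carefully that the identity top-left block of $A$ is exactly what preserves the \emph{partial} penalty under a non-orthogonal change of basis, and that the ``$\gamma_{12}=0$'' step remains valid when carried out jointly with the surviving unpenalized block $\widetilde\alpha_2$ — both are short once the block structure is written out, so there is no genuinely hard step.
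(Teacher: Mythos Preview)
Your proposal is correct and follows essentially the same two-stage argument as the paper: the same explicit choice of $A$ (regressing $\mathbf{X}_2$ out of $\mathbf{X}_1$) to orthogonalize the blocks while preserving the partial penalty, followed by the RCCA kernel reduction on the high-dimensional block via $\bwidetilde{\mathbf{X}}_1=\mathbf{R}_1V_1^\top$ and the observation that the $V_1^\perp$-component enters only the denominator. The only cosmetic difference is that you verify penalty invariance directly via $\alpha_1=\widetilde\alpha_1$ whereas the paper checks the equivalent matrix identity $A^{-T}\begin{psmallmatrix}I_{p_1}&0\\0&0\end{psmallmatrix}A^{-1}=\begin{psmallmatrix}I_{p_1}&0\\0&0\end{psmallmatrix}$.
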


\begin{proof}
To find the required linear transformation $A$ we first regress $\mathbf{X}_2$ from $\mathbf{X}_1$. We denote the matrix of regression coefficients by
$B = (\mathbf{X}_2^\top\mathbf{X}_2)^{-1}\mathbf{X}_2^\top\mathbf{X}_1\in\mathbb{R}^{p_2\times p_1}$
and set $A = \begin{psmallmatrix}
I & 0\\
-B & I
\end{psmallmatrix}.$ This transformation leads to the following transformed matrix 
$$\bwidetilde{\mathbf{X}} = \mathbf{X} A = (\bwidetilde{\mathbf{X}}_1, \bwidetilde{\mathbf{X}}_2) = \left(\mathbf{X}_1 - \mathbf{X}_2 B, \mathbf{X}_2\right).$$

It is easy to check that matrix $A$ is invertible and that the following relations hold  
\begin{center}
$A^{-1} = \begin{psmallmatrix}
I & 0\\
B & I
\end{psmallmatrix}~$
and  $~A^{-T}\begin{psmallmatrix}
I_{p_1} & 0\\
0 & 0
\end{psmallmatrix}A^{-1} = \begin{psmallmatrix}
I_{p_1} & 0\\
0 & 0
\end{psmallmatrix}.$    
\end{center}
Thus, if we denote $\widetilde\alpha = A^{-1}\alpha$, then
\begin{align*}
&\alpha^\top \frac {\mathbf{X}^\top\mathbf{Y}}{n}\beta =  \widetilde{\alpha}^\top \frac {\bwidetilde{\mathbf{X}}^\top\mathbf{Y}}{n}\beta\\
& \alpha^\top\left(\frac{\mathbf{X}^\top\mathbf{X}}{n}+\lambda_1 \begin{psmallmatrix}
I_{p_1} & 0\\
0 & 0
\end{psmallmatrix}\right)\alpha =\widetilde{\alpha}^\top\left(\frac{\bwidetilde{\mathbf{X}}^\top\bwidetilde{\mathbf{X}}}n+\lambda_1\begin{psmallmatrix}
I_{p_1} & 0\\
0 & 0
\end{psmallmatrix}\right)\widetilde{\alpha}
\end{align*}
The above equations imply that the PRCCA correlation coefficient (\ref{prcca:rho}) with $\lambda_2 = 0$ can be rewritten in terms of $\bwidetilde{\mathbf{X}}$ and~$\mathbf{Y}$
$$\rho_{PRCCA}(\widetilde\alpha, \beta; \lambda_1) = \dfrac{\widetilde\alpha^\top\widehat\Sigma_{\widetilde{X}Y}\beta}{\sqrt{\widetilde\alpha^\top\left(\widehat\Sigma_{\widetilde X\widetilde X}+\lambda_1 \begin{psmallmatrix}
I_{p_1} & 0\\
0 & 0
\end{psmallmatrix} 
\right)\widetilde\alpha}~\sqrt{\beta^\top\widehat\Sigma_{YY}\beta\vphantom{\begin{psmallmatrix}
I_{p_1} & 0\\
0 & 0
\end{psmallmatrix}}}}.$$

Next, let $\widetilde\alpha = \begin{psmallmatrix}
\scriptstyle\widetilde{\alpha}_1\\
\scriptstyle\widetilde{\alpha}_2
\end{psmallmatrix},$
where $\widetilde\alpha_1\in\mathbb{R}^{p_1}$ and $\widetilde\alpha_2\in\mathbb{R}^{p_2}$ correspond to blocks $\bwidetilde{\mathbf{X}}_1$ and $\widetilde{\mathbf{X}}_2$, respectively. Denote the orthogonal complement of $V_1$ by $V_1^\perp\in\mathbb{R}^{p_1\times (p_1 - n)}$ and 
consider the following transformation of the PRCCA coefficients 
\begin{center}
$\gamma_{1} = V_1^\top\widetilde\alpha_1$ and $\gamma_{2} =  \left(V_1^\perp\right)^\top\widetilde\alpha_1 $   
\end{center}
as well as the concatenation $\gamma = \begin{psmallmatrix} \gamma_{1}\\
\widetilde{\scriptstyle{\alpha}}_{2}
\end{psmallmatrix}$. Then, by analogy one can show that
\begin{align*}
&\widetilde\alpha^\top \frac{\bwidetilde{\mathbf{X}}^\top \mathbf{Y}}{n}\beta =\gamma_{1}^\top\frac{\mathbf{R}_1^\top\mathbf{Y}}{n}\beta + \widetilde\alpha_2^\top\frac{\bwidetilde{\mathbf{X}}_2^\top\mathbf{Y}}{n}\beta =  \gamma^\top\frac{\mathbf{R}^\top\mathbf{Y}}{n}\beta\\
&\widetilde\alpha^\top\left(\frac{\bwidetilde{\mathbf{X}}^\top\bwidetilde{\mathbf{X}}}{n} + \lambda_1 \begin{psmallmatrix}
I_{p_1} & 0\\
0 & 0
\end{psmallmatrix}\right)\widetilde\alpha
 =\gamma_1^\top\left(\frac{\mathbf{R}_1^\top\mathbf{R}_1}{n} + \lambda_1 I\right)\gamma_1 + \lambda_1 \gamma_2^\top\gamma_2+
\widetilde\alpha_2^\top\frac{\bwidetilde{ \mathbf{X}}_2^\top\bwidetilde{\mathbf{X}}_2}{n}\widetilde\alpha_2=\\
& =\gamma^\top\left(\frac{\mathbf{R}^\top\mathbf{R}}{n} + \lambda_1 \begin{psmallmatrix}
I_{p_1} & 0\\
0 & 0
\end{psmallmatrix}\right)\gamma+\lambda_1\gamma_2^\top\gamma_2
\end{align*}
where the last equation holds since $\mathbf{R}_1^\top \bwidetilde{\mathbf{X}}_2=  V_1^\top \widetilde{\mathbf{X}}_1^\top \widetilde{\mathbf{X}}_2 = 0$. 

Again, we can ignore  $\lambda_1\gamma_2^\top\gamma_2$ term as it is present only in the denominator of PRCCA correlation coefficient, which, therefore, can be rewritten in terms of $\mathbf{R}$ and $\mathbf{Y}$ as
$$\rho_{PRCCA}(\gamma, \beta;\lambda_1) = \frac{\gamma^\top\widehat\Sigma_{RY}\beta}{\sqrt{\gamma^\top\left(\widehat\Sigma_{RR}+\lambda_1 \begin{psmallmatrix}
I_{p_1} & 0\\
0 & 0
\end{psmallmatrix} 
\right)\gamma}~\sqrt{\beta^\top\widehat\Sigma_{YY}\beta\vphantom{\begin{psmallmatrix}
I_{p_1} & 0\\
0 & 0
\end{psmallmatrix} }}},$$
which is exactly the PRCCA correlation coefficient computed for $\mathbf{R}$ and $\mathbf{Y}$.
The optimal value of $\alpha_X = \alpha$ for the original problem can be recovered from $
\alpha_R = \gamma$ by 
$$\alpha = A\widetilde\alpha = A\begin{psmallmatrix}
V_1 \gamma_1\\
\scriptstyle\widetilde\alpha_2
\end{psmallmatrix} = A\begin{psmallmatrix}
V_1 & 0\\
0&I
\end{psmallmatrix}\gamma.$$ 
Moreover, 
$$\mathbf{X}\alpha = \bwidetilde{\mathbf{X}} \widetilde\alpha = 
(\mathbf{R}_1, \bwidetilde{\mathbf{X}}_2)\begin{psmallmatrix}
V_1^\top\widetilde{\alpha}_1\\
\scriptstyle\widetilde{\alpha}_2
\end{psmallmatrix} = \mathbf{R}\gamma,$$ so the canonical variates computed for $\mathbf{X}$ coincide with the ones computed for $\mathbf{R}$.
\end{proof}

\section{Proof of General RCCA to RCCA/PRCCA lemma}
\label{supp:genrcca:solution}

\begin{lemma*}[General RCCA to RCCA/PRCCA]
If both $K_X$ and $K_Y$ are positive definite then, by some proper change of basis, the general RCCA problem can be reduced to the RCCA one. Alternatively, if one of $K_X$ and $K_Y$ has zero eigenvalues then general RCCA boils down to solving the PRCCA problem with number of unpenalized coefficients equal to the multiplicity of zero eigenvalue.
\end{lemma*}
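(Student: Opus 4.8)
The plan is to reduce the general problem to one of the earlier ones by an invertible linear change of the coefficient variables. The key observation is that the objective (\ref{genrcca:rho}) behaves equivariantly under such changes: if $M_X\in\mathbb{R}^{p\times p}$ and $M_Y\in\mathbb{R}^{q\times q}$ are invertible, then substituting $\alpha = M_X\widetilde{\alpha}$ and $\beta = M_Y\widetilde{\beta}$ is the same as replacing the data by $\widetilde{\mathbf{X}} = \mathbf{X}M_X$ and $\widetilde{\mathbf{Y}} = \mathbf{Y}M_Y$, since then $\widehat\Sigma_{\widetilde{X}\widetilde{X}} = M_X^\top\widehat\Sigma_{XX}M_X$, $\widehat\Sigma_{\widetilde{X}\widetilde{Y}} = M_X^\top\widehat\Sigma_{XY}M_Y$ and $\widehat\Sigma_{\widetilde{Y}\widetilde{Y}} = M_Y^\top\widehat\Sigma_{YY}M_Y$, while the penalty terms become $\alpha^\top K_X\alpha = \widetilde{\alpha}^\top(M_X^\top K_X M_X)\widetilde{\alpha}$ and $\beta^\top K_Y\beta = \widetilde{\beta}^\top(M_Y^\top K_Y M_Y)\widetilde{\beta}$. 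Hence the value of the modified correlation, and therefore the whole spectrum of canonical correlations (the singular values), is preserved; the canonical variates are preserved because $\mathbf{X}\alpha = \widetilde{\mathbf{X}}\widetilde{\alpha}$ and $\mathbf{Y}\beta = \widetilde{\mathbf{Y}}\widetilde{\beta}$; and the original coefficients are recovered by $\alpha = M_X\widetilde{\alpha}$, $\beta = M_Y\widetilde{\beta}$. This is exactly the bookkeeping already used in the PRCCA kernel proof of Section~\ref{supp:prcca:kernel}. So it suffices to choose $M_X$ and $M_Y$ that bring $M_X^\top K_X M_X$ and $M_Y^\top K_Y M_Y$ into one of the canonical shapes $I$ (as in RCCA) or $\begin{psmallmatrix} I & 0\\ 0 & 0\end{psmallmatrix}$ (as in PRCCA).

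If $K_X\succ 0$, take $M_X = K_X^{-1/2}$, the inverse of the symmetric positive-definite square root (equivalently an inverse Cholesky factor): it is invertible and $M_X^\top K_X M_X = I_p$. Choosing $M_Y = K_Y^{-1/2}$ likewise, the transformed problem is precisely the RCCA problem (\ref{rcca:rho}) with $\lambda_1 = \lambda_2 = 1$ for the data $(\widetilde{\mathbf{X}},\widetilde{\mathbf{Y}})$, whose solution is given by the SVD of $\bigl(\widehat\Sigma_{\widetilde{X}\widetilde{X}}+I\bigr)^{-1/2}\widehat\Sigma_{\widetilde{X}\widetilde{Y}}\bigl(\widehat\Sigma_{\widetilde{Y}\widetilde{Y}}+I\bigr)^{-1/2}$ and is automatically well defined because $\widehat\Sigma_{\widetilde{X}\widetilde{X}}+I\succ 0$ and $\widehat\Sigma_{\widetilde{Y}\widetilde{Y}}+I\succ 0$.

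If instead $K_X$ is only positive semi-definite, with $p_2\ge 1$ zero eigenvalues, write a spectral decomposition $K_X = Q_X\operatorname{diag}(\Lambda,\,0)Q_X^\top$, where $Q_X$ is orthogonal, $\Lambda\in\mathbb{R}^{p_1\times p_1}$ is the diagonal matrix of strictly positive eigenvalues, $0$ is the $p_2\times p_2$ zero block, and $p_1+p_2 = p$. Set $M_X = Q_X\begin{psmallmatrix} \Lambda^{-1/2} & 0\\ 0 & I_{p_2}\end{psmallmatrix}$; this matrix is invertible and a direct computation gives $M_X^\top K_X M_X = \begin{psmallmatrix} I_{p_1} & 0\\ 0 & 0\end{psmallmatrix}$, which is exactly the PRCCA penalty with $p_2$ unpenalized coefficients, as claimed. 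On the $Y$ side: if $K_Y$ also has zero eigenvalues we repeat the same construction; if $K_Y\succ 0$ we take $M_Y = K_Y^{-1/2}$, which is the PRCCA penalty with $q_2 = 0$ unpenalized $Y$-coordinates, a degenerate instance of PRCCA that simply penalizes the whole $Y$-block. In every case the transformed problem is a PRCCA problem on $(\widetilde{\mathbf{X}},\widetilde{\mathbf{Y}})$, and the coefficients are recovered by $\alpha = M_X\widetilde{\alpha}$, $\beta = M_Y\widetilde{\beta}$.

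I do not anticipate a real obstacle; beyond elementary linear algebra the only thing to verify is that the reduced problem is well posed. In the semi-definite case the PRCCA solution involves $\bigl(\widehat\Sigma_{\widetilde{X}\widetilde{X}}+\begin{psmallmatrix} I & 0\\ 0 & 0\end{psmallmatrix}\bigr)^{-1/2}$, so one needs the lower-right $p_2\times p_2$ block of $\widehat\Sigma_{\widetilde{X}\widetilde{X}}$ — equivalently, the restriction of $\widehat\Sigma_{XX}$ to the null space of $K_X$ — to be positive definite; this is precisely the PRCCA identifiability condition that $\widetilde{\mathbf{X}}_2$ be tall and of full rank, and it is inherited from the standing assumption that $\widehat\Sigma_{XX}+K_X$ is invertible. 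The remaining point worth making explicit is that the factorizations ($K_X^{-1/2}$, or $Q_X$ and $\Lambda$) are not unique, but any admissible choice yields the same canonical correlations and the same canonical variates, so the reduction is unambiguous.
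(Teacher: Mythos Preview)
Your proposal is correct and follows essentially the same route as the paper: eigendecompose the penalty matrix and rescale by the inverse square roots of the positive eigenvalues, so that the transformed penalty becomes $I$ (RCCA with $\lambda=1$) or $\begin{psmallmatrix} I & 0\\ 0 & 0\end{psmallmatrix}$ (PRCCA with $\lambda=1$). The only cosmetic differences are that the paper splits the change of basis into two successive steps (first $U$, then $S^{-1}$) and works under the simplifying assumption $K_Y=0$, whereas you fold both steps into a single $M_X$ and treat the $Y$ side symmetrically; your explicit remark that the PRCCA identifiability condition on $\widetilde{\mathbf{X}}_2$ is equivalent to the standing invertibility of $\widehat\Sigma_{XX}+K_X$ is a nice addition the paper leaves implicit.
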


\begin{proof}
As usual, we do not penalize $Y$ part assuming $K_Y = 0$. Consider the eigendecomposition of matrix $K_X$, i.e. $K_X = UDU^\top$ with orthogonal $U\in\mathbb{R}^{p\times p}$ and diagonal $D\in\mathbb{R}^{p\times p}$. Since $K_X$ is supposed to be a positive semi-definite matrix, $D$ has non-negative diagonal elements. Applying transformation 
$\bwidetilde{\mathbf{X}} = \mathbf{X} U$ and $\widetilde \alpha = U^\top \alpha$, and using the same reasoning as in \ref{supp:rcca:kernel} and    \ref{supp:prcca:kernel} 
we obtain 
\begin{align*}
&\alpha^\top \frac {\mathbf{X}^\top\mathbf{Y}}{n}\beta =  \widetilde{\alpha}^\top \frac {\bwidetilde{\mathbf{X}}^\top\mathbf{Y}}{n}\beta\\
& \alpha^\top\left(\frac{\mathbf{X}^\top\mathbf{X}}{n}+K_X\right)\alpha =\widetilde{\alpha}^\top\left(\frac{\bwidetilde{\mathbf{X}}^\top\bwidetilde{\mathbf{X}}}n+D\right)\widetilde{\alpha}
\end{align*}
thus the equivalent modified correlation coefficient in the new basis is
\begin{align*}
    \rho(\widetilde\alpha, \beta; D) = \frac{\widetilde\alpha^\top\widehat\Sigma_{\widetilde{X}Y}\beta} {\sqrt{\widetilde\alpha^\top\left(\widehat\Sigma_{\widetilde X \widetilde X}+D\right)\widetilde \alpha}~\sqrt{\beta^\top\widehat\Sigma_{YY}\beta\vphantom{\left(\widehat\Sigma_{\widetilde X \widetilde X}+\lambda_1 D\right)}}}.
\end{align*}
Further, we decompose remaining diagonal matrix as $D =  S L S$ as follows. We have two cases:
\begin{enumerate}
    \item If all diagonal elements of $D$ are positive then we put $S =  D^{\frac12}$ and $L = I$.
    \item Suppose the first $p_1$ elements of $D$ are positive and the rest $p_2 = p - p_1$ elements are zero. Let $D = \begin{psmallmatrix}
    D_{11} & 0\\
    0 & 0
    \end{psmallmatrix}$, where $D_{11}\in\mathbb{R}^{p_1\times p_1}$ is the block containing all positive diagonal elements of matrix $D$.
    Then we can set
    \begin{center}
    $S = \begin{psmallmatrix}
    D_{11}^{\frac 12} & 0\\
    0 & I
    \end{psmallmatrix}$
    ~~and~~ $L = \begin{psmallmatrix}
    I_{p_1} & 0\\
    0 & 0
    \end{psmallmatrix}$ 
    \end{center}
\end{enumerate}
Note that, unlike $D$, matrix $S$ is non-singular, so the change of basis  
$\dbwidetilde{\mathbf{X}} = \bwidetilde{\mathbf{X}} S^{-1} $ and $\dswidetilde \alpha = S \widetilde \alpha$
is well-defined
and leads to the following equalities
\begin{align*}
&\widetilde{\alpha}^\top \frac {\bwidetilde{\mathbf{X}}^\top\mathbf{Y}}{n}\beta = \dswidetilde{\alpha}^\top \frac {\dbwidetilde{\mathbf{X}}^\top\mathbf{Y}}{n}\beta\\
&\widetilde{\alpha}^\top\left(\frac{\bwidetilde{\mathbf{X}}^\top\bwidetilde{\mathbf{X}}}n+D\right)\widetilde{\alpha} = \dswidetilde{\alpha}^\top\left(\frac{\dbwidetilde{\mathbf{X}}^\top\dbwidetilde{\mathbf{X}}}n+L\right)\dswidetilde{\alpha}.
\end{align*}
Therefore, the equivalent modified correlation coefficients 
$$\rho(\dswidetilde\alpha, \beta; L) = \frac{\dswidetilde\alpha^\top\widehat\Sigma_{\dwidetilde{X} \dwidetilde{X}}{\dswidetilde \alpha}}{\sqrt{\dswidetilde\alpha^\top\left(\widehat\Sigma_{\dwidetilde{X} \dwidetilde{X}} + L\right){\dswidetilde \alpha}}~\sqrt{\beta^\top\widehat\Sigma_{YY}\beta\vphantom{\left(\widehat\Sigma_{\dwidetilde{X} \dwidetilde{X}}\right)}}}.
$$
Is it easy to see that in the first case when $L = I$ 
the above correlation coefficient coincides with the RCCA correlation coefficient with $\lambda_1 = 1$. Alternatively, if $L =~\begin{psmallmatrix}
    I_{p_1} & 0\\
    0 & 0
    \end{psmallmatrix}$ 
it is equal to the PRCCA correlation coefficient with $\lambda_1 = 1$ and $p_2$ unpenalized coefficients.
Thus, we conclude that the General RCCA solution computed for $X$ and $Y$ can be found by means of either RCCA or PRCCA method applied to $\dwidetilde{X}$ and $Y$. The canonical variates stay the same regardless the basis as 
$\dbwidetilde{\mathbf{X}} \dswidetilde\alpha = \bwidetilde{\mathbf{X}}\widetilde\alpha = \mathbf{X}\alpha.$
The corresponding inverse transform for the coefficients is $\alpha = S U \dswidetilde\alpha$. 
\end{proof}

\section{Link between GRCCA and RCCA/PRCCA via the SVD of the penalty matrix}
\label{supp:grcca:solution}

Since GRCCA is just a special case of general RCCA, one can use previous lemma to map the GRCCA problem to either RCCA or PRCCA and, subsequesntly, find the canonical coefficients via the kernel trick. However, to find this transformation it is required to do an extra step: the eigendecomposition of the kernel matrices $K_X(\lambda_1, \mu_1)$ and $K_Y(\lambda_2, \mu_2)$. Although this can be infeasible in high dimensions for general penalty matrices it turns out that one can use the specific structure of the GRCCA penalty matrix to get around this eigendecomposition. 

We again assume that the regularization was imposed on the $X$ part only (i.e. $\lambda_2 =\mu_ 2 =~0$); however, it is not difficult to derive similar results for the general case. Recall that matrix $C_{m} = \frac{\mathbb{1}\mathbb{1}^\top}{m}\in\mathbb R^ {m\times m}$ has
\begin{itemize}
    \item unit eigenvalue with corresponding eigenvector $\frac{\mathbb{1}}{\sqrt{m}}\in\mathbb{R}^{m}$,
    \item zero eigenvalue with corresponding eigenspace $\left[\frac{\mathbb{1}}{\sqrt{m}}\right]^\perp\in~\mathbb{R}^{m\times(m-1)}$.
\end{itemize}
Here $[A]^\perp$ refers to the orthogonal complement of matrix $A$.
The resulting eigendecomposition is therefore 
\begin{center}
$C_{m} = U_{m}\begin{psmallmatrix}
1 & 0 \\
0 & 0
\end{psmallmatrix}U_{m}^\top$ ~with~ 
$U_{m} = \begin{psmallmatrix} \frac{\mathbb{1}}{\sqrt{m}} & \left[\frac{\mathbb{1}}{\sqrt{m}}\right]^\perp\end{psmallmatrix}
.$
\end{center}
It is easy to show the following eigendecomposition as well 
\begin{center}
$\lambda_1(I-C_{m}) + \mu_1 C_{m} = U_{m}D_{m}U_{m}^\top$ ~with~
$D_{m} = \begin{psmallmatrix}
\mu_1 & 0 \\
0 & \lambda_1 I_{m-1}
\end{psmallmatrix}.$
\end{center}
Thus the penalty matrix $K_X(\lambda_1,\mu_1)$ can be decomposed as
\begin{center}
$K_X(\lambda_1,\mu_1) = U D U^\top$ 
~with~ 
$U = U_{p_1}\oplus\ldots\oplus U_{p_K}$ 
~and~ 
$D = D_{p_1}\oplus\ldots\oplus D_{p_K}.$ 
\end{center}

Using the lemma from Section \ref{supp:genrcca:solution} we conclude that if $\lambda_1,\mu_1>0$ then the GRCCA problem can be solved via the RCCA approach. Alternatively, if $\lambda_1 = 0$ or $\mu_1 = 0$ it can be reduced to the PRCCA problem with $n-K$ and $K$ unpenalized coefficients, respectively. Hereafter we will assume $\lambda_1>0$, i.e. the presence of group homogeneity. 

Note that due to the specific structure of $K_X(\lambda_1,\mu_1)$ the eigendecomposition can be calculated block-wise. Moreover, the resulting computational cost is equal to the cost of computing the orthogonal complements $\left[\frac{\mathbb{1}}{\sqrt{p_1}}\right]^\perp,\ldots,\left[\frac{\mathbb{1}}{\sqrt{p_K}}\right]^\perp$, which can be efficiently found via, for example, Helmert contrasts (c.f. \texttt{contr.helmert()} function in R) and without computing any eigendecomposition.

\section{Link between GRCCA and RCCA/PRCCA via the feature matrix extension}

\label{supp:grcca:solution:alternative}

In Section \ref{supp:grcca:solution} we already proposed the way to solve GRCCA problem. If the number of groups $K$ is rather small then penalty matrix $K_X(\lambda_1,\mu_1)$ will consist of a few large blocks. In this case it would be quite expensive to compute the orthogonal complements and to further apply the linear transformation mapping General RCCA to RCCA/PRCCA. It turns out that there is an alternative linear transformation that leads to equivalent RCCA/PRCCA problem while being less expensive.

We need to establish some notation first. Suppose that matrix $\mathbf{X}$ is divided into blocks according to groups, i.e. $\mathbf{X} = \left(\mathbf{X}_1, \ldots, \mathbf{X}_K\right).$ Denote the column average matrix for group $k$ by $\mathbf{\bar{X}}_k = \mathbf{X}_k\frac{\mathbb{1}}{p_k}$ and 
consider an extended matrix $\bwidetilde{\mathbf{X}}$ that consists of (scaled) mean centered blocks and extra $K$ columns corresponding to (scaled) group means
$$\bwidetilde{\mathbf{X}}(a, b)  = \left(\sqrt{\frac{1}{a}}\left(\mathbf{X}_1 - \mathbf{\bar{X}}_1\right), \sqrt{\frac{p_1}{b}}\mathbf{\bar{X}}_1, \ldots, \sqrt{\frac{1}{a}}\left(\mathbf{X}_K - \mathbf{\bar{X}}_K\right), \sqrt{\frac{p_K}{b}}\mathbf{\bar{X}}_K\right)\in\mathbb{R}^{n\times(p+K)}.$$

\begin{lemma*}[GRCCA to RCCA/PRCCA] If $\mu_1>0$ the GRCCA problem for $\mathbf{X}$ and $\mathbf{Y}$ can be reduced to solving the RCCA problem for $\bwidetilde{\mathbf{X}}(\lambda_1,\mu_1)$ and $\mathbf{Y}$. If $\mu_1=0$, then GRCCA boils down to the PRCCA problem for $\bwidetilde{\mathbf{X}}(\lambda_1,1)$ and $\mathbf{Y}$ with $K$ unpenalized coefficients.
\end{lemma*}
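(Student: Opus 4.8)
The plan is to exhibit an explicit linear bijection between the GRCCA coefficient space $\mathbb{R}^{p}$ and a $p$-dimensional subspace of the enlarged coefficient space $\mathbb{R}^{p+K}$, show that under it the GRCCA objective and its penalized denominator are carried exactly to the RCCA (resp.\ PRCCA) ones for $\bwidetilde{\mathbf{X}}(\lambda_1,\mu_1)$, and then argue — just as in the kernel-trick lemmas of Sections~\ref{supp:rcca:kernel}--\ref{supp:prcca:kernel} — that the RCCA/PRCCA optimum is actually attained on that subspace. Throughout I keep the standing assumption $\lambda_1>0$ of Section~\ref{supp:grcca:solution} and, as usual, regularize only the $X$ side.

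First I would set up the dictionary. Write $\alpha=(\alpha_1^\top,\dots,\alpha_K^\top)^\top$, $\bar\alpha_k=\mathbb{1}^\top\alpha_k/p_k$, and define $\bwidetilde\alpha\in\mathbb{R}^{p+K}$ whose $k$-th pair of blocks is
$$
\bwidetilde\alpha_k^{(1)}=\sqrt{\lambda_1}\,\big(\alpha_k-\mathbb{1}\bar\alpha_k\big)\in\mathbb{R}^{p_k},\qquad \bwidetilde\alpha_k^{(2)}=\sqrt{\mu_1 p_k}\;\bar\alpha_k\in\mathbb{R}.
$$
Using $(I-C_{p_k})(\alpha_k-\mathbb{1}\bar\alpha_k)=\alpha_k-\mathbb{1}\bar\alpha_k$ and $p_k\mathbf{\bar X}_k=\mathbf{X}_k\mathbb{1}$, a short computation with the definition of $\bwidetilde{\mathbf{X}}(\lambda_1,\mu_1)$ gives the two identities $\bwidetilde{\mathbf{X}}(\lambda_1,\mu_1)\,\bwidetilde\alpha=\mathbf{X}\alpha$ and $\|\bwidetilde\alpha\|^2=\lambda_1\sum_k\|\alpha_k-\mathbb{1}\bar\alpha_k\|^2+\mu_1\sum_k p_k\bar\alpha_k^2=\alpha^\top K_X(\lambda_1,\mu_1)\alpha$. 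The first preserves the canonical variates, the numerator $\alpha^\top\widehat\Sigma_{XY}\beta=\tfrac1n(\mathbf X\alpha)^\top\mathbf Y\beta$ and the term $\tfrac1n\|\mathbf X\alpha\|^2$; combining both, $\alpha^\top\big(\widehat\Sigma_{XX}+K_X(\lambda_1,\mu_1)\big)\alpha=\bwidetilde\alpha^\top\big(\widehat\Sigma_{\bwidetilde X\bwidetilde X}+I\big)\bwidetilde\alpha$, hence
$$
\rho_{GRCCA}(\alpha,\beta;\lambda_1,\mu_1,0,0)=\rho_{RCCA}(\bwidetilde\alpha,\beta;1,0)
$$
for all $\alpha,\beta$; i.e.\ along the image of $\alpha\mapsto\bwidetilde\alpha$ the GRCCA problem coincides with the RCCA problem for $\bwidetilde{\mathbf{X}}(\lambda_1,\mu_1)$ with penalty parameter $1$.

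The map $\alpha\mapsto\bwidetilde\alpha$ is injective with image the subspace $\mathcal{S}=\{\bwidetilde\alpha:\mathbb{1}^\top\bwidetilde\alpha_k^{(1)}=0,\ k=1,\dots,K\}$, of dimension $\sum_k(p_k-1)+K=p$, with inverse $\alpha_k=\lambda_1^{-1/2}\bwidetilde\alpha_k^{(1)}+(\mu_1 p_k)^{-1/2}\,\mathbb{1}\,\bwidetilde\alpha_k^{(2)}$. The step I expect to be the crux is to show the RCCA problem over all of $\mathbb{R}^{p+K}$ is solved inside $\mathcal S$. This is the same phenomenon as the ``$\gamma_2=0$ at the optimum'' steps in Sections~\ref{supp:rcca:kernel}--\ref{supp:prcca:kernel}: since $(I-C_{p_k})\mathbb{1}=0$, the $k$-th group-block columns of $\bwidetilde{\mathbf{X}}(\lambda_1,\mu_1)$ annihilate $\mathbb{1}$, so $\mathcal S^\perp=\ker\bwidetilde{\mathbf{X}}(\lambda_1,\mu_1)$; replacing any $\bwidetilde\alpha$ by its $\mathcal S$-projection leaves $\bwidetilde{\mathbf{X}}\bwidetilde\alpha$ (hence the numerator and the $\widehat\Sigma_{\bwidetilde X\bwidetilde X}$ term) unchanged while not increasing $\|\bwidetilde\alpha\|^2$, so it cannot decrease $\rho_{RCCA}$. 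Thus the first canonical pair of the RCCA problem lies in $\mathcal S$ and maps to that of GRCCA; and since $\mathcal S^\perp$ is invisible to both $\widehat\Sigma_{\bwidetilde X\bwidetilde X}$ and $\widehat\Sigma_{\bwidetilde X Y}$, the matrix $\big(\widehat\Sigma_{\bwidetilde X\bwidetilde X}+I\big)^{-1/2}\widehat\Sigma_{\bwidetilde X Y}\widehat\Sigma_{YY}^{-1/2}$ has all left singular vectors with nonzero singular value in $\mathcal S$, which propagates the correspondence to every subsequent canonical pair (the zero-singular-value pairs give correlation $0$ and are irrelevant).

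Finally, for $\mu_1=0$ I would rerun the construction with the second scaling replaced by $1$: use $\bwidetilde{\mathbf{X}}(\lambda_1,1)$ and $\bwidetilde\alpha_k^{(2)}=\sqrt{p_k}\,\bar\alpha_k$. Then $\bwidetilde{\mathbf{X}}(\lambda_1,1)\bwidetilde\alpha=\mathbf{X}\alpha$ still holds, while the squared norm of only the $p$ first-block coordinates of $\bwidetilde\alpha$ equals $\lambda_1\sum_k\|\alpha_k-\mathbb{1}\bar\alpha_k\|^2=\alpha^\top K_X(\lambda_1,0)\alpha$ and the $K$ group-mean coordinates are untouched; after reordering the columns of $\bwidetilde{\mathbf{X}}(\lambda_1,1)$ so the $p$ centered columns precede the $K$ group-mean columns, this is exactly the PRCCA objective (\ref{prcca:rho}) with penalty parameter $1$ and $K$ unpenalized coefficients, and the same $\mathcal S$-projection argument applies (PRCCA identifiability here is the mild requirement $K<n$ with the $K$ columns $p_k^{-1/2}\mathbf{X}_k\mathbb{1}$ linearly independent). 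The two obstacles worth care are thus (i) pinning down the scalings in the $\alpha\leftrightarrow\bwidetilde\alpha$ dictionary so the GRCCA penalty lands on $\|\bwidetilde\alpha\|^2$ (resp.\ on the penalized block) with coefficient exactly $1$, and (ii) the routine-but-necessary verification that the $K$ phantom directions in $\mathcal S^\perp$ are never used at the optimum.
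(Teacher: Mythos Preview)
Your argument is correct and is, at heart, the same reduction the paper makes, but you execute it in a more explicit, coefficient-level fashion. The paper proves the case $K=1$ by writing the penalty matrix as $K_X(\lambda_1,\mu_1)=\widetilde U\widetilde D\widetilde U^\top$ with the \emph{rectangular} row-orthogonal matrix $\widetilde U=\bigl(I-C_{p},\ \mathbb{1}/\sqrt{p}\bigr)$, then observes that the General RCCA proof in Section~\ref{supp:genrcca:solution} goes through when $U$ merely has orthogonal rows; the transformation $\mathbf X\widetilde U\widetilde S^{-1}$ is exactly $\bwidetilde{\mathbf X}(\lambda_1,\mu_1)$ (resp.\ $\bwidetilde{\mathbf X}(\lambda_1,1)$). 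Your dictionary $\alpha\mapsto\bwidetilde\alpha$ is precisely $\bwidetilde\alpha=\widetilde S\,\widetilde U^\top\alpha$ written out block by block, and your two identities $\bwidetilde{\mathbf X}\bwidetilde\alpha=\mathbf X\alpha$ and $\|\bwidetilde\alpha\|^2=\alpha^\top K_X\alpha$ are the coefficient-side translation of that factorization. What your write-up adds over the paper's is that you make the ``optimum lands in the image subspace'' step fully explicit via the $\mathcal S$-projection argument, whereas the paper defers it to the remark that the General RCCA proof still works; conversely, the paper's packaging via $\widetilde U\widetilde D\widetilde U^\top$ is more modular and makes the connection to Section~\ref{supp:genrcca:solution} transparent. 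One small imprecision: you assert $\mathcal S^\perp=\ker\bwidetilde{\mathbf X}(\lambda_1,\mu_1)$, but in general only the inclusion $\mathcal S^\perp\subseteq\ker\bwidetilde{\mathbf X}$ holds (the kernel can be larger if, say, some $\mathbf X_k$ has repeated columns); fortunately the inclusion is all your projection argument needs.
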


\begin{proof}
Let us prove the statement for $K = 1$, one can easily extend the proof for arbitrary $K$.
For $K = 1$ the penalty matrix is
$$K_X(\lambda_1, \mu_1) = \lambda_1\left(I-\frac{\mathbb{1}\mathbb{1}^\top}{n}\right)+\mu_1\frac{\mathbb{1}\mathbb{1}^\top}{n}$$
which can be decomposed as
\begin{center}
$K_X(\lambda_1, \mu_1) = \widetilde U \widetilde D \widetilde U^\top$  ~with~ 
$\widetilde U = \left(I-\frac{\mathbb{1}\mathbb{1}^\top}{n}\frac{\mathbb{1}}{\sqrt{n}}
\right)$ ~and~ 
$\widetilde D=\begin{psmallmatrix}
\lambda_1 I_{n} & 0\\
0 & \mu_1\end{psmallmatrix}.$ 
\end{center}

This decomposition can be considered as an alternative to the eigendecomposition $K_X = UDU^\top$ discussed in the previous section. However, unlike matrix $U$ which was square and orthogonal, $\widetilde U$ is a rectangular $n\times(n+1)$ matrix with orthogonal rows, i.e. $\widetilde U \widetilde U^\top = I$. 

Further, similar to the previous section, we do decomposition $\widetilde D = \widetilde S \widetilde L \widetilde S$. We again have two cases here. 
If $\mu_1>0$ then we have
\begin{center}
$\widetilde S = \begin{psmallmatrix}
\sqrt{\lambda_1} I_{n} & 0\\
0 & \sqrt{\mu_1}\end{psmallmatrix}$ ~and~ $\widetilde L = I$ \end{center}
and the following relation is true
\begin{center}
$\mathbf{X}\widetilde U \widetilde S^{-1} = \left(\sqrt{\frac{1}{\lambda_1}}\left(\mathbf{X} - \mathbf{\bar{X}}\right), \sqrt{\frac{n}{\mu_1}}\mathbf{\bar{X}}\right) = \bwidetilde{\mathbf{X}}(\lambda_1, \mu_1)$.
\end{center}
If, on the contrary, $\mu_1=0$ then
\begin{center}
$\widetilde S = \begin{psmallmatrix}
\sqrt{\lambda_1} I_{n} & 0\\
0 & 1\end{psmallmatrix}$ ~and~ $\widetilde L = \begin{psmallmatrix}
I_{n} & 0\\
0 & 0\end{psmallmatrix}$
\end{center}
which leads us to  
\begin{center}
$\mathbf{X}\widetilde U \widetilde S^{-1} = \left(\sqrt{\frac{1}{\lambda_1}}\left(\mathbf{X} - \mathbf{\bar{X}}\right), \sqrt{n}\mathbf{\bar{X}}\right) = \bwidetilde{\mathbf{X}}(\lambda_1, 1)$.
\end{center}

Note that the proof in Section \ref{supp:genrcca:solution} requires matrix $\widetilde U$ to have orthogonal rows only. Thus, following this proof, one can conclude that the original GRCCA problem is equivalent to the RCCA problem solved for $\bwidetilde{\mathbf{X}}(\lambda_1, \mu_1)$ and $\mathbf{Y}$ if $\mu_1>0$. Alternatively, if $\mu_1=0$ then GRCCA boils down to solving the PRCCA problem for $\bwidetilde{\mathbf{X}}(\lambda_1, 1)$ and $\mathbf{Y}$, and one unpenalized coefficient. Again, such change of basis does not influence the canonical variates, whereas the canonical coefficients are transformed according to $\alpha = \widetilde U\widetilde S^{-1}  {\widetilde\alpha}$. 
\end{proof}

Note that the corresponding data transformation boils down to computing the group means and adjusting the feature matrix by group means. Although this approach can be considered as convenient alternative to the one suggested in Section \ref{supp:grcca:solution}, there is a trade-off. On the one hand, we reduce the cost by getting around the eigendecomposition (computing the orthogonal complements). On the other hand, we increase the feature matrix dimension from $p$ to $p+K$, which can be quite inefficient for large $K$.

\section{Neuroimaging analysis}

\label{supp:rde:preprocessing}

Details about the protocol and measures collected by HCP-DES are outlined in \cite{tozziHumanConnectomeProject2020}. Here, only the details relevant to this study are discussed.

\subsection{Neuroimaging acquisition details}

Images were acquired at the Stanford Center for Cognitive and Neurobiological Imaging (CNI) on a GE Discovery MR750 3 T scanner using a Nova Medical 32-channel head coil. Two spin-echo fieldmaps were acquired at the beginning of each session, one with a posterior-anterior phase encoding direction, the other with an anterior-posterior direction. All fMRI scans were conducted using a blipped-CAIPI simultaneous multislice “multiband” acquisition \citep{setsompopBlippedcontrolledAliasingParallel2012}.

\begin{enumerate}

\item Spin-echo fieldmaps: TE = 55.5 ms, TR = 6 s, FA = 90$\degree$, acquisition time = 18~s, field of view = 220.8 $\times$ 220.8 mm, 3D matrix size = 92 $\times$ 92 $\times$ 60, slice orientation = axial, angulation to anterior commissure - posterior commissure (AC-PC) line, phase encoding = AP and PA, receiver bandwidth = 250 kHz, readout duration = 49.14 ms, echo spacing~=~0.54~ms, voxel size = 2.4 mm isotropic.

\item Single-band calibration: TE = 30 ms, TR = 4.4 s, FA = 90$\degree$, acquisition time~=~13~s, field of view = 220.8 $\times$ 220.8 mm, 3D matrix size = 92 $\times$ 92~$\times$~60, slice orientation = axial, angulation to AC-PC line, phase encoding = PA, receiver bandwidth = 250 kHz, readout duration = 49.14 ms, echo spacing =~0.54~ms, number of volumes = 4, voxel size = 2.4 mm isotropic.

\item Multiband fMRI: TE = 30 ms, TR = 0.71 s, FA = 54$\degree$, acquisition time = 3:44 (Gambling task), field of view = 220.8 $\times$ 220.8 mm, 3D matrix size = 92 $\times$ 92 $\times$ 60, slice orientation = axial, angulation to AC-PC line, phase encoding = PA, receiver bandwidth = 250 kHz, readout duration = 49.14 ms, echo spacing = 0.54 ms, number of volumes for Gambling task = 316, multiband factor = 6, calibration volumes = 2, voxel size = 2.4 mm isotropic.

\item T1-weighted: TE = 3.548 ms, MPRAGE TR = 2.84 s, FA =~8$\degree$, acquisition time = 8:33, field of view = 256 $\times$ 256 mm, 3D matrix size = 320 $\times$ 320 $\times$~230, slice orientation = sagittal, angulation to AC-PC line, receiver bandwidth = 31.25 kHz, fat suppression = no, motion correction = PROMO, voxel size = 0.8 mm isotropic.

\item T2-weighted: TE = 74.4 ms, TR = 2.5 s, FA = 90$\degree$, acquisition time = 5:42, field of view = 240 $\times$ 240 mm, 3D matrix size = 320 $\times$ 320 $\times$ 216, slice orientation = sagittal, angulation to AC-PC line, receiver bandwidth = 125~kHz, fat suppression = no, motion correction = PROMO, voxel size = 0.8 mm isotropic.
\end{enumerate}

\subsection{Gambling task paradigm}

The HCP-DES adopts a version of the HCP Gambling task modified to allow comparison of small and large gain and loss outcomes \citep{somervilleLifespanHumanConnectome2018a, tozziHumanConnectomeProject2020}. A question mark is displayed on the screen and the participant must guess whether a number is greater than or less than five (and indicate their answer via button presses). If the participant identifies correctly, they win money, and if they guess incorrectly, they lose money. At the end of the task, 5 trials are randomly selected and summed together to determine the participant’s payment.

\subsection{Preprocessing}

Raw image files were converted to BIDS format and preprocessed using fMRIPrep  \citep{estebanFMRIPrepRobustPreprocessing2019}. Briefly, brain surfaces were reconstructed using recon-all (FreeSurfer 6.0.1, \cite{daleCorticalSurfaceBasedAnalysis1999}). Susceptibility distortion for fMRI data were corrected using the two echo-planar imaging (EPI) references with opposing phase-encoding directions  \citep{coxSoftwareToolsAnalysis1997}. Surface data was registered to fsaverage space and subcortical data to MNI space. These were then merged to grayordinate CIFTI files. Automated labeling of noise components following ICA decomposition was performed using AROMA  \citep{pruimICAAROMARobustICAbased2015a}. As final output, we down-sampled the preprocessed grey-ordinate functional CIFTI files to 32k FSLR space  \citep{glasserMinimalPreprocessingPipelines2013a}. Then, we applied a 4 mm full-width half-maximum smoothing constrained to the grey matter boundaries. For the quantification of brain responses to the task, the following conditions were convolved with a canonical hemodynamic response function as implemented in FSL  \citep{jenkinsonFSL2012}: high win, low win, high loss, low loss, high cue, low cue. The regressors obtained were entered in a design matrix together with the confound regressors generated by AROMA. A GLM analysis was then performed using the HCP pipelines  \citep{glasserMinimalPreprocessingPipelines2013a} and the contrast win $>$ loss was estimated for each participant (coefficients of high and low wins and losses were averaged). The z-scores corresponding to this contrast in each greyordinate were the features entered in the CCA analyses.

%%% References if bibTeX is used
%%%
%%% Please, do not specify any \bibliographystyle{} command!
%%%
%%% It is already specified in the smj.cls and its
%%% second specification here causes error.
%%% ------------------------------------------------------------

\bibliography{refs}

\end{document}